\newtheorem{Theo}{Theorem}
\newtheorem{proposition}{Proposition}
\newtheorem{lemma}[proposition]{Lemma}
\newtheorem*{rem*}{Remarks}
\newtheorem{rem}[proposition]{Remark}
\newcommand{\C}{{\mathbf C}}
\newcommand{\vv}{{\vec v}}
\newcommand{\Aa}{{\mathcal A}}
\newcommand{\Cc}{{\mathcal C}}
\newcommand{\Pp}{{\mathcal P}}
\newcommand{\Qq}{{\mathcal Q}}
\newcommand{\Ss}{{\mathcal S}}
\newcommand{\Uu}{{\mathcal U}}
\newcommand{\Vv}{{\mathcal V}}
\newcommand{\Ww}{{\mathcal W}}
\newcommand{\Zz}{{\mathcal Z}}
\newcommand{\CC}{{\mathbb C}}
\newcommand{\DD}{{\mathbb D}}
\newcommand{\EE}{{\mathbb E}}
\newcommand{\GG}{{\mathbb G}}
\newcommand{\HH}{{\mathbb H}}
\newcommand{\NN}{{\mathbb N}}
\newcommand{\PP}{{\mathbb P}}
\newcommand{\RR}{{\mathbb R}}
\newcommand{\Sb}{{\mathbb S}}
\newcommand{\TT}{{\mathbb T}}
\newcommand{\UU}{{\mathbb U}}
\newcommand{\ZZ}{{\mathbb Z}}
\newcommand{\aaa}{{\mathfrak a}}
\newcommand{\bbb}{{\mathfrak b}}
\newcommand{\ccc}{{\mathfrak c}}
\newcommand{\ddd}{{\mathfrak d}}
\newcommand{\one}{\mathbf{I}}
\newcommand{\nul}{\mathbf{0}}
\newcommand{\qtx}[1]{\quad\text{#1}\quad}
\newcommand{\GL}{{\rm GL}}
\newcommand{\pmat}[1]{\begin{pmatrix} #1  \end{pmatrix}}
\newcommand{\smat}[1]{\left( \begin{smallmatrix} #1  \end{smallmatrix} \right)}
\DeclareMathOperator{\Tr}{{\rm Tr}}
\DeclareMathOperator{\supp}{{\rm supp}}
\numberwithin{proposition}{section}
\numberwithin{equation}{section}
\title[On a.c. spectrum for 1-channel unitary operators]{On absolutely continuous spectrum for one-channel unitary operators}
\author{Olivier Bourget, Gregorio Moreno, Christian Sadel, Amal Taarabt}
\address{Facultad de Matem\'atcas, Pontificia Universidad Cat\'olica de Chile} 
\email[O. Bourget]{bourget@uc.cl}
\email[G. Moreno]{grmoreno@uc.cl}
\email[C. Sadel]{chsadel@uc.cl}  
\email[A. Taarabt]{amtaarabt@uc.cl}
\thanks{corresponding author: C. Sadel.}
\subjclass[2010]{47B15, 47B36, 82B44}
\begin{document}

\begin{abstract}
In this paper, we develop the radial transfer matrix formalism for unitary one-channel operators. This generalizes previous formalisms for CMV matrices and scattering zippers. 
We establish an analog of Carmona's formula and deduce criteria for absolutely continuous spectrum which we apply to random Hilbert Schmidt perturbations of periodic scattering zippers. 
\end{abstract}

\maketitle



\section{Setup and result}

In recent years there has been some interest in the use of  `reduced transfer matrices' for Schrödinger operators, where the hopping between `shells' or  `slices' of some graph are of fixed (but not full) rank \cite{DC,Sa-AT,Sa-OC,SB}. 
These matrices were introduced independently by Dwivedi-Chua \cite{DC} and Sadel \cite{Sa-AT, Sa-OC,Sa-Tr}.
In the strip case, `reduced' means for instance that the dimension of the transfer matrices is smaller than the usual `twice of the strip width', 
it is only `twice the rank'. In the most extreme case, where the connections are only of rank 1, we have $2 \times 2$ transfer matrices.
Such Hermitian operators were called `one-channel operators' in \cite{Sa-OC}.
This can occur even if the size of the  `shells' grow, leading to graphs with a radial growth corresponding to some higher dimensions.
The size reduction of the transfer matrices can help to analyze certain spectral aspects more easily, see
for instance  \cite{Sa-AT,Sa-OC}.

In the Hermitian case, the general formalism of transfer matrices started with one-channel operators and similar models and has been extended by Sadel to any locally finite hopping operator, even if the rank of the connections grow\footnote{including usual Schrödinger operators on $\ZZ^d$ for any dimension $d$} \cite{Sa-Tr}. 
The core of \cite{Sa-OC, Sa-Tr} is the generalization of a spectral averaging formula using transfer matrices originally found by Carmona  for discrete Jacobi and one-dimensional continuous Schrödinger operators \cite{Car, Cala}.  In the most general case of Hermitian locally finite hopping operators, one works with  `sets of rectangular transfer matrices'.
After generalizing Carmona's formula, following arguments by Last-Simon \cite{Lasi} one can obtain criteria for delocalization (absolutely continuous spectrum) which had been applied to various random models 
 \cite{BMT, GS, KiLS, Sa-AT, Sa-OC, Sa-Tr}.

There is a natural correspondence between the formalisms of CMV and Jacobi matrices. 
Both cases use transfer matrices and orthogonal polynomials to analyze their spectral theory \cite{Si-OPUC, Si-Szego}. This suggests that there might be some analogue of \cite{Sa-Tr} in the unitary set-up and as a first step we are investigating the analogue of the one-channel case.  In fact, the analogues of Carmona's formula and Last-Simon's criterion for CMV matrices are mentioned in \cite{Si-OPUC} where the results are based on orthogonal polynomials on the unit circle.
Here, we get more general versions of these theorems through an operator theoretic point of view.

Obtaining criteria for absolutely continuous spectrum and  applications to disordered system is another motivation for this work. The most important model for a disordered quantum system is the so called Anderson model in the Hermitian case, introduced in \cite{Anderson}, and there are unitary analogues \cite{HJS1,HJS2,HS,Kos}.
The so called Anderson localization is now well understood, see e.g. \cite{AM, AW-book, BK, B-etal, GK2, JX, KlLS, Li, LZ} and references therein for the Hermitian case, and \cite{ABJ1,ABJ2,BHJ,B-etal, JM,Si-CMV, Zhu} for the unitary case. However, proving the conjectured delocalization for small disorder in 3 and higher dimensional models remains a major open problem in the field and one needs to develop new techniques.
Since the criteria for delocalization coming from Carmona's formula and its generalizations have been used for various interesting models in the Hermitian case \cite{BMT, GS, KiLS, Sa-AT, Sa-OC, Sa-Tr}, it is worthwhile to further generalize it in the unitary world.
We aim to obtain a full unitary analogue of the radial transfer matrix set formalism in \cite{Sa-Tr} for finite hopping unitary operators, including higher dimensional quantum walks in $\ZZ^d$ and Chalker-Coddington models.
Note that other techniques have been used to get delocalization for random operators on tree graphs 
in the Hermitian case \cite{ASW, AW, FHS2, KLW, Kl, KS, Sa-FC, Sa-Fib} and the unitary case \cite{HJ}, and for unitary network models
\cite{ABJ1, ABJ2, ABJ3, ABJ4}.

\vspace{.2cm}

Our approach is based on the formalism for scattering zippers \cite{MSb} which is the most general
 unitary analogue to Jacobi and block-Jacobi operators. It includes CMV matrices, block CMV matrices and (quasi-) one dimensional quantum walks on strips. 
 Even though the models considered in \cite{MSb} correspond to block Jacobi operators with full rank connection, 
the construction of the transfer matrices contains formulas which resemble the transfer matrices in \cite{DC,Sa-OC, Sa-Tr}. 
This makes it an excellent starting point. 

We will in general consider certain unitary operators $\Uu=\Ww\Vv$ on $\ell^2(\GG)$ where $\GG$ is considered as some countable set (or graph).
$\Ww$ and $\Vv$ are unitary operators which are direct sums of finite dimensional unitary matrices, but on different partitions of $\GG$. Using the partitions for the direct sum of $\Vv$, we may consider $\Vv$ as some coin matrix (analogue of `potential') and $\Ww$ as the operator giving a `walk' among the sets for the partition. This way, we may interpret $\Uu$ as a quantum walk. The `one-channel' structure will be implemented by a very particular structure on $\Ww$.
Our formalism includes  one-channel scattering zippers (such as one-dimensional quantum walks and CMV matrices), certain quantum walks on carbon chains and certain one-channel stroboscopic models in higher dimension.

We obtain an analogue of Carmona's formula (cf. Theorem~\ref{th-main}) and of the Last-Simon criterion for absolutely continuous spectrum (cf. Theorem~\ref{th-main2}). This criterion can then be applied to unitary one-channel models with a random decaying $\ell^2$ perturbation (cf. Theorem~\ref{th-main3}). Similar results were proved for Jacobi operators \cite{KiLS}, block-Jacobi operators
\cite{FHS3, GS} and discrete Dirac operators \cite{BMT}.

\vspace{.2cm}

Let us give an overview of the paper.
First, in Section~\ref{sub-oc} we introduce and define the one-channel unitary operators. Then, in Section~\ref{sub-tr} we introduce the transfer matrices and  Section~\ref{sub-sp-av}  states the analogue of Carmona's formula and of Last-Simon's criterion for absolutely continuous spectrum  is (Theo\-rems~\ref{th-main} and \ref{th-main2}). In Section~\ref{sub-random} we state the result on a.c. spectrum for decaying random perturbations of a periodic one-channel scattering zipper (Theorem~\ref{th-main3}).\\
In Section~\ref{sec-examples} we give several examples of unitary one-channel operators. First, we show how ordinary one-dimensional quantum walks can be brought into this framework, then we define generalized one-channel quantum walks like quantum walks on carbon chains, and finally, we give an example of some one-channel stroboscopic dynamics on $\ell^2(\ZZ^2)$. 
Let us note that the latter two examples are not covered by the CMV or scattering zipper formalism.
Section~\ref{sec-tr} establishes the connection between transfer matrix and resolvent which is used to prove Theorem~\ref{th-main} and \ref{th-main2} in Section~\ref{sec-sp-av}. Finally, in Section~\ref{sec-perper} we prove Theorem~\ref{th-main3}.

\subsection{One-channel unitary operators\label{sub-oc}}

First, we consider a partition of $\GG$ into countably many finite sets $\Sb_n$ which we will call `shells', which have at least two points,
$$
\GG\,=\,\bigsqcup_{n=0}^\infty \Sb_n,\qquad 2\leq |\Sb_n| < \infty,
$$
 then we can write
$$
\ell^2(\GG)\,=\,\bigoplus_{n=0}^\infty \ell^2(\Sb_n)=\bigoplus_{n=0}^\infty \CC^{\Sb_n},
$$
(as an orthogonal Hilbert-space sum) and similarly
$$
\Psi=\bigoplus_{n=0}^\infty \Psi_n\,\in\,\ell^2(\GG) \qtx{where} \Psi_n \in \ell^2(\Sb_n)=\CC^{\Sb_n}.
$$
Let us also introduce the finite sub-graphs from level $0$ to $N$ as
$$
\GG_{N} = \bigsqcup_{n=0}^N \Sb_n,
$$
and adopt similar notations as $\psi=\bigoplus_{n=0}^N \psi_n \in \ell^2(\GG_N)=\CC^{\GG_N}$
with $\psi_n \in \CC^{\Sb_n}$.

Mostly,  the direct sum has to be understood as an Hilbert-space orthogonal sum.
However, the operators we consider are of finite hopping type and extend naturally to the set of all functions from $\GG$ to $\CC$,  and we may also use the notation above
for $\Psi \in \CC^\GG$ with $\Psi_n=\Psi|_{\Sb_n}$ being the restriction of $\Psi$ to $\Sb_n$.
In physics literature the space $\CC^\GG$ maybe referred as `generalized states' and solutions to $U\Psi=z\Psi$ for $\Psi\in \CC^\GG$ as `generalized eigenfunctions'.
Furthermore,  for $n\leq N$, an element $\varphi\in\CC^{\Sb_n}$ can also be considered as an element of $\CC^{\GG_{N}}$ or $\ell^2(\GG)$, identifying $\varphi$ with $\varphi\oplus \bigoplus_{k\neq n} \nul$.
Using some adequate basis we will identify $\CC^{\Sb_n}$ with $\CC^{|\Sb_n|}$ later on,  but we prefer the notation $\CC^{\Sb_n}$ to distinguish the spaces for different shells with possibly same number of elements.
Similarly, by notations like $\CC^{\Sb_n \times l}$ we understand the set of linear maps from $\CC^l$ to $\CC^{S_n}$, which (given a basis of $\CC^{\Sb_n}$) can be identified with 
the set of $|\Sb_n| \times l$ matrices, or with the set of $l$-tuples of vectors in $\CC^{\Sb_n}$.
First, we define the operator $\Vv$ by
\begin{equation}\label{eq-V}
\Vv\,=\,\bigoplus_{n=0}^\infty V_n \qtx{where} V_n\,=\Pp_{\Sb_n} \Vv \Pp_{\Sb_n}\, \in\,\UU(\Sb_n)\;,
\end{equation}
where $\Pp_{\Sb_n}$ is the orthogonal projection of $\ell^2(\GG)$ onto $\CC^{\Sb_n}=\ell^2(\Sb_n)$.
Here, we use the standard physics scalar product $\langle \varphi, \varphi'\rangle = \sum_x \bar \varphi(x) \varphi'(x)$, and, moreover,
$\UU(\Sb_n)$ denotes the unitary matrices on $\CC^{\Sb_n}$. Thus, $\Vv$ is unitary.

In order to connect the shells through one channel, we assign a `forward' and `backward' mode $e_{(n,+)},\,e_{(n,-)} \in \CC^{\Sb_n}$ which are orthonormal vectors, that means
$$
\langle e_{(n,\star)}\,,\,e_{(n,\diamond)} \rangle \,=\,e_{(n,\star)}^* e_{(n,\diamond)}\,=\,\delta_{\star, \diamond} \qtx{where} \star,\diamond\, \in\{+,-\}\;.
$$
Furthermore let 
$$
Q_n= (e_{(n,-)},e_{(n,+)}) \,\in\,\CC^{\Sb_n \times 2},\quad P_n=\one_{\Sb_n}-Q_n Q_n^*,
$$
$Q_n Q_n^*$ is the orthogonal projection onto ${\rm span}(e_{(n,-)},e_{(n,+)})$, 
$P_n$ the orthogonal projection on the orthogonal complement within $\CC^{\Sb_n}$ and $\one_{\Sb_n}$ is the identity operator on $\CC^{\Sb_n}$.
Then we define the operators $\Ww^{(u)}$ by
\begin{equation} \label{eq-W}
\Ww^{(u)}\,=\,  u \,  e_{(n,-)} e_{(n,-)}^* + P_{0} +\sum_{n=1}^\infty \left( (e_{(n-1,+)},  \; e_{(n,-)}) W_n \pmat{e_{(n-1,+)}^* \\ e_{(n,-)}^* } + P_n \right).
\end{equation}

Here, $ u \in \UU(1)=\partial \DD$ is some sort of `left boundary condition',  and $W_n \in \UU(2)$, where $\UU(k)$ denotes the unitary operators on $\CC^k$.
In the notation for $\Ww$ above we interpret the vectors $e_{(n,\pm)}$ in $\CC^{\Sb_n}$ as column vectors in $\ell^2(\GG)$ and in the sense of matrices as maps from $\CC$ to $\CC^{\Sb_n}\subset \ell^2(\GG)$, and $e_{(n,\pm)}^*$ as maps from $\CC^{\Sb_n}$ or $\ell^2(\GG)$ to $\CC$.
Also, $P_n$ is naturally interpreted as an operator on $\ell^2(\GG)$ identifying it with $P_n\oplus \bigoplus_{k\neq n} \nul$.
Another way of representing $\Ww$ is by using the projections
$$
\Qq_n=|e_{(n-1,+)}\rangle\, \langle e_{(n-1,+)}|\,+\,|e_{(n,-)}\rangle \,\langle e_{(n,-)}|\qtx{for} n\in\ZZ_+^*, \quad  
\Qq_0=|e_{(0,-)}\rangle\, \langle e_{(0,-)}|
$$
$$\Qq=\sum_{n\in\ZZ_+} \Qq_n $$
then
$$
\Ww^{(u)}\,=\,\Qq^\perp\,\oplus\, u\,\oplus\, \bigoplus_{n\in\ZZ_+^*} W_n, $$
\qtx{where} 
$$
W_n\,=\, \Qq_n \Ww^{(u)} \Qq_n\,\in\,\UU(2) \qtx{or $n\geq 1$ and}
u=\Qq_0 \Ww^{(u)} \Qq_0 \,\in\,\UU(1)\,.
$$
One may consider $u=1$ as the `natural boundary condition'. 
\newline
Using an orthonormal basis of $\CC^{\Sb_n}=\ell^2(\Sb_n)$ where $e_{(n,-)}$ is the first, and $e_{(n,+)}$ the last vector, one may write
$$
\Psi_n=\pmat{\Psi_{(n,-)} \\ \Psi_{(n,0)} \\ \Psi_{(n,+)}} \in \CC^{|\Sb_n|}\qtx{where} \Psi_{(n,\pm)}=e_{(n,\pm)}^* \Psi_n\in\CC\,,\quad 
\Psi_{(n,0)}\in \CC^{|\Sb_n|-2}\;.
$$
Then, using these bases to form an orthonormal basis of $\ell^2(\GG)$ we can represent $\Vv$ and $\Ww^{(u)}$ as semi-infinite diagonal block matrices of the following form
$$
\Vv=\pmat{V_0 \\ & V_1 \\ & & V_2 \\ & & & \ddots}\,; \quad \Ww^{(u)}=\pmat{u & \\ & \one_{|\Sb_0|-2} \\ & & W_1 \\ & & & \one_{|\Sb_1|-2} \\ & & & & W_2 \\
& & & & & \one_{|\Sb_2|-2}\\ & & & & & & \ddots}
$$
where $\one_m$ is the identity matrix of size $m \times m$.
The blocks of the matrix $\Vv$ have sizes $|\Sb_n|$, with $n=0,1,2,\ldots$, where the blocks of $\Ww$ are more refined and of sizes $1, |\Sb_0|-2,2,|\Sb_1|-2, 2,|\Sb_2|-2,2 \ldots$ and so on. The $2 \times 2$ matrix $W_n$ connects the blocks $\Sb_{n-1}$ to $\Sb_n$.
In this representation it is easy to realize that $\Vv$ and $\Ww^{(u)}$ are unitary operators.

Now let
\begin{equation} \label{eq-U}
 \Uu^{(u)}\,=\,\Ww^{(u)} \Vv  \qtx{and} 
\tilde \Uu^{(u)}\,=\,\Vv\,\Ww^{(u)}\;,
\end{equation}
then we call  $(\Uu^{(u)},\,\tilde \Uu^{(u)})$ 
a {\it conjugated pair of one-channel unitary operators}.
Note that $$\tilde \Uu^{(u)} = (\Ww^{(u)})^* \,\Uu\,\Ww\,=\,(\Ww^{(u)})^{-1}\,\Uu\,\Ww^{(u)}\;.$$
If we omit the upper index $(u)$ we consider $u=1$.

Basically, the sequence $W_n$ connecting the vectors $e_{(n,+)},e_{(n+1,-)}$ defines a `channel' through which waves can travel towards infinity (across the shells) when considering the dynamics $\Uu^k, \tilde \Uu^k$, $k\in \NN$, $k \to \infty$.

\vspace{.2cm}

Furthermore, we introduce the restrictions to the finite graph $\GG_N$ with boundary conditions $u,v \in \UU(1)$ by
\begin{equation}\label{eq-U-G_N}
\Uu^{(u,v)}_N=\Ww^{(u,v)}_N\, \Vv_N\;, \quad
\tilde U^{(u,v)}_N\,=\,\Vv_N \Ww^{(u,v)}_N,
\end{equation}
where
\begin{equation*}
 \Ww^{(u,v)}_N\,=\,
\pmat{u & \\ & \one_{|\Sb_0|-2} \\ & & W_1 \\ & & & \ddots  \\ & & & & W_N \\
& & & & & \one_{|\Sb_N|-2} \\ & & & & & & v} \, \in \, \UU(\GG_N)\;
\end{equation*}
and
\begin{equation*}
\Vv_N\,=\,\pmat{V_0 \\ & V_1 \\ & & \ddots \\ & & & V_N}\, \in \, \UU(\GG_N)\;.
\end{equation*}
One may write a one-dimensional random quantum walk in this framework as the operator $\Uu$ using shells of size $|\Sb_n|=2$, (so $P_n=\nul$) 
where $W_n$ are transpositions, $W_n=\pmat{0&1\\1&0}$. The definitions of the transfer matrices are guided by considering this case.
In fact, one may think of $\Uu$ as a one-channel quantum walk and  $V_n$ as `coin' matrices in $\Sb_n$. Then the `walk' happens according to the matrices $W_n$ across the channel $(e_{(n,+)},e_{(n+1,-)})_{n \in \NN_0}$.
Furthermore, in the case $|\Sb_n|=2$ the operator $\tilde \Uu$ corresponds exactly to the scattering zippers with size $L=2$ in \cite{MSb}, which include
 CMV matrices and one-dimensional quantum walks. Thus, our framework can be seen as an extension of both.

\begin{rem}
\mbox{}
\begin{enumerate}[{\rm (i)}]
\item Many calculations in this article in fact work for $l$-channel unitary operators.  In this case, $e_{(n,+)}, e_{(n,-)} \in \CC^{S_n \times l}$ are orthogonal partial isometries,
meaning $e_{(n,\star)}^* e_{(n,\diamond)}=\delta_{\star,\diamond} \,\one_l$, implying that ${\rm Ran}\, e_{(n,+)}^*$ and ${\rm Ran}\, e_{(n,-)}^*$ are orthogonal $l$-dimensional subspaces of $\CC^{S_n}$.  Then, the boundary conditions $u,v$, as well as the $W_n$ will  be $2l \times 2l$ unitary matrices. These type of operators do include the scattering zippers with size $L=2l$.
In fact,  the notations for the functions $\varphi_\sharp$ and $\varphi_\flat$ below will be written in a form in which they generalize to the $l$-channel case.
Analogues of the theorems in this article for the $l$ channel case will be considered elsewhere.
\item Of course one can easily extend the definitions to `doubly' infinite one-channel operators using a direct sum over the whole integers, $\bigoplus\limits_{n \in \ZZ} \ell^2(\Sb_n)$. But such operators can be treated as a finite rank perturbation of a direct sum of two one-channel operators as defined above and for simplicity we omit a detailed discussion here. 
\end{enumerate}
\end{rem}

\subsection{Transfer matrices\label{sub-tr}} 

Let us start with the following proposition which also defines the maps $\varphi_\sharp$ and $\varphi_\flat$ that will be useful for describing the transfer matrices.

\begin{proposition}\label{prop-varphi}
For a matrix 
$$
M=\pmat{\alpha & \beta \\ \gamma & \delta}\,\in\,\CC^{2 \times 2},
$$
where $\beta \neq 0$, define
$$
\varphi_\sharp(M)\,=\,\pmat{\beta^{-1} & -\beta^{-1}\alpha \\ \delta \beta^{-1} & \gamma-\delta \beta^{-1} \alpha} \qtx{and}
\varphi_\flat(M)\,=\,\pmat{\gamma-\delta\beta^{-1}\alpha & \delta \beta^{-1} \\ -\beta^{-1} \alpha & \beta^{-1}}\;.
$$
Then, 
$$
\pmat{\Psi_- \\ \Psi_+} \,=\, M \pmat{\Phi_- \\ \Phi_+}
\quad\Leftrightarrow\quad
\pmat{\Phi_+ \\ \Psi_+}\,=\,\varphi_\sharp(M) \pmat{\Psi_- \\ \Phi_-}\quad \Leftrightarrow\quad
\pmat{\Psi_+ \\ \Phi_+}\,=\,\varphi_\flat(M) \pmat{\Phi_- \\ \Psi_-}\;.
$$
We note $\varphi_\sharp(M)=\varphi_\flat(M^{-1})$.
Moreover, if $M\in \UU(2)$, then $\varphi_\sharp(M), \varphi_\flat(M) \in \UU(1,1)$, where
$$
\UU(1,1)\,=\,\left\{T\in \CC^{2 \times 2}\,:\, T^*\pmat{1 & 0\\ 0 & -1} T=\pmat{1 & 0\\ 0 & -1} \right\}.
$$
\end{proposition}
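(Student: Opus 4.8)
The plan is to verify the three equivalences by straightforward linear algebra, and then to check the $\UU(1,1)$ claim either by a direct computation or, more elegantly, by exploiting the first part together with the relation $\varphi_\sharp(M)=\varphi_\flat(M^{-1})$. First I would establish the equivalence $\pmat{\Psi_- \\ \Psi_+} = M \pmat{\Phi_- \\ \Phi_+} \Leftrightarrow \pmat{\Phi_+ \\ \Psi_+}=\varphi_\sharp(M)\pmat{\Psi_- \\ \Phi_-}$. Writing the first relation out as the two scalar equations $\Psi_-=\alpha\Phi_-+\beta\Phi_+$ and $\Psi_+=\gamma\Phi_-+\delta\Phi_+$, the hypothesis $\beta\neq 0$ lets me solve the first for $\Phi_+=\beta^{-1}\Psi_- - \beta^{-1}\alpha\Phi_-$; substituting this into the second gives $\Psi_+=\gamma\Phi_- + \delta(\beta^{-1}\Psi_- - \beta^{-1}\alpha\Phi_-)=\delta\beta^{-1}\Psi_- + (\gamma-\delta\beta^{-1}\alpha)\Phi_-$. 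Reading off the coefficients reproduces exactly the matrix $\varphi_\sharp(M)$, and the steps are all reversible (they amount to an invertible change of which two of the four scalars are taken as ``inputs''), so this is a genuine equivalence.

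For the second equivalence, $\pmat{\Psi_+ \\ \Phi_+}=\varphi_\flat(M)\pmat{\Phi_- \\ \Psi_-}$, I would simply reorder the two rows and the two columns of the system just derived: from $\Phi_+=-\beta^{-1}\alpha\Phi_- + \beta^{-1}\Psi_-$ and $\Psi_+=(\gamma-\delta\beta^{-1}\alpha)\Phi_- + \delta\beta^{-1}\Psi_-$, swapping the order of the output pair $(\Phi_+,\Psi_+)\mapsto(\Psi_+,\Phi_+)$ and of the input pair $(\Psi_-,\Phi_-)\mapsto(\Phi_-,\Psi_-)$ produces precisely $\varphi_\flat(M)$ as written. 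The identity $\varphi_\sharp(M)=\varphi_\flat(M^{-1})$ then follows either by inspecting the formulas, or conceptually: $\varphi_\flat(M^{-1})$ is the map sending $(\Phi_-,\Psi_-)$ to $(\Psi_+,\Phi_+)$ when $\pmat{\Phi_-\\\Phi_+}=M^{-1}\pmat{\Psi_-\\\Psi_+}$, i.e. when $\pmat{\Psi_-\\\Psi_+}=M\pmat{\Phi_-\\\Phi_+}$ — the same relation as before, so the two maps coincide after matching the reorderings.

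For the final assertion, suppose $M\in\UU(2)$. The cleanest route is to note that the relation $\pmat{\Psi_-\\\Psi_+}=M\pmat{\Phi_-\\\Phi_+}$ preserves the Euclidean norm, $|\Psi_-|^2+|\Psi_+|^2=|\Phi_-|^2+|\Phi_+|^2$, which rearranges to $|\Psi_+|^2-|\Phi_-|^2 = |\Phi_+|^2-|\Psi_-|^2$; since $\varphi_\sharp(M)$ sends $(\Psi_-,\Phi_-)$ to $(\Phi_+,\Psi_+)$, this says exactly that $\varphi_\sharp(M)$ preserves the indefinite form $|x_1|^2-|x_2|^2$ associated to $\diag(1,-1)$, i.e. $\varphi_\sharp(M)^*\,\diag(1,-1)\,\varphi_\sharp(M)=\diag(1,-1)$, hence $\varphi_\sharp(M)\in\UU(1,1)$. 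The same argument with $\varphi_\flat$ (or the identity $\varphi_\flat(M)=\varphi_\sharp(M^{-1})$ together with the fact that $\UU(1,1)$ is a group) gives $\varphi_\flat(M)\in\UU(1,1)$.

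The computations here are all routine; the only point requiring a little care is making sure the row/column reorderings are bookkept consistently across the three displayed forms, and that the ``preservation of the indefinite form'' argument is stated for the correct pairing of inputs and outputs — that is where a sign or an index swap could slip in, so I would double-check it by also verifying $\varphi_\sharp(M)^*\diag(1,-1)\varphi_\sharp(M)=\diag(1,-1)$ directly from the explicit $2\times2$ entries in at least one case as a sanity check.
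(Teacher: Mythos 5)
Your treatment of the three equivalences is the same as the paper's (the paper proves the more general $L$-channel version as Proposition~\ref{prop-A0} in the appendix): solve the top row of $\Psi = M\Phi$ for $\Phi_+$ using $\beta\neq 0$, substitute into the bottom row, and obtain $\varphi_\flat$ from $\varphi_\sharp$ by permuting the input and output pairs. The genuine divergence is in the proof that $\varphi_\sharp(M), \varphi_\flat(M)\in\UU(1,1)$. The paper verifies $\varphi_\sharp(M)^*\diag(\one,-\one)\varphi_\sharp(M)=\diag(\one,-\one)$ by a direct block computation using the unitarity identities $\alpha^*\alpha+\gamma^*\gamma=\one$, $\delta^*\delta+\beta^*\beta=\one$, $\alpha^*\beta+\gamma^*\delta=0$, and then carries $\varphi_\flat$ along via conjugation by $\smat{0&1\\1&0}$. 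You instead push the Euclidean isometry $|\Psi_-|^2+|\Psi_+|^2 = |\Phi_-|^2+|\Phi_+|^2$ through the input/output reshuffle to read off preservation of the indefinite form, a cleaner argument that also explains \emph{why} the claim holds. To be fully airtight one should note (i) the reparametrization $(\Phi_-,\Phi_+)\mapsto(\Psi_-,\Phi_-)$ is onto precisely because $\beta\neq 0$, so the scalar identity holds on all inputs, and (ii) scalar preservation of the Hermitian form implies the matrix identity $T^*GT=G$ by polarization (i.e.\ $T^*GT-G$ is Hermitian with vanishing quadratic form); both are standard and you gesture at them implicitly. One small point you anticipated yourself: the rearrangement you display, $|\Psi_+|^2-|\Phi_-|^2 = |\Phi_+|^2-|\Psi_-|^2$, is equivalent to but not literally the preservation condition for the pairing $(\Psi_-,\Phi_-)\mapsto(\Phi_+,\Psi_+)$, which reads $|\Phi_+|^2-|\Psi_+|^2 = |\Psi_-|^2-|\Phi_-|^2$. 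Same equation, but worth writing the matched form, since this is exactly the index-bookkeeping you flagged for a sanity check.
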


The proof of this proposition (in fact for the more general $L$-channel case) is in the appendix (cf. Proposition~\ref{prop-A0}).
We note that for the map $\varphi_\flat$ this proposition coincides with \cite[Theorem~6]{Sa-Rel} and \cite[Proposition~2]{MSb} and it gives the relation between scattering and transfer matrices 
as in the scattering theory of electronic conduction as developed by Landauer, Imry and Büttiker \cite{Bu-1, Bu-2, Im, La-1,La-2}.
The map $\varphi_\flat$ also appears in the construction of transfer matrices for scattering zippers as in \cite{MSb},  whereas the map $\varphi_\sharp$ is used in the formulas for the reduced transfer matrices in the Hermitian case  \cite{DC, Sa-OC, SB} without giving it a symbol.
We also note that $\varphi_\flat$ gives the relation between scattering matrix and transfer matrix for Jacobi operators in an adequate bases as denoted in \cite[Theorem~6 \& Appendix]{Sa-Rel}.

In order to get to the transfer matrices we first re-write the eigenvalue equation as a system of equations, similar as in \cite{MSb}.
\begin{proposition}\label{prop-eig-equation}
The following set of equations are equivalent (in fact for solutions $\Psi, \Phi \in \CC^\GG$)
\begin{enumerate}[{\rm (i)}]
\item $\Uu^{(u)} \Psi = z \Psi\qtx{{\rm and}} \Ww^{(u)} \Phi\,=\,\Psi$.
\item $\Vv \Psi = z \Phi \qtx{{\rm and}} \Ww^{(u)} \Phi\,=\,\Psi$.
\item $\tilde \Uu^{(u)} \Phi = z \Phi\,\qtx{{\rm and}} \Ww^{(u)} \Phi\,=\,\Psi$.
\end{enumerate}
\end{proposition}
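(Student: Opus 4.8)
The plan is to exploit the fact that all three systems share the common second equation $\Ww^{(u)}\Phi=\Psi$, so that the only real content is the equivalence of the three ``eigenvalue'' relations once this substitution is available in both directions. First I would record the elementary but crucial observation that $\Ww^{(u)}$ and $\Vv$, being blockwise direct sums of finite-dimensional unitary matrices, are bijections of $\CC^\GG$ onto itself, with two-sided inverses obtained by applying the (blockwise) adjoints; this is what licenses cancelling $\Ww^{(u)}$ at the level of generalized states in $\CC^\GG$, and not merely in $\ell^2(\GG)$. I would also note that $\Uu^{(u)}=\Ww^{(u)}\Vv$ and $\tilde\Uu^{(u)}=\Vv\Ww^{(u)}$ extend to $\CC^\GG$ as the compositions of these blockwise maps, which is immediate from their finite-hopping structure.

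Next, for (i)$\Leftrightarrow$(ii): assuming $\Ww^{(u)}\Phi=\Psi$, write $\Uu^{(u)}\Psi=\Ww^{(u)}\Vv\Psi$ and $z\Psi=z\,\Ww^{(u)}\Phi=\Ww^{(u)}(z\Phi)$, so that $\Uu^{(u)}\Psi=z\Psi$ is equivalent to $\Ww^{(u)}(\Vv\Psi-z\Phi)=0$, hence, by injectivity of $\Ww^{(u)}$ on $\CC^\GG$, to $\Vv\Psi=z\Phi$. Reading the same chain backwards gives the converse implication.

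For (ii)$\Leftrightarrow$(iii): again using $\Ww^{(u)}\Phi=\Psi$, compute $\tilde\Uu^{(u)}\Phi=\Vv\Ww^{(u)}\Phi=\Vv\Psi$; therefore $\tilde\Uu^{(u)}\Phi=z\Phi$ holds if and only if $\Vv\Psi=z\Phi$, which is exactly the first equation of (ii). Combining the two equivalences closes the loop and proves the proposition.

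There is essentially no hard step here; the only point requiring a word of care is that all manipulations must remain valid for $\Psi,\Phi\in\CC^\GG$ and not just in $\ell^2(\GG)$, which is exactly what the bijectivity of the block-diagonal operators $\Ww^{(u)}$ and $\Vv$ on $\CC^\GG$ takes care of. Beyond that, the argument is pure bookkeeping with the definitions $\Uu^{(u)}=\Ww^{(u)}\Vv$ and $\tilde\Uu^{(u)}=\Vv\Ww^{(u)}$.
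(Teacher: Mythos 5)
Your proof is correct and follows essentially the same route as the paper: both note that the finite-hopping structure makes $\Ww^{(u)}$ a bijection of $\CC^\GG$ with inverse given by the blockwise adjoint, and then read off the chain of equivalences $\Uu^{(u)}\Psi=z\Psi \Leftrightarrow \Vv\Psi=z\Phi \Leftrightarrow \tilde\Uu^{(u)}\Phi=z\Phi$ by substituting $\Psi=\Ww^{(u)}\Phi$ and cancelling $\Ww^{(u)}$. The only cosmetic difference is that you split the chain into two separate biconditionals while the paper writes it as one line of equivalences.
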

\begin{proof} For the proof we will omit the index $u$.
Using the fact that $\Ww$ is a finite hopping operator, we get natural extensions to $\CC^\GG$ and these extensions satisfy
$\Ww^* \Ww=\one$ with $\one$ being the identity operator on $\CC^\GG$. In particular, $\Ww$ is invertible as an operator on $\CC^\GG$.
So if $\Ww \Phi=\Psi$ or $\Ww^{-1} \Psi=\Phi$ we find
$$
\Uu\Psi=z\Psi \;\Leftrightarrow\; \Ww^{-1}\Uu\Psi=\Ww^{-1} z \Psi\; \Leftrightarrow \; \Vv \Psi=z\Phi\;\Leftrightarrow\;
\Vv\Ww\Phi\,=\,z \Phi \;\Leftrightarrow\;\tilde \Uu \Phi= z \Phi.
$$
\end{proof}

We will use the equations (ii) to define the transfer matrices. 
First note from $\Ww^{(u)}\Phi= \Psi$, one has
$$
\Psi_{(0,-)} = u \Phi_{(0,-)}\;,\quad P_n \Psi_n = P_n \Phi_n \qtx{and} \pmat{\Psi_{(n-1,+)} \\ \Psi_{(n,-)}}\,=\,W_n\,\pmat{\Phi_{(n-1,+)} \\ \Phi_{(n,-)}}.
$$
Then, $\Vv \Psi=z\Phi$ gives 
$$
(z^{-1}V_n-P_n) \Psi_n\,=\, z^{-1} V_n \Psi_n-P_n \Phi_n\,=\,Q_nQ_n^*\Phi_n\,=\,Q_n \pmat{\Phi_{(n,-)} \\ \Phi_{(n,+)}}\;,
$$
which implies
\begin{equation}\label{eq-eigenvalue-V}
\pmat{ \Psi_{(n,-)} \\ \Psi_{(n,+)}}\,=\, \pmat{\alpha_{z,n} & \beta_{z,n} \\ \gamma_{z,n} & \delta_{z,n}}\pmat{\Phi_{(n,-)} \\ \Phi_{(n,+)}},
\end{equation}
where
\begin{align}\label{eq-def-alpha-etc}
\pmat{\alpha_{z,n} & \beta_{z,n} \\ \gamma_{z,n} & \delta_{z,n}}\,&=\,
Q_n^* (z^{-1} V_n-P_n)^{-1} Q_n \nonumber \\
&= \pmat{e_{(n,-)}^* \\ e_{(n,+)}^*} (z^{-1}V_n-P_n)^{-1} \pmat{ e_{(n,-)} & e_{(n,+)} },
\end{align}
in case that $z^{-1}V_n-P_n$ is invertible. 
We also note that for $|z|=1$ the matrix defined in \eqref{eq-def-alpha-etc} is unitary by part b) of Proposition~\ref{prop-A1}, where
$$
A=Q_n^* z^{-1} V_n Q_n,\,B=Q_n^* z^{-1}V_n Q_n^\perp,\,C=(Q_n^\perp)^* z^{-1} V_n Q_n,\, D=(Q_n^\perp)^*z^{-1}V_n Q_n^\perp\,,$$ 
$$
P=(Q_n^\perp)^* P_n Q_n^\perp\,=\, (Q_n^\perp)^* Q_n^\perp\,=\,\one\,.$$ 
Here, the column vectors of $Q_n^\perp \in \CC^{\Sb_n \times (|\Sb_n|-2)}$  complete the columns of $Q_n$ to an orthonormal basis of $\CC^{\Sb_n}$.

\begin{rem} \label{rem-mer}
Clearly,  $z\mapsto Q_n^*(z^{-1}V_n-P_n)^{-1} Q_n$ is a rational function by Cramer's rule, it exists for all $0<|z|<1$. Using Proposition~\ref{prop-A1}, none of the poles lies on the unit circle. Rewriting $(z^{-1}V_n-P_n)^{-1}=z(V_n-zP_n)^{-1}$, it is easy to see
that in the limit $z\to 0$ one obtains the zero matrix. 
Hence, after analytic continuation, $\alpha_{z,n}, \,\beta_{z,n}, \,\gamma_{z,n}$ and $\delta_{z,n}$ are well defined for all $|z|\leq 1$.
\end{rem}

Furthermore,  note that equivalently one may derive
\begin{equation*}
\pmat{\Phi_{(n,-)} \\ \Phi_{(n,+)}}\,=\,\pmat{\tilde \alpha_{z,n} & \tilde \beta_{z,n} \\ \tilde \gamma_{z,n} & \tilde \delta_{z,n}}\,\pmat{ \Psi_{(n,-)} \\ \Psi_{(n,+)}},
\end{equation*}
where
\begin{equation*}
\pmat{\tilde \alpha_{z,n} & \tilde \beta_{z,n} \\ \tilde \gamma_{z,n} & \tilde \delta_{z,n}}\,=\,Q_n^*(zV_n^*-P_n)^{-1} Q_n.
\end{equation*}
This is well defined for all $|z|>1$.
In the case where all inverses exist, one thus gets
\begin{equation}\label{eq-rel-inverses}
\big(Q_n^*(z^{-1} V_n-P_n)^{-1} Q_n\big)^{-1}\,=\,Q_n^*(zV_n^*-P_n)^{-1} Q_n,
\end{equation}
which is a special case of Proposition~\ref{prop-A1}~c) and also shows unitarity for $|z|=1$.
The guide for defining the transfer matrices is the special case where all $W_n=\pmat{0&1\\1 & 0}$ as it will be the case for a 1D quantum walk, cf. subsection~\ref{sub:QW}.  

For this choice of $W_n$ one obtains
$\Psi_{(n+1,-)}= \Phi_{(n,+)}$, $\Psi_{(n-1,+)}=\Phi_{(n,-)}$,  and it makes sense to define a transfer matrix associated to $V_n$ by
\begin{equation}\label{eq-def-T-sharp}
\pmat{\Phi_{(n,+)} \\ \Psi_{(n,+)} }\,=\,T_{z,n}^\sharp \,\pmat{\Psi_{(n,-)} \\ \Phi_{(n,-}) }.
\end{equation}
By \eqref{eq-eigenvalue-V} and Proposition~\ref{prop-varphi} $T^\sharp_{z,n}$ exists if $\beta_{z,n} \neq 0$ (or $\tilde \beta_{z,n} \neq 0$),  in which case
\begin{equation}\label{eq-T-sharp-1}
T^\sharp_{z,n}\,=\,\varphi_\sharp\big(Q_n^*(z^{-1}V_n-P_n)^{-1} Q_n\big)\,=\,\pmat{\beta_{z,n}^{-1} & -\alpha_{z,n} \beta_{z,n}^{-1} \\ \delta_{z,n} \beta_{z,n}^{-1} & \gamma_{z,n} - \delta_{z,n} \beta_{z,n}^{-1} \alpha_{z,n} },
\end{equation}
or
\begin{equation}\label{eq-T-sharp-2}
T^\sharp_{z,n}\,=\,\varphi_\flat\big( Q_n^*(zV_n^*-P_n)^{-1}Q_n\big)
\pmat{\tilde \gamma_{z,n}-\tilde \delta_{z,n} \tilde \beta_{z,n}^{-1} \tilde \alpha_{z,n} & \tilde \delta_{z,n} \tilde \beta_{z,n}^{-1} \\ -\tilde \beta_{z,n}^{-1} \tilde \alpha_{z,n} & \tilde \beta_{z,n}^{-1}}.
\end{equation}

In order to complete to a transfer matrix coming from the `level before',  we define the transfer matrix associated to $W_n$ by
$$
\pmat{\Psi_{(n,-)} \\ \Phi_{(n,-)}}\,=\,T^\flat_n \pmat{\Phi_{(n-1,+)} \\ \Psi_{(n-1,+)}},
$$
and let
\begin{equation}\label{eq-rel-Tr}
T_{z,n}=T^\sharp_{z,n} T^\flat_n \qtx{to get} 
\pmat{\Phi_{(n,+)} \\ \Psi_{(n,+)}}\,=\,T_{z,n} \pmat{\Phi_{(n-1,+)} \\ \Psi_{(n-1,+)}}.
\end{equation}
By Proposition~\ref{prop-varphi} for $n\geq 1$,
\begin{equation}\label{eq-T-flat}
T^\flat_n\,=\, \varphi_\flat(W_n)= \pmat{c_n-d_n b_n^{-1} a_n & d_n b_n^{-1} \\ -b_n^{-1} a_n & b_n^{-1}}=\varphi_\sharp(W_n^*)\qtx{where} W_n=\pmat{a_n & b_n \\ c_n & d_n}.
\end{equation}
Note that for the special choice $W_n=\pmat{0&1\\1&0}$ we have $T_n^\flat=\one_2$.

The equation $\Psi_{(0,-)}=u\Phi_{(0,-)}$ for the operator pair $(\Uu^{(u)}, \tilde \Uu^{(u)})$ can be understood as some boundary condition.
Here,  we will not incorporate the boundary condition into the transfer matrices and simply define
$$
T^\flat_0\,=\,\one_2\,=\,\pmat{1 & 0\\ 0 & 1}\;, \quad T_{z,0}=T_{z,0}^\sharp,
$$
meaning that, formally, $\Phi_{(-1,+)}:=\Psi_{(0,-)}$ and $\Psi_{(-1,+)}:=\Phi_{(0,-)}$.
Then, the boundary condition becomes
$
\Phi_{(-1,+)}\,=\,u \Psi_{(-1,+)}.
$
Now, in order that transfer matrices exist, we assume the following.
\\[.2cm]
\vbox{
{\bf Assumptions}
\begin{enumerate}
\item[(A1)] For all $n\geq 0$, there exists $k\in\NN$ such that $e_{(n,+)}^* V_n^k e_{(n,-)} \neq 0$. This simply  means that $V$ connects the backwards moving mode $e_{(n,-)}$ of the $n$-th shell to its forward moving mode $e_{(n,+)}$.
\item[(A2)] For all $n \geq 1$,  $0\neq b_n=  e_{(n-1,+)}^* \Ww e_{(n,-)}=\pmat{1&0} W_n \pmat{1\\0} $.
\end{enumerate}}

\begin{proposition} \label{prop-inf-dim}
{\rm (A1)} and {\rm (A2)} are both fulfilled if and only if the $\Uu^{(u)}$-cyclic space generated by  $e_{(0,-)}$ is infinite dimensional.
\end{proposition}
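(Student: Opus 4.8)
The plan is to characterize the $\Uu^{(u)}$-cyclic space of $e_{(0,-)}$ by unwinding the structure of $\Uu^{(u)} = \Ww^{(u)}\Vv$ one shell at a time, and to see that (A1) is exactly what is needed to ``leave'' a given shell $\Sb_n$ through the forward mode $e_{(n,+)}$, while (A2) is exactly what is needed to ``jump'' from shell $\Sb_{n-1}$ to shell $\Sb_n$. Since $\Vv$ is block diagonal with blocks $V_n$, and $\Ww^{(u)}$ couples only $e_{(n-1,+)}$ to $e_{(n,-)}$ (leaving $\ran P_n$ fixed up to $\Vv$), applying $\Uu^{(u)}$ repeatedly to a vector supported in $\CC^{\Sb_n}$ keeps us essentially inside $\CC^{\Sb_{n-1}} \oplus \CC^{\Sb_n} \oplus \CC^{\Sb_{n+1}}$ until the channel conditions let us move further. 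I expect the cleanest route is an induction on $n$ together with a ``minimal invariant subspace'' argument.

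\medskip

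First I would set $\Cc = \overline{\mathrm{span}}\{(\Uu^{(u)})^k e_{(0,-)} : k\in\ZZ\}$ (using that $\Uu^{(u)}$ is unitary, so the cyclic space is the closed span over $k\in\ZZ$, equivalently over $k\in\NN$). For the direction ``(A1) and (A2) $\Rightarrow$ $\Cc$ infinite dimensional'': I would prove by induction on $N$ that $e_{(N,-)} \in \Cc$ (hence $\dim\Cc = \infty$ since the $e_{(N,-)}$ are orthonormal). The base case $N=0$ is immediate. For the inductive step, suppose $e_{(n-1,-)}\in\Cc$. Using (A1) at level $n-1$, there is $k$ with $e_{(n-1,+)}^* V_{n-1}^k e_{(n-1,-)}\neq 0$; I would show that the vectors $\Vv^j e_{(n-1,-)}$ for $j=0,\dots,k$ all lie in $\Cc$ — here is the subtle point, because $\Vv$ is not a power of $\Uu^{(u)}$. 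The idea is that as long as we stay inside $\ran P_{n-1}\oplus \mathrm{span}(e_{(n-1,-)})$ and have not yet hit $e_{(n-1,+)}$, the action of $\Uu^{(u)} = \Ww^{(u)}\Vv$ agrees with that of $\Vv$ up to the already-known vector $e_{(n-2,-)}$ (absorbed into $\Cc$ by a downward induction, or handled by noting $\Ww^{(u)}$ acts as identity on $\ran P_{n-1}$): more precisely, $\Ww^{(u)}$ fixes $\ran P_{n-1}$ and sends $e_{(n-1,-)}\mapsto$ (a combination involving $e_{(n-1,-)}$ and $e_{(n-2,+)}$), so modulo $\Cc$ we can replace $\Uu^{(u)}$ by $\Vv$ step by step. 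Once $\Vv^k e_{(n-1,-)}$ has a nonzero $e_{(n-1,+)}$-component (guaranteed by (A1)), applying $\Ww^{(u)}$ moves that component onto $e_{(n,-)}$ with weight $b_n \neq 0$ by (A2); subtracting off the parts already in $\Cc$ yields a nonzero multiple of $e_{(n,-)}$ in $\Cc$, so $e_{(n,-)}\in\Cc$. This completes the induction and the first direction.

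\medskip

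For the converse, I would prove the contrapositive: if (A1) fails at some level $n_0$, or (A2) fails at some level $n_0$, then $\Cc$ is finite dimensional. If (A2) fails, i.e. $b_{n_0}=0$, then $W_{n_0}$ is (anti)diagonal, so $\Ww^{(u)}$ leaves $\mathrm{span}(e_{(n_0-1,+)})$ and $\mathrm{span}(e_{(n_0,-)})$ each invariant; combined with the block structure of $\Vv$, the subspace $\bigoplus_{n\leq n_0-1}\CC^{\Sb_n} \oplus \mathrm{span}(e_{(n_0-1,+)})^{\perp\text{-part}}$... more simply, one checks $\Uu^{(u)}$ leaves $\Hh_{<n_0} := \bigoplus_{n=0}^{n_0-1}\CC^{\Sb_n}$ invariant (the forward mode $e_{(n_0-1,+)}$ can no longer leak into $\Sb_{n_0}$), so $\Cc \subseteq \Hh_{<n_0}$ is finite dimensional. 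If instead (A1) fails at $n_0$, i.e. $e_{(n_0,+)}^* V_{n_0}^k e_{(n_0,-)} = 0$ for all $k$, then (analytically, or via the Cayley--Hamilton identity) $e_{(n_0,+)}^* (z^{-1}V_{n_0}-P_{n_0})^{-1} e_{(n_0,-)} = \beta_{z,n_0} \equiv 0$; I would argue that the $V_{n_0}$-cyclic space of $e_{(n_0,-)}$ does not contain $e_{(n_0,+)}$, and that this cyclic space together with $\Hh_{<n_0}$ (and the images of $e_{(n_0-1,+)}$ landing on $e_{(n_0,-)}$) forms a $\Uu^{(u)}$-invariant finite-dimensional subspace containing $e_{(0,-)}$. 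The key structural fact is that the only way out of $\CC^{\Sb_{n_0}}$ toward $\Sb_{n_0+1}$ is through the coordinate $e_{(n_0,+)}^*$, and failure of (A1) seals that exit.

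\medskip

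The main obstacle I anticipate is the first direction's bookkeeping: making rigorous the claim that powers of $\Vv$ applied to a shell vector stay in $\Cc$, since $\Vv$ itself is not a power of $\Uu^{(u)}$ and $\Ww^{(u)}$ introduces corrections both ``forward'' (into $\Sb_{n}$ via $e_{(n-1,+)}$) and ``backward'' (into $\Sb_{n-2}$ via $e_{(n-1,-)}$). I would organize this via a two-sided induction: assume inductively that all of $\CC^{\Sb_0},\dots,\CC^{\Sb_{n-1}}$ are contained in $\Cc$ — which is actually what the argument produces at each stage, since once $e_{(n-1,-)}\in\Cc$ the full $V_{n-1}$-cyclic subspace of $e_{(n-1,-)}$ lies in $\Cc$ by the step-by-step $\Uu^{(u)}$-vs-$\Vv$ comparison modulo $\CC^{\Sb_{n-2}}\subseteq\Cc$, and (A1) forces this cyclic subspace to be all of $\CC^{\Sb_{n-1}}$ (a vector whose $V_{n-1}$-orbit spans a proper subspace would have that subspace reduce $V_{n-1}$, but it contains $e_{(n-1,-)}$ and not $e_{(n-1,+)}$ unless the orbit is everything — wait, one must be careful: the $V_{n-1}$-cyclic space of $e_{(n-1,-)}$ need not be all of $\CC^{\Sb_{n-1}}$, but it does, by (A1), contain $e_{(n-1,+)}$, which is all we need). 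So the corrected induction hypothesis is merely ``$e_{(n-1,-)}\in\Cc$ and the $V_{n-1}$-cyclic space of $e_{(n-1,-)}$ is contained in $\Cc$,'' which is self-propagating and suffices to push forward via (A2). With this formulation the argument closes cleanly.
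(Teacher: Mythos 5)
Your converse direction (failure of (A1) or (A2) implies a finite-dimensional cyclic space) is essentially identical to the paper's: you exhibit exactly the same invariant subspaces $\bigoplus_{m<n_0}\CC^{\Sb_m}$ in the (A2) case, and $\bigoplus_{m<n_0}\CC^{\Sb_m}\oplus\HH_{n_0,-}$ with $\HH_{n_0,-}$ the $V_{n_0}$-cyclic space of $e_{(n_0,-)}$ in the (A1) case. That part is fine.

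The forward direction is where you depart from the paper, and I don't think your induction closes. You aim for the stronger claim $e_{(n,-)}\in\Cc$ for all $n$, propagating the hypothesis ``$e_{(n-1,-)}\in\Cc$ and $\Zz_{n-1}:=\,$$V_{n-1}$-cyclic span of $e_{(n-1,-)}$ is in $\Cc$.'' Concretely, with $\psi_m=V_{n-1}^m e_{(n-1,-)}=\alpha e_{(n-1,-)}+\beta e_{(n-1,+)}+q$ (where $q\in \ran P_{n-1}$, $\beta\neq 0$ by (A1)) and $\psi_{m-1}\in\Zz_{n-1}\subset\Cc$, one gets
\[
\Uu\psi_{m-1}=\Ww\psi_m = \alpha b_{n-1}e_{(n-2,+)}+\alpha d_{n-1}e_{(n-1,-)}+\beta a_n e_{(n-1,+)}+\beta c_n e_{(n,-)}+q\;\in\;\Cc,
\]
and after subtracting $\psi_m\in\Cc$ and the $e_{(n-1,-)}$ term (both legal), you are left with
\[
\alpha b_{n-1}e_{(n-2,+)}+\beta(a_n-1)e_{(n-1,+)}+\beta c_n e_{(n,-)}\in\Cc.
\]
To extract $e_{(n,-)}$ you must now subtract off the $e_{(n-2,+)}$ term and the $e_{(n-1,+)}$ term (note $a_n\neq 1$ precisely because (A2) forces $|c_n|>0$). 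Neither $e_{(n-2,+)}\in\Cc$ nor $e_{(n-1,+)}\in\Cc$ is supplied by your ``corrected'' induction hypothesis: (A1) only guarantees that $e_{(n,+)}$ is \emph{not orthogonal} to $\Zz_n$, not that it \emph{lies in} $\Zz_n$, so ``$\Zz_{n-1}\subset\Cc$'' does not yield the forward modes $e_{(n-2,+)},e_{(n-1,+)}$. Even the base case $\Zz_0\subset\Cc$ is not immediate, since $\Uu e_{(0,-)}=\Ww V_0 e_{(0,-)}\neq V_0 e_{(0,-)}$ whenever $V_0e_{(0,-)}$ has a nonzero $e_{(0,+)}$-component. (Also a small slip: the coefficient transporting $e_{(n-1,+)}$ to $e_{(n,-)}$ via $\Ww$ is $c_n$, not $b_n$, though for $W_n\in\UU(2)$ one has $|b_n|=|c_n|$ so the nonvanishing is equivalent.) In short, the ``two-sided induction'' you sketch does not self-propagate as claimed, and it is not even clear that $e_{(n,-)}\in\Cc$ holds in general for one-channel operators (it does for scattering zippers, cf.\ Lemma~\ref{lem-1}, but that relies on $|\Sb_n|=2$).

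The paper sidesteps all of this by proving a weaker, sufficient statement: under (A1) and (A2) the quantum walk advances, so for each $n$ there is a minimal $k_n$ with $\langle e_{(n,-)},\Uu^{k_n}e_{(0,-)}\rangle\neq 0$; the strict minimality of the $k_n$ then makes the finitely supported vectors $\Uu^{k_n}e_{(0,-)}$ linearly independent (the last one has an $e_{(n,-)}$-component the earlier ones lack), which already gives infinite dimensionality without ever isolating $e_{(n,-)}$ itself. If you want to keep your route, you would need to strengthen the induction hypothesis to control the forward modes $e_{(m,+)}$ and the $P_m$-parts in $\Cc$ for $m<n$, and it is not clear this is attainable; the paper's linear-independence argument is the cleaner path.
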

\begin{proof}
We write $\Uu$ for $\Uu^{(u)}$. 
If (A2) is not satisfied at the level $n$, then there is no connection from $\Sb_{n-1}$ to $\Sb_n$ and
both, $\Vv$ and $\Ww$ , and thus $\Uu$, leave the space $\bigoplus_{m=0}^{n-1} \ell^2(\Sb_m)$ invariant.
Therefore, the cyclic space generated by $e_{(0,-)}$ is finite dimensional. \\
If (A1) is not satisfied at the level $n$, then $\ell^2(\Sb_n)=\HH_{n,-} \oplus \HH_{n,+}$ where $V_n$ leaves both spaces invariant and $e_{(n,-)}\in\HH_{n,-}$, $e_{(n,+)} \in \HH_{n,+}$. Therefore, $\Vv$ and $\Ww$, and thus $\Uu$, leave the space
$\bigoplus_{m=0}^{n-1} \ell^2(\Sb_m) \oplus \HH_{n,-}$ invariant, and the cyclic space of $e_{(0,-)}$ is finite dimensional.\\[.2cm]
Now assume (A1) and (A2) are both fulfilled. For $k$ where $V_0^k e_{(0,-)}$ is perpendicular to $e_{(0,+)}$, we have $\Uu^k e_{(0,-)} \subset \ell^2(\Sb_0)$.
For the minimum $k_1$ where $V_0^{k_1} e_{(0,-)}$ has overlap with $e_{(0,+)}$, $\Ww$ will transport to the next shell. Then, $k_1$ is also the minimum
such that $\Uu^{k_1} e_{(0,-)}$ is not orthogonal to $e_{(1,-)}$.
Repeating the arguments and following the quantum walk along the shells, we find a sequence $k_1 < k_2 < \ldots$ of positive integers, 
such that $k_n$ is the minimum number where $\Uu^{k_n} e_{(0,-)}$ is not orthogonal to $e_{(n,-)}$.
Clearly, $\Uu^{k_n} e_{(0,-)}$ are all linearly independent and the cyclic space generated by $e_{(0,-)}$ is infinite dimensional.
\end{proof}

Note that the splitting of the operator into a direct sum when (A1) or (A2) are invalidated, shows that there should be no transfer. We may speak of a `broken channel' in this case. 
On the other hand, when they are fulfilled we generate the infinite dimensional cyclic space and there should always be a transfer.
Indeed, we immediately see that (A2) is equivalent to all transfer matrices $T^\flat_n$ being well defined.
Considering the assumption (A1) we have the following equivalence.

\begin{proposition}
The following properites  are equivalent
\begin{enumerate}[{\rm (i)}]
\item Assumption {\rm (A1)}.
\item For all $n$,  $z\mapsto \beta_{z,n}$ is not the zero function.
\item For all $n\in\NN$, $z \mapsto \tilde \beta_{z,n}$ is not the zero function.
\end{enumerate}
Thus, in this case,  for any $n$,  $T^\sharp_{z,n}$ is well defined except for finitely many $z$, as $z\mapsto \beta_{z,n}$ is a rational function.
\end{proposition}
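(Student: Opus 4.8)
Fix $n$ and write $\beta_{z,n}=e_{(n,-)}^*(z^{-1}V_n-P_n)^{-1}e_{(n,+)}$ for the $(1,2)$-entry of the matrix in \eqref{eq-def-alpha-etc}, a rational function of $z$. Let $\Cc_{n,\pm}={\rm span}\{V_n^je_{(n,\pm)}:j\geq0\}\subseteq\CC^{\Sb_n}$ be the two $V_n$-cyclic subspaces. As $V_n$ is a unitary matrix, $0$ is not an eigenvalue, so $V_n^{-1}=V_n^*$ is a polynomial in $V_n$; hence each $\Cc_{n,\pm}$ is invariant under $V_n$ and $V_n^*$, and equals ${\rm span}\{V_n^je_{(n,\pm)}:j\in\ZZ\}$. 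Since $\langle e_{(n,+)},V_n^je_{(n,-)}\rangle=\overline{\langle e_{(n,-)},V_n^{-j}e_{(n,+)}\rangle}$, Assumption {\rm (A1)} at level $n$ fails iff $e_{(n,+)}\perp\Cc_{n,-}$, iff $e_{(n,-)}\perp\Cc_{n,+}$ (this is essentially the splitting used in the proof of Proposition~\ref{prop-inf-dim}). The plan is to show, for each $n$, that {\rm (A1)} at level $n$ holds $\iff\beta_{z,n}\not\equiv0\iff\tilde\beta_{z,n}\not\equiv0$, and then quantify over $n$.

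One direction is easy. Suppose {\rm (A1)} at level $n$ fails, i.e.\ $e_{(n,+)}\perp\Cc_{n,-}$. Then for $v\in\Cc_{n,-}$ we have $P_nv=v-e_{(n,-)}\langle e_{(n,-)},v\rangle-e_{(n,+)}\langle e_{(n,+)},v\rangle=v-e_{(n,-)}\langle e_{(n,-)},v\rangle\in\Cc_{n,-}$, so $\Cc_{n,-}$ is invariant under $P_n$, $V_n$ and $V_n^*$. Hence the orthogonal decomposition $\CC^{\Sb_n}=\Cc_{n,-}\oplus\Cc_{n,-}^\perp$ reduces $z^{-1}V_n-P_n$, and therefore also $(z^{-1}V_n-P_n)^{-1}$ whenever it exists. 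Since $e_{(n,+)}\in\Cc_{n,-}^\perp$ and $e_{(n,-)}\in\Cc_{n,-}$ this forces $\beta_{z,n}=0$, and since $\beta_{z,n}$ is rational, $\beta_{z,n}\equiv0$. Equivalently, $\beta_{z,n}\not\equiv0\Rightarrow${\rm (A1)} at level $n$.

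The converse is the step I expect to require the most work. I would expand around $z=0$: from $(z^{-1}V_n-P_n)^{-1}=z(V_n-zP_n)^{-1}=\sum_{k\geq0}z^{k+1}(V_n^{-1}P_n)^kV_n^{-1}$ (valid for $|z|<1$) one obtains $\beta_{z,n}=\sum_{k\geq0}z^{k+1}\langle e_{(n,-)},u_k\rangle$ with $u_k:=(V_n^{-1}P_n)^kV_n^{-1}e_{(n,+)}$. Assume $\beta_{z,n}\equiv0$, so $\langle e_{(n,-)},u_k\rangle=0$ for every $k$. Then $P_nu_k=u_k-\langle e_{(n,+)},u_k\rangle e_{(n,+)}$, giving the recursion $u_{k+1}=V_n^{-1}u_k-\langle e_{(n,+)},u_k\rangle V_n^{-1}e_{(n,+)}$ for all $k$. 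An induction using this recursion shows ${\rm span}\{u_0,\dots,u_k\}={\rm span}\{V_n^{-1}e_{(n,+)},\dots,V_n^{-(k+1)}e_{(n,+)}\}$ (at each step $u_{k+1}$ equals $V_n^{-(k+2)}e_{(n,+)}$ plus a combination of $V_n^{-j}e_{(n,+)}$, $1\le j\le k+1$), so taking the union over $k$ and using again that $V_n^{-1}$ is a polynomial in $V_n$ we get ${\rm span}\{u_k:k\geq0\}=\Cc_{n,+}$. Hence $e_{(n,-)}\perp\Cc_{n,+}$, i.e.\ {\rm (A1)} at level $n$ fails. This is the contrapositive of the desired implication, and together with the previous paragraph yields {\rm (A1)} at level $n$ $\iff\beta_{z,n}\not\equiv0$; quantifying over $n$ gives {\rm (i)}$\iff${\rm (ii)}.

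Finally, {\rm (ii)}$\iff${\rm (iii)}: by \eqref{eq-rel-inverses}, wherever the inverses exist one has $\smat{\tilde\alpha_{z,n}&\tilde\beta_{z,n}\\\tilde\gamma_{z,n}&\tilde\delta_{z,n}}=\smat{\alpha_{z,n}&\beta_{z,n}\\\gamma_{z,n}&\delta_{z,n}}^{-1}$, hence $\tilde\beta_{z,n}=-\beta_{z,n}\,(\alpha_{z,n}\delta_{z,n}-\gamma_{z,n}\beta_{z,n})^{-1}$ as rational functions. The denominator is the determinant of a unitary matrix for $|z|=1$ (Remark~\ref{rem-mer}), hence unimodular there, so it is not the zero rational function; thus $\tilde\beta_{z,n}\equiv0\iff\beta_{z,n}\equiv0$ for each $n$. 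For the last assertion, by \eqref{eq-T-sharp-1} and Proposition~\ref{prop-varphi}, $T^\sharp_{z,n}$ is defined whenever $\beta_{z,n}\neq0$ and $\alpha_{z,n},\beta_{z,n},\gamma_{z,n},\delta_{z,n}$ are all finite; since these are rational functions and $\beta_{z,n}\not\equiv0$ under {\rm (A1)}, this excludes only finitely many $z$.
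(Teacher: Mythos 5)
Your proof is correct, and it takes a genuinely different route from the paper's for the main equivalence (i)$\iff$(ii). The paper also expands $\beta_{z,n}$ around $z=0$, obtaining coefficients of the form $e_{(n,+)}^*V_n(P_nV_n)^k e_{(n,-)}$, and then proves by a somewhat opaque two-way induction (using the telescoping identity $V_n^{m+1}=\sum_{l=1}^m V_n^l R_n(V_nP_n)^{m-l}V_n+V_n(P_nV_n)^m$ with $R_n=\one-P_n$) that these coefficients all vanish iff $e_{(n,+)}^*V_n^{k+1}e_{(n,-)}=0$ for all $k$. You instead organize the argument around the cyclic subspaces $\Cc_{n,\pm}$: the easy direction becomes the observation that a broken channel makes $\Cc_{n,-}$ a reducing subspace for $z^{-1}V_n-P_n$ (which is exactly the decomposition already used in Proposition~\ref{prop-inf-dim}), and the hard direction is a clean unitriangularity/change-of-basis argument identifying $\mathrm{span}\{u_k\}$ with $\Cc_{n,+}$. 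Both arguments are comparable in length, but yours separates the two implications, connects to the already-established geometric picture, and makes the role of cyclicity transparent; the paper's is more computational but stays entirely at the level of matrix entries. Your treatment of (ii)$\iff$(iii) via \eqref{eq-rel-inverses} and the non-vanishing of the determinant is also more explicit than the paper's ``it is easy to see''; it is valid because, as you note, the determinant of $Q_n^*(z^{-1}V_n-P_n)^{-1}Q_n$ is unimodular on $\UU(1)$ and hence not the zero rational function. One small presentation note: the step ``${\rm span}\{u_0,\dots,u_k\}={\rm span}\{V_n^{-1}e_{(n,+)},\dots,V_n^{-(k+1)}e_{(n,+)}\}$'' is best justified by strengthening the induction hypothesis to $u_k=V_n^{-(k+1)}e_{(n,+)}+w_k$ with $w_k\in{\rm span}\{V_n^{-j}e_{(n,+)}:1\le j\le k\}$, from which the span equality follows by unitriangular invertibility; your parenthetical hints at this but it is worth spelling out, since the vectors $V_n^{-j}e_{(n,+)}$ need not be linearly independent.
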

\begin{proof}
First note that it is easy to see that $z\mapsto \beta_{z,n}$ is the zero function if and only if $z\mapsto \tilde \beta_{z,n}$ is the zero function.  
Now, for $|z|<1$ we have by \eqref{eq-def-alpha-etc}
 $$\beta_{z,n}= e_{(n,-)}^* (\one -zV_n^* P_n )^{-1}zV_n^*  e_{(n,+)} = \sum_{k=0}^\infty z^{k+1} e_{(n,-)}^*(P_nV_n^*)^k V_n^* e_{(n,+)},
 $$ which means that
$z\mapsto  \beta_{z,n} $ is identically zero if and only if 
for all $k\in \NN_0$,
$$e_{(n,+)}^* V_n(P_nV_n)^k e_{(n,-)}=0.$$
To finish the proof it suffices to prove now by induction in $m$ that\\[.2cm]
 $e_{(n,+)}^* V_n(P_nV_n)^k e_{(n,-)}=0$ for all $k=0,\ldots,m$ if and only if $e_{(n,+)}^* V_n^{k+1} e_{(n,-)}=0$ for all $k=0,\ldots,m$. \\[.2cm]
The case $ m=0$ is clear. For the induction step $m-1\to m$ let $R_n=1-P_n=e_{(n,-)} e_{(n,-)}^* + e_{(n,+)} e_{(n,+)}^*$ and note
that in either direction, by hypothesis and induction hypothesis we have for $l=1,\ldots,m$
$$
e_{(n,+)}^*V_n^l R_n V_n (P_n V_n)^{m-l}  e_{(n,-)}=e_{(n,+)}^* V_n^l e_{(n,+)} \underbrace{e_{(n,+)}^* V_n(P_n V_n)^{m-l} e_{(n,-)}}_{=0}
$$
$$
\qquad+\;\underbrace{e_{(n,+)}^* V_n^l e_{(n,-)}}_{=0} e_{(n,-)}^*V_n (P_n V_n)^{m-l}  e_{(n,-)}\,=\,0\;.
$$
Now,
$$
V_n^{m+1} \,=\, V_n^m R_n V_n+V_n^m P_nV_n\,=\,V_n^{m} R_n V_n + V_n^{m-1} R_n V_n P_n V_n + V_n^{m-1} (P_n V_n)^2
$$
$$
=\ldots= \sum_{l=1}^m V_n^l R_n (V_n P_n)^{m-l} V_n + V_n (P_n V_n)^m,
$$
and the previous statement gives $e_{(n,+)}^* V_n^{m+1} e_{(n,-)}=0$ if and only if $ e_{(n,+)}^*V_n (P_n V_n)^m e_{(n,-)}=0$. which finishes the induction step as $e_{(n,-)}^* V_n^l e_{(n,+)}=0$ for $l=0,\ldots, m$ by induction hipothesis.
\end{proof}

On the set where either $(z^{-1}V_n-P_n)^{-1}$ or $\varphi_\sharp\big(Q_n^*(z^{-1}V_n-P_n)^{-1}Q_n\big) $ is not defined we may use analytic continuation in $z$ to define $T^\sharp_{z,n}$ wherever possible.   
By Remark~\ref{rem-mer} one can use the formula \eqref{eq-T-sharp-1}  for $0<|z|\leq 1$ where $\beta_{z,n} \neq 0$,  and one can use 
\eqref{eq-T-sharp-2} for $|z|>1$ where $\tilde \beta_{z,n} \neq 0$. The exceptional set where $T^\sharp_{z,n}$ and $T_{z,n}$ are not defined is thus given by
$$
\widehat \Aa_n\,=\,\{z\,:\,0< |z| \leq 1\,,  \beta_{z,n}=0\}\,\cup\,\{z\,:\,|z|>1\,,\, \tilde \beta_{z,n}=0 \}\;.
$$
More important for spectral theory are the sets
\begin{equation}\label{eq-Aa_n}
\Aa_N\,=\,\bigcup_{n=0}^N \widehat\Aa_n \cap \UU(1) = \{z\,:\,\big(|z|=1\,\wedge\,\exists n,\,  0\leq n\leq N\,:\,\beta_{z,n}=0\big) \}
\end{equation}
\begin{equation}\label{eq-Aa}
\qtx{and} \Aa=\bigcup_{n=0}^\infty \Aa_n.
\end{equation}
Note that $\Aa_N$ are finite sets under assumption (A2)  and $\Aa$ is thus countable.
\newline
Apart from the set where some transfer matrices are not defined, we can consider the products
\begin{equation}\label{eq-product-T}
T_{z,[0,n]}\,:=\,T_{z,n} T_{z,n-1} \cdots T_{z,1} T_{z,0}\;.
\end{equation}

\subsection{Spectral average formula and criteria for a.c. spectrum\label{sub-sp-av}}

Let $\mu^{(u)}$ denote the spectral measure of  the operator $\Uu^{(u)}$ (or alternatively $\tilde \Uu^{(u)}$) at the vector $e_{(0,-)}$,  meaning
$$
\mu^{(u)}(f)\,=\,e_{(0,-)}^*\, f(\Uu^{(u)})\, e_{(0,-)}\,=\,e_{(0,-)}^*\, f(\tilde \Uu^{(u)})\, e_{(0,-)}\,.
$$
In Dirac notation we would write
$$
\mu^{(u)}(f)\,=\,\langle e_{(0,-)}\,|\, f(\Uu^{(u)})\,|\, e_{(0,-)}\rangle \,=\,\langle e_{(0,-)}\,|\, f(\tilde \Uu^{(u)})\, |\, e_{(0,-)}\,\rangle\;.
$$
The second equation follows easily as $\tilde \Uu^{(u)} = (\Ww^{(u)})^* \Uu^{(u)} \,\Ww^{(u)}$ and $\Ww^{(u)} e_{(0,-)}=u\,e_{(0,-)}$ with $|u|=1$.

Similarly to the Hermitian case (cf. \cite{Sa-OC})
we need to separate the measure part induced by compactly supported eigenfunctions. First we define
\begin{equation} \label{eq-def-HH_c}
\HH^{(u)}_c\,=\,\overline{ {\rm span}\,\{\psi\in\ell^2(\GG)\,:\, \psi\,\text{compactly supported eigenfunctions of $\Uu^{(u)}$ }\,\} },
\end{equation}
where the bar denotes the closure. 
Then, let $P^{(u)}$ be the orthogonal projection onto $\HH^{(u)}_c$ and define the point measure
\begin{equation} \label{eq-def-nu^u}
\nu^{(u)}(f)\,=\, \langle P^{(u)} e_{(0,-)}\,|\, f(\Uu^{(u)})\,|P^{(u)} e_{(0,-)}\,\rangle\;.
\end{equation}

\begin{rem}
Note that for some particular eigenvalue $z_0\in\UU(1)$, the eigenfunction in the intersection of the cyclic space of $e_{(0,-)}$ with $\HH_c^{(u)}$ giving $\nu^{(u)}(f)$ may not be compactly supported, but in such a case, it must be the limit of compactly supported eigenfunctions of the same eigenvalue. This only may happen if $z_0$ is an eigenvalue of infinite multiplicity.
\end{rem}

We obtain the following analogue to \cite[Theorem~2]{Sa-OC}.

\begin{Theo}\label{th-main}
For any $u \in \UU(1)$,  $\nu^{(u)}$ is supported on $\Aa$ and for $f\in C(\partial \DD)=C(\UU(1))$ we have
\begin{align*}
\mu^{(u)}(f)\,&=\,\nu^{(u)}(f)\,+\,\lim_{n \to \infty}
\int_{0}^{2 \pi} \frac{f(e^{i\varphi})}{\pi} \,\frac{{\rm d}\varphi}{\left\|T_{e^{i\varphi},[0,n]} \smat{u\\1} \right\|^2}.
\end{align*}
Note that this can be interpreted as some weak limit convergence of the absolute continuous measures on the unit disk with densities $\pi^{-1} \left\|T_{e^{i\varphi},[0,n]} \smat{u\\1} \right\|^{-2}$ towards $\mu^{(u)}-\nu^{(u)}$.
\end{Theo}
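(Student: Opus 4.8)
The plan is to mimic the strategy of the Hermitian one-channel case \cite{Sa-OC} and of the CMV analogue in \cite{Si-OPUC}, working with the finite truncations $\Uu_N^{(u,v)}$ and averaging over the right boundary condition $v\in\UU(1)$. First I would fix $N$ and note that, since $\Uu_N^{(u,v)}$ is a finite unitary operator, its spectral measure at $e_{(0,-)}$ is a finite sum of Dirac masses on $\UU(1)$; averaging over $v$ with respect to normalized Lebesgue measure $\mathrm{d}v/(2\pi)$ on $\partial\DD$ produces an absolutely continuous measure $\mu_N^{(u)}$. The key computation is to identify the density of $\mu_N^{(u)}$ explicitly in terms of the transfer-matrix product. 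The eigenvalue equation $\Uu_N^{(u,v)}\Psi=z\Psi$ (equivalently $\Vv_N\Psi=z\Phi$, $\Ww_N^{(u,v)}\Phi=\Psi$, cf. Proposition~\ref{prop-eig-equation}) iterated through the shells via the transfer matrices $T_{z,n}$ of \eqref{eq-rel-Tr}--\eqref{eq-product-T} shows that on $\UU(1)$, for $z=e^{i\varphi}$ not in $\Aa_N$, the solution satisfying the left boundary condition $\Phi_{(-1,+)}=u\,\Psi_{(-1,+)}$ is determined up to scalar by $T_{z,[0,n]}\smat{u\\1}$, and the right boundary condition with parameter $v$ selects one $z$ for each $v$. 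A standard change-of-variables / Poisson-kernel-type argument (as in \cite[Sec.~1.3]{Sa-OC} or the OPUC second-kind-polynomial computation in \cite{Si-OPUC}) then yields
$$
\mu_N^{(u)}(f)\,=\,\int_0^{2\pi} \frac{f(e^{i\varphi})}{\pi}\,\frac{\mathrm{d}\varphi}{\bigl\|T_{e^{i\varphi},[0,N]}\smat{u\\1}\bigr\|^2}.
$$
Here one uses that $T_{z,n}\in\UU(1,1)$ for $|z|=1$ (Proposition~\ref{prop-varphi}), which controls the Jacobian: the $\UU(1,1)$ structure is exactly what makes the derivative of the boundary map $v\mapsto\varphi$ equal to a constant multiple of $\|T_{e^{i\varphi},[0,N]}\smat{u\\1}\|^{-2}$, and the normalization constant $1/\pi$ is pinned down by testing against $f\equiv 1$ and using that $\mu_N^{(u)}$ is a probability measure.

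The second half is to take $N\to\infty$. I would combine two ingredients. First, the averaged truncated measures converge: since $\Uu_N^{(u,v)}$ converges to $\Uu^{(u)}$ strongly (in fact the resolvents converge locally uniformly off $\UU(1)$, by a standard finite-hopping/trace-class-perturbation argument, with the $v$-block a rank-one modification at level $N$), one gets $\mu_N^{(u)}\to\mu^{(u)}$ weakly after averaging over $v$ — more precisely one shows that $\frac{1}{2\pi}\int \mu_N^{(u,v)}\,\mathrm{d}v \to \mu^{(u)}$ weakly; this is where the averaging over the boundary condition is essential, as individual $\mu_N^{(u,v)}$ need not converge. Second, the integrals $\int_0^{2\pi} \pi^{-1} f(e^{i\varphi})\,\|T_{e^{i\varphi},[0,n]}\smat{u\\1}\|^{-2}\,\mathrm{d}\varphi$ converge as $n\to\infty$: the densities $\pi^{-1}\|T_{e^{i\varphi},[0,n]}\smat{u\\1}\|^{-2}$ are subprobability densities (their total mass is $\leq 1$, again from the $\UU(1,1)$ structure and the fact that appending a transfer matrix can only increase the norm on average — this is the analogue of the monotonicity of $\|T_{[0,n]}v\|$ used in \cite{Sa-OC}), so by a weak-* compactness argument the limit measure exists along subsequences; the difference $\mu^{(u)}$ minus this limit is then a positive measure which, by the explicit formula at each finite $N$ and the convergence of $\mu_N^{(u)}$, must be the weak limit of the point masses sitting on $\bigcup_{n\le N}\Aa_n\subset\Aa$. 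Identifying this leftover positive measure with $\nu^{(u)}$ defined via the compactly supported eigenfunctions requires showing that an eigenvalue $z_0$ carries mass in $\mu^{(u)}-\lim_n(\cdots)$ if and only if the corresponding generalized eigenfunction lies in $\ell^2$ and can be reached from $e_{(0,-)}$; this is exactly the content that the divergence of $\|T_{z_0,[0,n]}\smat{u\\1}\|$ (so the density vanishes at $z_0$) corresponds to a genuine $\ell^2$ eigenfunction, while the points of $\Aa$ where a $\beta_{z,n}$ vanishes are precisely where a compactly supported eigenfunction can be produced.

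I expect the main obstacle to be the last step: rigorously matching the residual measure with $\nu^{(u)}$, i.e.\ proving that the atoms that survive the limit are supported on $\Aa$ and have the weights prescribed by the projection $P^{(u)}$ onto $\HH_c^{(u)}$. The subtlety flagged in the Remark preceding the theorem — that at an eigenvalue of infinite multiplicity the relevant eigenfunction may only be a limit of compactly supported ones — means one cannot simply argue shell-by-shell; one needs the functional-calculus description \eqref{eq-def-nu^u} together with a careful analysis of which $z\in\UU(1)$ force $\beta_{z,n}=0$ for some $n$ and how that truncates the transfer-matrix recursion. A secondary technical point is justifying the interchange of the $N\to\infty$ limit with the $v$-average and with the weak limit of the transfer-matrix densities; I would handle this via a dominated-convergence / equicontinuity estimate using that all the densities are uniformly bounded in $L^1$ by the $\UU(1,1)$ contraction property, exactly as in \cite{Lasi, Sa-OC}.
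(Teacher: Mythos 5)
Your overall architecture is the same as the paper's: truncate to $\GG_N$, average the right boundary condition $v$ over $\UU(1)$, identify the resulting measure in terms of $T_{e^{i\varphi},[0,N]}\smat{u\\1}$ using the $\UU(1,1)$ structure, then let $N\to\infty$. But there is a concrete error at the finite-$N$ stage, and the step you flag as ``the main obstacle'' is exactly where the paper's proof does real work that your plan does not replicate.

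The displayed finite-$N$ identity
$\mu_N^{(u)}(f)=\int_0^{2\pi}\pi^{-1}f(e^{i\varphi})\,\|T_{e^{i\varphi},[0,N]}\smat{u\\1}\|^{-2}\,{\rm d}\varphi$
is false in general: the $v$-average is \emph{not} purely absolutely continuous. If $z_0\in\Aa_N$ and $\Uu_N^{(u,v)}$ has an eigenvector at $z_0$ orthogonal to $e_{(N,+)}$, that eigenvector, eigenvalue, and spectral weight do not depend on $v$ at all, so averaging over $v$ leaves an atom of $\mu_N^{(u)}$ at $z_0$. The paper's Lemma~\ref{lem-mu_N} therefore reads $\mu_N^{(u)}=\nu_N^{(u)}+\text{(a.c.\ part with that density)}$ with a genuine point measure $\nu_N^{(u)}$ supported on the finite set $\Aa_N$; your normalization argument (``test against $f\equiv 1$, use that $\mu_N^{(u)}$ is a probability measure'') would then force total mass $1$ for the a.c.\ part, which is wrong precisely when $\nu_N^{(u)}\neq 0$. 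Relatedly, the paper does not use a change-of-variables Jacobian for the boundary map $v\mapsto\varphi$; it computes the $v$-averaged Green's function explicitly via Proposition~\ref{prop-rel-T-R} and the key observation that $g_N^{(u,v)}(z)$ is holomorphic in $v^{-1}$ for $|v^{-1}|<|z|^{-1}$, so the $v$-average is obtained by setting $v^{-1}=0$ (Lemma~\ref{lem-4.1}); the density then drops out of the Poisson boundary value using $|A_z|^2-|B_z|^2=1$ and $|A_zu+B_z|=|C_zu+D_z|$ from $T_{z,[0,N]}\in\UU(1,1)$.

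The second, more substantial gap is the identification of the residual with $\nu^{(u)}$. Your plan treats it as a weak-$*$ limit ``leftover,'' but that does not determine it as the specific measure \eqref{eq-def-nu^u}. The paper's Lemma~\ref{lem-limit-nu} is the essential ingredient: it shows $\nu_N^{(u)}\leq\nu_{N+1}^{(u)}$ and $\nu_N^{(u)}\to\nu^{(u)}$, using a rank-one spectral averaging result (the Haar average over $v$ of the spectral measure of $\Uu_N^{(u,v)}$ at $e_{(N,+)}$ is Haar on $\UU(1)$) to conclude that any atom of $\mu_N^{(u)}$ must come from an eigenvector of $\Uu_N^{(u,v)}$ lying in $\Zz_N^\perp$, the orthogonal complement of the cyclic subspace generated by $e_{(N,+)}$; such an eigenvector, extended by zero, is a compactly supported eigenfunction of $\Uu^{(u)}$, giving monotonicity in $N$ and the exact match with $\HH^{(u)}_c$. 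Without this monotone, shell-by-shell identification, your weak-$*$ compactness argument neither shows that the residual is supported on $\Aa$ nor that its weights agree with $P^{(u)}e_{(0,-)}$, and the subtlety you mention (infinite-multiplicity eigenvalues, limits of compactly supported eigenfunctions) is precisely handled by taking the increasing limit of the $\nu_N^{(u)}$ rather than by post-hoc bookkeeping.
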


We can deduce Carmona's criterion for one-channel operators, a generalization of \cite[Theorem~10.7.5]{Si-OPUC} in the CMV case.

\begin{Theo}\label{th-main2}
Assume that for $p>1$, and 
$\varphi_0< \varphi_1$ one has
$$
\liminf_{n\to \infty} \int_{\varphi_0}^{\varphi_1} \|T_{e^{i\varphi},[0,n]} \|^{2p}\,{\rm d}\varphi\,<\,\infty,
$$
then, for any $u\in\UU(1)$, the positive measure $\mu^{(u)}-\nu^{(u)}$ is purely absolutely continuous in $e^{i(\varphi_0,\varphi_1)}=\{e^{i\varphi}\,:\,\varphi\in (\varphi_0,\varphi_1)\}$ w.r.t.  the Haar measure on $\partial\DD=\UU(1)$, has density in $L^p(e^{i(\varphi_0,\varphi_1)})$, and 
$$e^{i[\varphi_0,\varphi_1]} \subset \supp (\mu^{(u)}-\nu^{(u)}) \;,\qtx{in particular} e^{i[\varphi_0,\varphi_1]} \subset \sigma_{ac}(\Uu^{(u)})\,. $$
\end{Theo}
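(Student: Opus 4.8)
The plan is to deduce this statement from Theorem~\ref{th-main} by the standard Last--Simon argument, adapted to the unit circle. First I would recall that Theorem~\ref{th-main} writes $\mu^{(u)}-\nu^{(u)}$ as the weak limit of the measures $\mathrm{d}\mu_n := \pi^{-1}\|T_{e^{i\varphi},[0,n]}\smat{u\\1}\|^{-2}\,\mathrm{d}\varphi$ on $\partial\DD$. The key point is to bound the $L^p$-norms of the densities $g_n(\varphi):=\pi^{-1}\|T_{e^{i\varphi},[0,n]}\smat{u\\1}\|^{-2}$ uniformly in $n$ on the arc $I=(\varphi_0,\varphi_1)$. Since $\smat{u\\1}$ has norm $\sqrt2$ and $T_{z,[0,n]}\in\UU(1,1)$ for $|z|=1$ (by Proposition~\ref{prop-varphi}, as each factor is $\varphi_\sharp$ or $\varphi_\flat$ of a unitary), we have for any unit vector $v$ the two-sided control $\|T\|^{-2}\le\|Tv\|^2\le\|T\|^2$, hence $\|T\|^{-2}\lesssim \|T\smat{u\\1}\|^{-2}\lesssim \|T\|^2$; in particular $g_n(\varphi)\le C\|T_{e^{i\varphi},[0,n]}\|^{2}$. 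Therefore
$$
\int_{\varphi_0}^{\varphi_1} g_n(\varphi)^p\,\mathrm{d}\varphi\,\le\, C^p \int_{\varphi_0}^{\varphi_1}\|T_{e^{i\varphi},[0,n]}\|^{2p}\,\mathrm{d}\varphi,
$$
so the hypothesis gives $\liminf_n \|g_n\|_{L^p(I)}<\infty$; passing to a subsequence, the $g_n$ are bounded in $L^p(I)$.

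Next I would extract weak-$*$ compactness: since $p>1$, a subsequence of $(g_n)$ converges weakly in $L^p(I)$ to some $g\in L^p(I)$ with $\|g\|_{L^p(I)}\le\liminf_n\|g_n\|_{L^p(I)}<\infty$. On the other hand, by Theorem~\ref{th-main} the full sequence $g_n\,\mathrm{d}\varphi$ converges weakly (as measures, tested against $C(\partial\DD)$) to $\mu^{(u)}-\nu^{(u)}$ restricted appropriately. Testing against $f\in C_c(I)$ and matching the two limits, I conclude that on the open arc $e^{iI}$ the measure $\mu^{(u)}-\nu^{(u)}$ equals $g\,\mathrm{d}\varphi$ with $g\in L^p(e^{iI})$. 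This gives pure absolute continuity with $L^p$ density on the open arc. To get $\sigma_{ac}(\Uu^{(u)})\supseteq e^{i[\varphi_0,\varphi_1]}$ I would show the density $g$ is almost everywhere strictly positive on $I$: using $g_n(\varphi)\ge C'\|T_{e^{i\varphi},[0,n]}\|^{-2}\ge C'$ uniformly (as $\UU(1,1)$-matrices satisfy $\|T\|\ge 1$, but one needs a uniform \emph{upper} bound on $\|T_{e^{i\varphi},[0,n]}\|$ on a set of positive measure, which follows from the $L^{2p}$ bound via Markov's inequality on each $I'\Subset I$), one gets a lower bound on $\int_{I'} g$ for each subinterval, forcing $\supp(\mu^{(u)}-\nu^{(u)})$ to contain $e^{i\overline{I'}}$ for all such $I'$, hence $e^{i[\varphi_0,\varphi_1]}$ by taking closures. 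Since $\nu^{(u)}$ is supported on the countable set $\Aa$ and $\mu^{(u)}$ is the spectral measure of $\Uu^{(u)}$ at the cyclic vector $e_{(0,-)}$, the a.c.\ part of $\mu^{(u)}$ has the same support statement, giving the claim on $\sigma_{ac}$.

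The main obstacle I anticipate is the interchange of the two limiting procedures, i.e.\ reconciling the weak-$L^p$ limit $g$ of $(g_n)$ along one subsequence with the weak-measure limit $\mu^{(u)}-\nu^{(u)}$ coming from Theorem~\ref{th-main} along the full sequence: one must be careful that $C(\partial\DD)$-testing and $L^{p'}(I)$-testing are compatible, which works because $C_c(I)\subset L^{p'}(I)$ is dense and the total masses $\mu_n(I')$ are uniformly bounded (from the $L^p$ bound and Hölder on the bounded interval), preventing escape of mass to the endpoints. A secondary technical point is the positivity-of-density argument for the support statement, where one cannot use the bare $\UU(1,1)$ inequality $\|T\|\ge1$ alone but must invoke the integrated $L^{2p}$ hypothesis to get, on a positive-measure subset of each $I'$, a genuine uniform upper bound on $\|T_{e^{i\varphi},[0,n]}\|$ and hence a uniform lower bound on $g_n$ there.
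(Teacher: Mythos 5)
Your plan is correct and follows essentially the same route as the paper: deduce from Theorem~\ref{th-main}, bound the densities in $L^p$ via the $\UU(1,1)$ structure, extract a weak-$L^p$ limit and match it with the weak-measure limit on $C_c$-test functions, then establish positivity of mass on subintervals. The two technical points you vary are immaterial: for the upper bound $\|T\smat{u\\1}\|^{-2}\lesssim\|T\|^2$ you invoke $\|T^{-1}\|=\|T\|$ for $T\in\UU(1,1)$ where the paper applies Cauchy--Schwarz to the $\UU(1,1)$-isometry identity $\smat{0&-1}T^*GT\smat{u\\1}=1$; and for the lower bound on the mass you propose Markov/Chebyshev on each $I'\Subset I$ (get a positive-measure set where $\|T_{e^{i\varphi},[0,n]}\|$ is uniformly controlled, hence $g_n$ uniformly bounded below there), where the paper instead applies Jensen's inequality with the convex function $x\mapsto x^{-1/p}$ to get $\frac{1}{b-a}\int_a^b\|T\smat{u\\1}\|^{-2}\,\mathrm{d}\varphi\ge(\frac{b-a}{C})^{1/p}$ directly — both yield the same uniform lower bound along the good subsequence, which suffices since Theorem~\ref{th-main} gives convergence along the full sequence.
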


Both theorems will be proved in Section~\ref{sec-sp-av}.


\subsection{Absolutely continuous spectrum for periodic one-channel scattering zippers with random decaying perturbation\label{sub-random}}

For a scattering zipper $\Uu=\Ww\Vv$ as in \cite{MSb} we have $|\Sb_n|=2$ for all shells and thus $V_n, W_n \in \UU(2)$.
In particular, $e_{(n,-)}, e_{(n,+)}$ is a basis of $\CC^{\Sb_n}$ and $P_n=\nul$ and combining them to a basis of $\ell^2(\GG)$ we have
$$
\Vv=\pmat{V_0 \\  & V_1 \\. & & V_2 \\ & & & \ddots}\,, \quad \Ww=\pmat{u \\ & W_1 \\ & & W_2 \\ & & & \ddots}.
$$
Furthermore, we will write
\begin{equation}\label{eq-VnWn}
V_n\,=\,\pmat{\aaa_n & \bbb_n \\ \ccc_n & \ddd_n} \in \UU(2)  \qtx{and}
W_n\,=\,\pmat{a_n & b_n \\ c_n & d_n} \in \UU(2).
\end{equation}
Then, assumption (A1)  reduces to $\ccc_n\neq 0$, or equivalently $\bbb_n\neq 0$, for all $n\in\ZZ_+$, similar as assumption (A2).
Thus, let us assume $b_n\neq 0, \bbb_n\neq 0$ for all $n$.
In this case, all transfer matrices are defined for all $z\in \CC^*$ and we have
\begin{equation}\label{eq-T-sharp-SZ}
T_{z,n}^\sharp\,=\,\varphi_\sharp(zV_n^{-1})\,=\,\varphi_\flat(z^{-1} V_n)\,=\,\pmat{z^{-1}(\ccc_n-\ddd_n \bbb_n^{-1} \aaa_n) & \ddd_n \bbb_n^{-1} \\ -\bbb_n^{-1} \aaa_n & z \bbb_n^{-1}}\,,
\end{equation}
\begin{equation}\label{eq-T-flat-SZ}
T_{n}^\flat\,=\,\pmat{c_n-d_n b_n^{-1} a_n & d_n b_n^{-1} \\ -b_n^{-1} a_n & b_n^{-1}}\qtx{and} T_{z,n}=\begin{cases} T_{z,n}^\sharp T_n^\flat & \text{if}\,\, n\geq 1 \\ T_{z,0}^\sharp & \text{if}\,\, n=0.
\end{cases}
\end{equation}

We say that the scattering zipper is $p$-periodic if 
$$
V_{n+p}\,=\,V_n \qtx{for all $n\in\ZZ_+$, \quad and}
W_{n+p}\,=\,W_n \quad \text{for all $n\in\ZZ_+^*$}.
$$

Note that under these conditions $T_{z,n+p}=T_{z,n}$ for any $z\in \CC^*$ and all $n\geq 1$. Moreover, we have
$$T_{z,0}=T_{z,0}^\sharp = T_{z,p}^\sharp = T_{z,p} (T_p^\flat)^{-1}.$$
Then, we define the transfer matrix over one period $p$ by $T_z=T_{z,[1,p]}$ and we have 
\begin{equation}
T_{z,[0,np]}\,=\,T_z^n T_{z,0}.
\end{equation}
For $z \in \UU(1)$ we have $T_z \in \UU(1,1)$, and therefore, one obtains 
$$\frac{(\Tr T_z)^2}{ \det(T_z)} \geq 0\quad \text{and}\quad |\det T_z|=1.$$
This means, one has eigenvalues of the form $e^{i\chi} \lambda$ and $e^{i\chi}\lambda^{-1}$ where $\lambda+\lambda^{-1} \in \RR$ and $e^{2i\chi}=\det(T_z)$ (see Proposition~\ref{prop-U11}).

If $|\Tr T_z|<2$, we find that $\lambda \in \UU(1)$ and $\|T^n_z\|$ is uniformly bounded. Thus one can use Theorem~\ref{th-main2} to find that
there is absolutely continuous spectrum.
Hence, we define the set
\begin{equation}\label{eq-sigma-ac}
\Sigma\,=\, \{z\in \UU(1)\,:\,  |\Tr T_z |\,<\,2\,\}\,=\,\Big\{z \in \UU(1)\,:\, \frac{(\Tr T_z)^2}{\det(T_z)} < 4\,\Big\}.
\end{equation}

\begin{proposition}\label{prop-sigma-ac}
The set $\Sigma$ is a non-empty union of open intervals on $\UU(1)$.
Apart from a finite set of eigenvalues outside $\overline{\Sigma}$, the spectrum of $\Uu$ is purely absolutely continuous and
given by the closure of $\Sigma$. 
$$\sigma_{ess}(\Uu)\,=\,\sigma_{ac}(\Uu)\,=\,\overline{\Sigma}.$$
\end{proposition}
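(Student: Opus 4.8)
The plan is to combine the Last--Simon--type criterion (Theorem~\ref{th-main2}) on the ``elliptic'' set $\Sigma$ with a Floquet analysis of the associated bi-infinite periodic operator. \emph{Structure of $\Sigma$ and the a.c.\ inclusion.} By \eqref{eq-T-sharp-SZ}--\eqref{eq-T-flat-SZ}, the entries of each $T_{z,n}$, hence of $T_z=T_{z,[1,p]}$, are Laurent polynomials in $z$ (with no poles away from $z=0$, since $b_n,\bbb_n\neq0$), so $z\mapsto\Tr T_z$ is continuous on $\UU(1)$ and $\Sigma$ is open, thus a union of open arcs. Since $\det T_{z,n}^\sharp=\ccc_n\bbb_n^{-1}$ and $\det T_n^\flat=c_nb_n^{-1}$ are independent of $z$, so is $\det T_z$, while the coefficient of $z^p$ in $\Tr T_z$ equals $\prod_{n=1}^p(\bbb_nb_n)^{-1}\neq0$; as a Laurent polynomial of constant modulus on $\UU(1)$ must be a monomial, $\Tr T_z$ is not, so $\{z\in\UU(1):|\Tr T_z|=2\}$ is finite and $\Sigma$ is in fact a finite union of open arcs, with $\UU(1)\setminus\overline\Sigma$ a finite union of open ``gaps''. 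Moreover, in the scattering zipper case $\beta_{z,n}=z\,\overline{\ccc_n}$ never vanishes on $\UU(1)$, so the exceptional set $\Aa$ is empty and therefore $\nu^{(1)}=0$ by Theorem~\ref{th-main}. Now for $z\in\Sigma$ the matrix $T_z\in\UU(1,1)$ has two distinct eigenvalues on the unit circle (Proposition~\ref{prop-U11}), hence is diagonalizable with $\sup_n\|T_z^n\|<\infty$; by $p$-periodicity $T_{z,[0,m]}$ differs from $T_z^{\lfloor m/p\rfloor}T_{z,0}$ by one of the $p$ fixed factors $T_{z,1},\dots,T_{z,p}$, so it is bounded uniformly in $m$, locally uniformly in $z\in\Sigma$. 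Applying Theorem~\ref{th-main2} (with exponent $2$) on each closed subarc of $\Sigma$, and using that $\sigma_{ac}$ is closed, yields $\overline\Sigma\subseteq\sigma_{ac}(\Uu)$; since $\nu^{(1)}=0$, in fact $\mu^{(1)}$ itself is purely absolutely continuous on $\Sigma$.

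\emph{Floquet theory and the essential spectrum.} Let $\Uu_\ZZ=\Ww_\ZZ\Vv_\ZZ$ be the bi-infinite $p$-periodic scattering zipper obtained by extending $(V_n),(W_n)$ periodically over $n\in\ZZ$. Decoupling the connection $W_0$ (replacing it by the identity on the two relevant modes) is a finite-rank, hence compact, perturbation turning $\Uu_\ZZ$ into a direct sum $\Uu_{(-\infty,-1]}\oplus\Uu_{[0,\infty)}$, with $\Uu_{[0,\infty)}$ unitarily equivalent to $\Uu$ up to a boundary condition; since $\sigma_{ess}$ is stable under compact perturbations, $\sigma_{ess}(\Uu)=\sigma_{ess}(\Uu_\ZZ)$. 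Bloch--Floquet theory decomposes $\Uu_\ZZ$ as a direct integral of $2p$-dimensional operators with quasiperiodic boundary conditions, whose union of spectra is $\{z\in\UU(1): T_z \text{ has an eigenvalue on }\UU(1)\}=\{z\in\UU(1):|\Tr T_z|\leq2\}$ (using $|\det T_z|=1$ and Proposition~\ref{prop-U11}); the band functions are real-analytic and, a $2\times2$ matrix having only two eigenvalues, nonconstant, so $\Uu_\ZZ$ has purely a.c.\ spectrum $\{|\Tr T_z|\leq2\}$. Arguing as for the discriminant of a periodic CMV matrix, this set has no isolated ``zero-width'' points, i.e.\ it coincides with $\overline\Sigma$. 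Hence $\sigma_{ess}(\Uu)=\overline\Sigma$, and since $\ell^2(\GG)$ is infinite dimensional $\sigma_{ess}(\Uu)\neq\emptyset$, so $\Sigma\neq\emptyset$.

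\emph{The remaining spectrum.} Combining the two previous paragraphs, $\overline\Sigma\subseteq\sigma_{ac}(\Uu)\subseteq\sigma_{ess}(\Uu)=\overline\Sigma$, so $\sigma_{ac}(\Uu)=\sigma_{ess}(\Uu)=\overline\Sigma$, which is the last displayed identity. The set $\sigma(\Uu)\setminus\overline\Sigma$ consists of isolated eigenvalues of finite multiplicity lying in the finitely many gaps; since $\Uu$ is, up to the finite-rank modification above, a half-line piece of the eigenvalue-free operator $\Uu_\ZZ$, each gap contains only finitely many of them (boundedly many, by eigenvalue counting for finite-rank perturbations, or via an analytic Jost/Wronskian function on the gap), so this set is finite. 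There are no eigenvalues in $\overline\Sigma$: inside $\Sigma$ this is the purely-a.c.\ statement above together with the fact that $e_{(0,-)}$ is cyclic for $\Uu$ in the scattering zipper case (cf.\ Proposition~\ref{prop-inf-dim}, now with $P_n=0$, so the cyclic space is all of $\ell^2(\GG)$), while at the finitely many points of $\partial\overline\Sigma$ the matrix $T_z$ has its (double) eigenvalue on $\UU(1)$, so no generalized solution of $\Uu\Psi=z\Psi$ is square summable. Thus the only singular part of the spectral measure is carried by the finite set $\sigma(\Uu)\setminus\overline\Sigma$, and away from it the spectrum of $\Uu$ is purely absolutely continuous and equal to $\overline\Sigma$.

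\emph{Main obstacle.} The two genuinely delicate points are: (i) the identification $\{z\in\UU(1):|\Tr T_z|\leq2\}=\overline\Sigma$, i.e.\ the absence of isolated ``zero-width bands'' where $|\Tr T_z|$ touches $2$ without crossing, which requires the oscillation/interlacing structure of the discriminant exactly as for periodic CMV matrices; and (ii) the finiteness (rather than mere discreteness) of the eigenvalues in the gaps, which one obtains either from the finite-rank comparison with $\Uu_\ZZ$, or from analyticity of the relevant Wronskian up to the band edges, where the contracting eigendirection of $T_z$ acquires a square-root branch point. Everything else is a routine assembly of Theorems~\ref{th-main}--\ref{th-main2}, the $\UU(1,1)$ trichotomy, and standard perturbation/Floquet theory.
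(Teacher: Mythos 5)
Your overall skeleton — use Theorem~\ref{th-main2} to get $\overline\Sigma\subseteq\sigma_{ac}(\Uu)$, bound $\sigma_{ess}(\Uu)$ from above by $\overline\Sigma$, and control what is left — agrees with the paper's. But your route to the upper bound is genuinely different: you pass to the bi-infinite periodic operator $\Uu_\ZZ$, use Bloch--Floquet decomposition to identify $\sigma(\Uu_\ZZ)=\{|\Tr T_z|\le 2\}$, and argue $\sigma_{ess}(\Uu)=\sigma_{ess}(\Uu_\ZZ)$ by compact perturbation. The paper instead works entirely on the half-line: it shows (Lemma~\ref{lem-4}) that eigenvalues are exactly the $z$ for which $T_{z,0}\smat{1\\1}$ lies in the contracting eigendirection of $T_z$, deduces there are finitely many of them via the analyticity of $z\mapsto\det\bigl(T_zT_{z,0}\smat{1\\1},\,T_{z,0}\smat{1\\1}\bigr)$, and then shows directly (Lemma~\ref{lem-5}) that every $z$ with $|\Tr T_z|>2$ that is not such an eigenvalue lies in the resolvent set, by estimating $\int\|T_{z,[0,np]}\smat{1\\1}\|^{-2}d\varphi\to 0$ through the hyperbolic normal form of $T_z$.

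This is where your argument has a genuine gap. The Floquet route needs the identification $\{|\Tr T_z|\le2\}=\overline\Sigma$, i.e.\ the absence of isolated points where $|\Tr T_z|$ touches $2$ without crossing. You flag this as delicate but do not prove it, and it is not an off-the-shelf citation: the interlacing/degree count for periodic CMV would have to be re-derived for the more general two-sided periodic scattering zipper. The paper's half-line route sidesteps this entirely, because any such touching point lies in the interior of $\{|\Tr T_z|\ge2\}$, which Lemmas~\ref{lem-3}--\ref{lem-5} show contains no essential spectrum regardless. Your finiteness of gap eigenvalues is likewise only gestured at; the paper actually proves it through the Wronskian-type determinant above. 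A smaller slip: to rule out $\Tr T_z$ being a monomial you exhibit only the nonzero $z^p$ coefficient, which is not enough on its own; you also need a second nonzero coefficient at a different power (the $z^{-p}$ coefficient equals $\prod_n\det(V_nW_n)\,\bbb_n^{-1}b_n^{-1}\neq0$), which is essentially what the paper's computation of $\lim_{z\to0}\Tr(z^pT_z)$ provides.
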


Now we consider random $\ell^2$ perturbations of $(V_n)_n$. This means, let $(\Omega,\Aa,\PP)$ be a probability space and let $\widehat V_n,\,\widehat W_n \,:\,\Omega \to \UU(2)$, be unitary matrix valued random variables and 
consider the random perturbations where $W_n$ and $V_n$ are replaced by $\widehat W_n$ and $\widehat V_n$ so that
\begin{equation}\label{eq-Uu_omega}
\Uu_\omega\,=\, \Ww_\omega\,\Vv_\omega \qtx{where} \Vv_\omega\,=\, \bigoplus_{n=0}^\infty \widehat V_n(\omega)\;,\quad
\Ww_\omega\,=\,u\oplus \bigoplus_{n=1}^\infty \widehat W_n.
\end{equation}
As in \eqref{eq-VnWn} we define the entries 
$$
\widehat V_n\,=\,\pmat{\hat \aaa_n & \hat \bbb_n \\ \hat \ccc_n & \hat \ddd_n} \in \UU(2)  \qtx{and}
\widehat W_n\,=\,\pmat{\hat a_n & \hat b_n \\ \hat c_n & \hat d_n} \in \UU(2),
$$
and the corresponding transfer matrices will be denoted by 
$$
\widehat T_{z,n}^\sharp\,=\,\pmat{z^{-1}(\hat \ccc_n-\hat \ddd_n \hat \bbb_n^{-1} \hat \aaa_n) & \hat \ddd_n \hat \bbb_n^{-1} \\ -\hat \bbb_n^{-1} \hat \aaa_n & z \hat \bbb_n^{-1}}\,,\quad
\widehat T_{n}^\flat\,=\,\pmat{\hat c_n-\hat d_n \hat b_n^{-1} \hat a_n & \hat d_n \hat b_n^{-1} \\ -\hat b_n^{-1} \hat a_n & \hat b_n^{-1}},
$$
and 
$$
\widehat T_{z,0}\,=\,\widehat T_{z,0}^\sharp\,, \quad \widehat T_{z,n}\,=\,\widehat T_{z,n}^\sharp \widehat T_{n}^\flat \qtx{for} n\geq 1\,.
$$
 We assume that  the following conditions hold.\\[.2cm]
(C1) The family of pairs $\displaystyle \{ (\widehat V_n, \widehat W_n)\}_{n=0}^\infty$ is independent\footnote{equivalently, one may state that the family of transfer matrices $(\widehat T_{z,n})_n$ is independent, $\widehat V_n$ and $\widehat W_n$ may have correlations. } . \\[.2cm]
(C2)  $\displaystyle \sum_{n=1}^\infty \left( \big\| \EE\big(\widehat V_n\big) - V_n \big\| + \EE\big(\|\widehat V_n-V_n\|^2\big) + 
\big\|\EE\big(\widehat W_n\big) - W_n \big\| + \EE\big(\|\widehat W_n-W_n\|^2\big) \right)\,<\,\infty$.\\[.2cm]
(C3) $\exists\, \varepsilon>0,\; \forall n\in\ZZ_+\,:\, |\hat b_n|>\varepsilon\,\wedge\, |\hat \bbb_n|>\varepsilon$ almost surely.

\vspace{.3cm}
Note, formally $\widehat W_0$ does not appear in the operator, so one may define it as some deterministic matrix for the purpose of assumption (C1).
Assumption (C2) makes sure that the perturbation is Hilbert Schmidt. 
We get the following analogue to Kiselev-Last-Simon's result for decaying potentials on the line, \cite[Theorem~8.1]{KiLS}.
\begin{Theo}\label{th-main3}
Assume that {\rm (C1)}, {\rm (C2)}, {\rm (C3)} hold.
Then, there is a set $\Omega'$  of probability one, $\PP(\Omega')=1$, such that for all $\omega \in \Omega'$,
the spectrum of $\Uu_\omega$ is purely absolutely continuous in $\Sigma$, and, $\sigma_{ac}(\Uu_\omega)\,=\,\overline{\Sigma}=\sigma_{ess}(\Uu_\omega)$.
\end{Theo}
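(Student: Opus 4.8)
The plan is to verify the hypothesis of Theorem~\ref{th-main2} on a sequence of compact arcs exhausting $\Sigma$, with a bound that is uniform in $n$ after averaging over $\omega$, and then to combine this with a compactness argument for the essential spectrum. Concretely, fix $p>1$ and a compact arc $K\subset\Sigma$. Since the finitely many extra factors $\widehat T_{z,n}$ with $Np<n\le(N+1)p$ are bounded on $K$ uniformly in $\omega$ by (C3), it suffices to prove
\begin{equation*}
\sup_{N\in\NN}\ \EE\Big(\int_K\big\|\widehat T_{z,[0,Np]}\big\|^{2p}\,|dz|\Big)\ <\ \infty .
\end{equation*}
By Fatou this makes $\liminf_N\int_K\|\widehat T_{z,[0,Np]}\|^{2p}|dz|$ finite for $\PP$-almost every $\omega$, hence $\liminf_n\int_{\varphi_0}^{\varphi_1}\|\widehat T_{e^{i\varphi},[0,n]}\|^{2p}d\varphi<\infty$ for every subarc of $K$, so Theorem~\ref{th-main2} applies to it. Exhausting the open set $\Sigma$ by countably many such compact arcs and intersecting the corresponding full-probability events yields one event $\Omega'$ with $\PP(\Omega')=1$.

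The moment bound is the technical core. By Proposition~\ref{prop-sigma-ac} and \eqref{eq-sigma-ac}, for $z\in\Sigma$ the one-period transfer matrix $T_z\in\UU(1,1)$ is elliptic, so $T_z=M_z^{-1}R_zM_z$ with $R_z$ in the maximal compact subgroup $\UU(1)\times\UU(1)$ and $M_z,M_z^{-1}$ bounded on $K$ (the band edges $|\Tr T_z|=2$ lie outside $K$). Conjugating every perturbed transfer matrix by the fixed $M_z$ and grouping into periods, one writes $M_z\widehat T_{z,[0,Np]}M_z^{-1}=\Pi_N\,M_z\widehat T_{z,0}M_z^{-1}$, where $\Pi_0=\one$ and $\Pi_j=(R_z+B_{z,j})\Pi_{j-1}$, each $B_{z,j}$ being a polynomial in the increments $\widehat V_k-V_k$, $\widehat W_k-W_k$ over the $j$-th period; the inverses $\widehat\bbb_k^{-1},\widehat b_k^{-1}$ occurring in \eqref{eq-T-sharp-SZ}--\eqref{eq-T-flat-SZ} are bounded on $K$ by (C3), so $\{B_{z,j}\}_{j\ge1}$ is independent by (C1), $\sup_j\|B_{z,j}\|$ is bounded on $K$, and by (C2), uniformly on $K$, $\sum_j\|\EE(B_{z,j})\|<\infty$ and $\sum_j\EE(\|B_{z,j}\|^2)<\infty$. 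The bound then follows from a martingale estimate in the spirit of \cite[Theorem~8.1]{KiLS}: for a unit vector $v$ one has $\|\Pi_Nv\|^2=\|\Pi_{N-1}v\|^2+Y_N$ with $Y_N$ made of a term linear in $B_{z,N}$ and a nonnegative quadratic one, both of order $\|\Pi_{N-1}v\|^2\|B_{z,N}\|$; conditioning on the first $N-1$ blocks, on the event $\|B_{z,N}\|\le\delta$ a second-order Taylor expansion (remainder controlled by $\EE\|B_{z,N}\|^2$) together with the summability of $\|\EE(B_{z,N})\|$ controls the drift, while the complementary event, of probability $\le\delta^{-2}\EE\|B_{z,N}\|^2$, is handled crudely from the uniform bound on $\|B_{z,N}\|$. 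This gives $\EE\big(\|\Pi_Nv\|^{2p}\mid B_{z,1},\dots,B_{z,N-1}\big)\le\|\Pi_{N-1}v\|^{2p}(1+\eta_N)$ with $\sum_N\eta_N<\infty$ (by (C2) and Chebyshev), hence $\sup_N\EE\|\Pi_N\|^{2p}<\infty$ uniformly on $K$; absorbing the bounded factors $M_z,M_z^{-1},\widehat T_{z,0}$ then yields the required bound. Carrying this argument out cleanly in the $\UU(1,1)$ setting with a periodic background — in particular keeping every error term uniform on $K$ using only (C2)--(C3) — is the main obstacle.

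For the conclusion, fix $\omega\in\Omega'$, shrunk if needed so that (C3), and hence (A1)--(A2), hold. By \eqref{eq-T-sharp-SZ}, for a scattering zipper $\beta_{z,n}=z\,\overline{\widehat\ccc_n}$, which is nonzero on $\UU(1)$ since $|\widehat\ccc_n|=|\widehat\bbb_n|>\varepsilon$; thus $\Aa=\emptyset$ and, by Theorem~\ref{th-main}, $\nu_\omega\equiv0$. Moreover $e_{(0,-)}$ is cyclic for $\tilde\Uu_\omega=\Vv_\omega\Ww_\omega$, hence for $\Uu_\omega=\Ww_\omega\tilde\Uu_\omega\Ww_\omega^{-1}$ (the channel together with (A1) at each site reaches every $e_{(n,\pm)}$, which for a scattering zipper form a basis of $\ell^2(\GG)$), so the spectral measure $\mu_\omega$ at $e_{(0,-)}$ realizes the full spectral type of $\Uu_\omega$. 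Hence, by the first step and Theorem~\ref{th-main2}, $\mu_\omega=\mu_\omega-\nu_\omega$ is purely absolutely continuous on $\Sigma$ with locally $L^p$ density, and $\overline\Sigma\subset\sigma_{ac}(\Uu_\omega)$ on taking closures of the subarcs. Finally (C2) forces $\sum_n(\|\widehat V_n(\omega)-V_n\|^2+\|\widehat W_n(\omega)-W_n\|^2)<\infty$ $\PP$-almost surely, so $\Vv_\omega-\Vv$ and $\Ww_\omega-\Ww$ are Hilbert--Schmidt and $\Uu_\omega-\Uu=\Ww_\omega(\Vv_\omega-\Vv)+(\Ww_\omega-\Ww)\Vv$ is compact; Weyl's theorem and Proposition~\ref{prop-sigma-ac} then give $\sigma_{ess}(\Uu_\omega)=\sigma_{ess}(\Uu)=\overline\Sigma$. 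Combining, $\sigma_{ac}(\Uu_\omega)\subset\sigma_{ess}(\Uu_\omega)=\overline\Sigma\subset\sigma_{ac}(\Uu_\omega)$, i.e.\ $\sigma_{ac}(\Uu_\omega)=\overline\Sigma=\sigma_{ess}(\Uu_\omega)$, with the spectrum purely absolutely continuous in $\Sigma$.
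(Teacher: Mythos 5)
Your proposal follows essentially the same route as the paper's proof: diagonalize $T_z=M_z^{-1}R_zM_z$ on compact arcs of $\Sigma$, set up a Markov-type moment bound for $\widehat T_{z,[0,Np]}$ using independence (C1), the summability in (C2), and the uniform bound from (C3), invoke Fatou plus Theorem~\ref{th-main2} to get a.c.\ spectrum almost surely, and finish with cyclicity of $e_{(0,-)}$ together with Weyl's theorem and Proposition~\ref{prop-sigma-ac} for the essential spectrum. The paper's Lemma~\ref{lem-2} carries out the moment estimate directly with $p=2$ (fourth moments of $\|\vv_n\|$) rather than the general-$p$ conditional Taylor expansion you sketch, and its Lemma~\ref{lem-1} supplies the full induction for your asserted cyclicity, but the logical skeleton and all the key inputs coincide.
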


\begin{rem}
\mbox{}
\begin{enumerate}[{\rm (i)}]
\item In order to quickly adopt the simple argument from Kiselev Last Simon \cite{KiLS}  one needs the technical 
condition
$$
\sum_{n=0}^\infty 
\Big( \big\| \EE(\Delta T_{z,n} \big\|\,+\, \EE\big( \|\Delta T_{z,n}\|^2\,+\,\|\Delta T_{z,n}\|^4\,\big)\Big)
\,<\infty\,.
$$
where
$$
\Delta T_{z,n}\,=\, \widehat T_{z,[np+1,(n+1)p]}- T_{z,[np+1,(n+1)p]} \,,
$$
for any $z\in \Sigma$, uniformly for $z$ in compact subsets of $\Sigma$. Assumptions (C1), (C2), (C3) guarantee this.
Without assumption (C3) one could have $\hat b_n$ or $\hat \bbb_n$ closer and closer to zero with smaller and smaller positive probabilities, such that (C2) is satisfied but not (C3) and also not the technical condition needed as stated above, as the inverses of $\hat b_n, \hat \bbb_n$ appear in the transfer matrices.

\item Using techniques from \cite{GS} one can get rid of assumption (C3) with probabilistic arguments.
For an analogue of Theorem~\ref{th-main3} for general one-channel operators the corresponding condition (C3)  would be very technical. Again, using techniques from \cite{GS}  assumptions (C1), (C2) are sufficient. However, the proof  would be much more technical and will be dealt with elsewhere.

\item  Theorem~\ref{th-main3} and its proof also work for deterministic perturbations where $\widehat V_n=\EE(\widehat V_n)$, $\widehat W_n=\EE(\widehat W_n)$.
But in this situation, condition (C2) actually states that the perturbation is trace class.

\item In other papers for the Hermitian case one typical has the assumptions like $\EE(\widehat V_n)=V_n,\,\EE(\widehat W_n)=W_n$. However, here, due to the formuals for the transfer matrices, such a condition would not imply $\EE(\widehat T_{z,n})=T_{z,n}$. Therefore, allowing different expectations and adjusting the condition as in (C2) makes no difference in the proofs.
\end{enumerate}
\end{rem}

\section{Examples \label{sec-examples} } 

\subsection{One-dimensional quantum walks\label{sub:QW}}

Typically, a one dimensional quantum walk is a unitary operator defined on the Hilbert space $\HH=\ell^2(\ZZ) \otimes \CC^2 \cong \ell^2(\ZZ \times \{ \uparrow, \downarrow\})$, 
given by a product $\Uu=\Ss\Cc$, where $\Cc$ is a direct sum of coins,
$\displaystyle\Cc=\bigoplus_{n\in\ZZ} C_n$, $C_n \in \UU(2)$  and $\Ss$ shifts spin ups forward, $\Ss \delta_{(n,\uparrow)}=\delta_{(n+1,\uparrow)}$ and spin downs backward, 
$\Ss \delta_{(n,\downarrow)}=\delta_{(n-1,\downarrow)}$.
This means, for $\psi=(\psi_n)_n \in \HH$, $\psi_n=\smat{\psi_{n,\uparrow} \\ \psi_{n,\downarrow}}$ one has
$$
(\Cc  \psi)_n\,=\,C_n \psi_n\;, \qquad (\Ss \psi)_n\,=\, \pmat{\psi_{n-1,\uparrow} \\ \psi_{n+1,\downarrow}}.
$$
Note that $\Cc, \Ss \in \UU(\HH)$.
For a half-line version on $\HH_+=\ell^2(\ZZ_+\times\{\uparrow,\downarrow\})$ one may change the definition of $\Ss$ slightly at $n=0$ by $(\Ss \psi)_0=\smat{\psi_{0,\downarrow} \\ \psi_{1,\downarrow}}$.

\begin{figure}[h!]
\begin{center}
\includegraphics[scale=0.35]{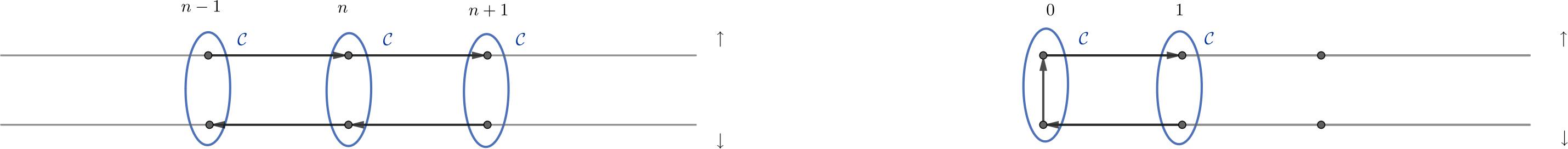}
\vspace*{-0.2cm}
\caption{The action of the shift  $\Ss$  and the coin operator $\Cc$ on $\HH$ and $\HH_+$.} \label{fig1}
\end{center}
\end{figure}

The half-line version can be transferred to our setup described above using the shells $\Sb_n= \{(n,\uparrow), (n,\downarrow)\}$. First, we define an adequate operator $\Ww$ by
\begin{equation}\label{eq-def-Ww-QW}
(\Ww \psi)_n \,=\, \pmat{ \psi_{n-1,\downarrow} \\ \psi_{n+1,\uparrow}} \qtx{for $n\geq 1$,} 
(\Ww \psi)_0\,=\, \pmat{\psi_{0,\uparrow} \\ \psi_{1,\uparrow}},
\end{equation}
then
\begin{equation*}
\Ww^2=I\;, \quad \Ww \Ss\,=\, \bigoplus_{n=0}^\infty \pmat{0 & 1 \\ 1 & 0}.
\end{equation*}

\begin{figure}[h!]
\begin{center}
\includegraphics[scale=0.25]{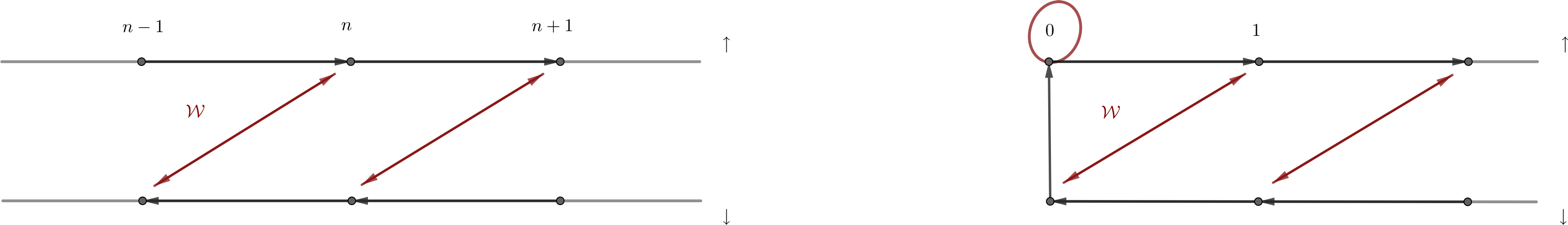}
\vspace*{-0.2cm}
\caption{The action of  $\Ww$   on $\HH$ and $\HH_+$.} \label{fig2}
\end{center}
\end{figure}

\begin{figure}[h!]
\begin{center}
\includegraphics[scale=0.25]{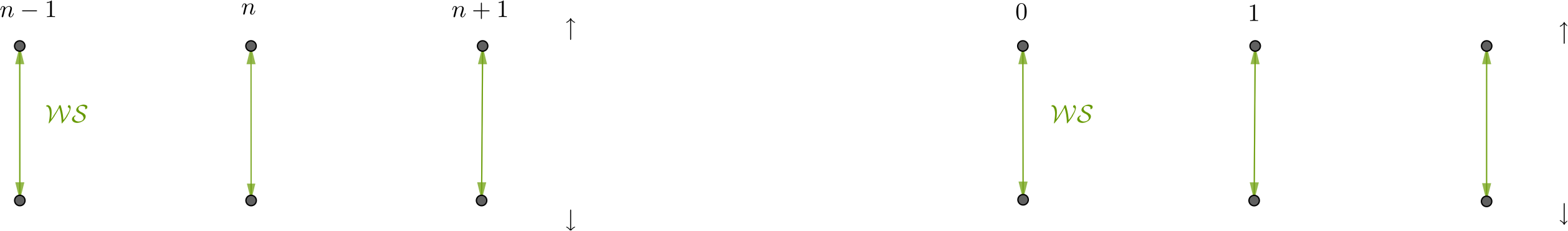}
\vspace*{-0.2cm}
\caption{The action of  $\Ww\Ss$   on $\HH$ and $\HH_+$.} \label{fig3}
\end{center}
\end{figure}

Therefore, defining 
$$\Vv= \Ww \Ss \Cc \qtx{and} V_n = \pmat{0 & 1 \\ 1 & 0} C_n\, \in\, \UU(2), $$ we find
\begin{equation}\label{eq-form-Uu-QW}
\Uu\,=\, \Ss\Cc\,=\, \Ww^2 \Ss \Cc\,=\, \Ww\,\Vv \qtx{where} \Vv=\bigoplus_{n=0}^\infty  V_n.
\end{equation}

Note, $\Ww$ interchanges $\delta_{(n,\downarrow)}$ with $\delta_{(n+1, \uparrow)}$, therefore, with
$e_{(n,+)}=\delta_{(n,\downarrow)}$, $e_{(n,-)}= \delta_{(n,\uparrow)}$ we have the structure as above in \eqref{eq-V}, \eqref{eq-W}, \eqref{eq-U} (see figure~\ref{fig3}) with 
\begin{equation*}
W_n=\pmat{0 & 1 \\ 1 & 0} \qtx{and} Q_n=\one \qtx{implying} P_n=\nul.
\end{equation*}
Using \eqref{eq-T-flat} this leads to
$$
T_n^\flat=\varphi_\flat(W_n)=\one\, \qtx{and} \Phi_{(n,+)}=\Psi_{(n+1,-)}=\Psi_{n+1,\uparrow}.
$$
Therefore, using  \eqref{eq-T-sharp-1}, \eqref{eq-T-sharp-2} and \eqref{eq-rel-Tr} we find for $\Uu \Psi= z \Psi$ the transfer matrix relation
$$
\pmat{\Psi_{n+1,\uparrow} \\ \Psi_{n,\downarrow}}=T_{z,n} \pmat{\Psi_{n,\uparrow} \\ \Psi_{n-1,\downarrow}},
$$
with
$$
T_{z,n}=\varphi_\flat(z^{-1} V_n)\,=\,\frac{1}{\overline{r_n}} \pmat{z^{-1} \omega_n & t_n \\ \overline{t_n} & z \overline{\omega_n}}\qtx{where} C_n=\omega_n \;\pmat{r_n&t_n \\ -\overline{t_n}& \overline{r_n} }\,,
$$
$|\omega_n|=1$ and $|t_n|^2+|r_n|^2=1$.

\subsection{Generalized one-channel quantum walks}

Let us now define generalized one-channel quantum walks.
As in the previous case, we have the general partition $$\displaystyle\HH=\ell^2(\GG)=\bigoplus_{n=0}^\infty \ell^2(\Sb_n),\quad 2\leq |\Sb_n|<\infty.$$ 
Within each $\Sb_n$ we assign some `spin up' and `spin down' orbitals, \hbox{$(n,\uparrow)$}, \hbox{$(n,\downarrow)$} $\in \Sb_n$ which are distinct. 
Note that, despite using notaitons `spin up' and `spin down', $\Sb_n$ may have many more orbitals. 
In order to define the quantum walk we just pick two for each $n$. 
As we want to show the analogue structure with the quantum walk and essentially use the same definition for the operators $\Ss$ and $\Ww$,
we use the same notation here.\\
We have a coin operator
$$
\Cc\,=\,\bigoplus_{n=0}^\infty C_n \qtx{where}
C_n \in \UU(\Sb_n),
$$
and a shift operator $\Ss$ defined by
$$
\Ss \delta_{(n,\uparrow)} = \delta_{(n+1,\uparrow)} \qtx{for all $n \in \ZZ_+$ and} \Ss \delta_{(n,\downarrow)} = \begin{cases} \delta_{(n-1,\downarrow)} & \text{for}\,\ n\geq 1 \\ 
\delta_{(0,\uparrow)} & \text{for}\,\ n=0,\end{cases}
$$
and in the orthogonal complement of the span of the $\delta_{(n,\uparrow)}, \delta_{(n,\downarrow)}$, $\Ss$ acts as identity.
\bigskip

Of course one may define a one-channel analogue of the $\ZZ$-quantum walk on the full line, where the extra case $n=0$ in the definition of $\Ss$ is not needed. 
 As before, the quantum walk is given by the unitary operator $\Uu=\Ss\Cc$.
As above, we define $\Ww$ as in \eqref{eq-def-Ww-QW} being the operator that interchanges $e_{(n,\downarrow)}$ with $e_{(n+1,\uparrow)}$ for $n\in\ZZ_+$, and $\Ww$  acts as identity on the orthogonal complement. Then, as in \eqref{eq-form-Uu-QW} we have a one-channel operator where
$e_{(n,+)}=\delta_{(n,\downarrow)}$, $e_{(n,-)}= \delta_{(n,\uparrow)}$,
$$
\Uu=\Ww \Vv \qtx{with} \Vv=\Ww \Ss \Cc = \bigoplus_{n=0}^\infty V_n \qtx{and} V_n=S_n C_n\;.
$$
Here, $S_n\in \UU(\Sb_n)$ is the unitary operator which interchanges $\delta_{(n,\downarrow)}$ and $\delta_{(n,\uparrow)}$ and acts as identity on the orthogonal complement. Note that here as well we have $W_n=\smat{0&1\\1&0}$  and we get
$$
\pmat{\Psi_{n+1,\uparrow} \\ \Psi_{n,\downarrow}}=T_{z,n} \pmat{\Psi_{n,\uparrow} \\ \Psi_{n-1,\downarrow}},
$$
where
$$
T_{z,n}\,=\,\varphi_\sharp\big(Q_n^*(z^{-1}V_n-P_n)^{-1} Q_n\big),
$$
and
$$
Q_n= \pmat{\delta_{(n,\uparrow)} & \delta_{(n,\downarrow)}},\; 
P_n=I_{\Sb_n}-  Q_n Q_n^* = I_{\Sb_n}- |\delta_{(n,\uparrow)} \rangle\langle \delta_{(n,\uparrow)}|\,-\,
|\delta_{(n,\downarrow)} \rangle\langle \delta_{(n,\downarrow)}|.
$$
A particular example of this kind could be a quantum walk on an infinite carbon chain \cite{PC-Carbon} where 
$\ell^2(\Sb_n)$ corresponds to the valence electron states of the carbon atom at position $n$ and possibly other atoms connected to it, $\delta_{(n,\uparrow)}, \delta_{(n,\downarrow)}$
are some orbitals that walk forward or backward. \\

\begin{figure}[ht!]
\begin{center}
\includegraphics[scale=0.5]{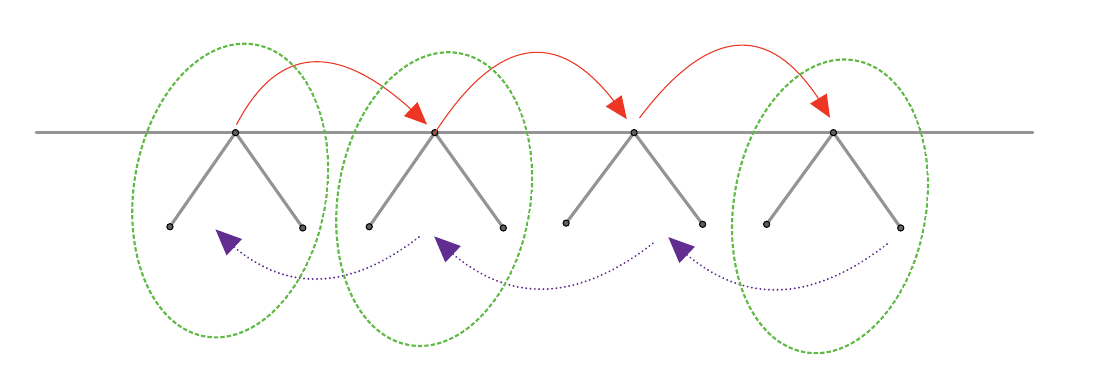}
\caption{A quantum walk on a carbon chain: The vertices on the line above are carbon atoms and the connected ones below may be for example hydrogen atoms. The `shells' correspond to the valence electron states of the groups of atoms indicated by the circles. In each group, one state shifts to the right, and another one to the left.} \label{fig4}
\end{center}
\end{figure}

\subsection{Stroboscopic unitary dynamics on ${\mathbb Z}^2$}

We consider a product of two unitary operators of the form $\Uu={\mathcal W}{\mathcal V}$ acting on $\ell^2 ({\mathbb Z}^2)$, where ${\mathcal V}$ is a configuration of the Chalker-Coddington model \cite{ABJ1, ABJ2} displaying a shell structure and ${\mathcal W}$ allowing transitions between the shells. 
Again, we consider the partition of ${\mathbb Z}^2$ in shells
  $$\displaystyle\HH=\bigoplus_{n=0}^\infty \ell^2(\Sb_n),$$
with a special configuration 
\begin{align*}
{\mathbb S}_n &= \big\{ (j_1,j_2) \in {\mathbb Z}^2 ; \| (j_1 , j_2) -(-\tfrac12, -\tfrac12) \|_{\infty} = n+\tfrac12 \big\},\quad |{\mathbb S}_n |= 4(2n+1) .
\end{align*}
Here $\| j\|_\infty=\max\{|j_1|,|j_2|\}$ for $j=(j_1,j_2)\in\ZZ^2$.
As above, let $(\Pp_{\Sb_n})_{n\in {\mathbb Z}_+}$ be the family of orthogonal projectors subordinated to the shells, $\displaystyle \Pp_{\Sb_n} = \sum_{j\in {\mathbb S}_n} | \delta_j \rangle \langle \delta_j | $.
 Similarly as in \eqref{eq-V} the unitary operator ${\mathcal V}$ is defined by
\begin{align*}
{\mathcal V} = \bigoplus_{n\in \ZZ_+}\, V_n,\quad \text{where}\quad  V_n = \Pp_{n} \Vv \Pp_{n},
\end{align*}
and $V_n$ is a
\begin{itemize}
\item clockwise shift for $n\in2\ZZ_+$  ($n$ even) where
$$
\begin{cases}
V_n\delta_{n,l}&= \delta_{n,l-1}\ , \qquad \quad\quad  l\in \{ -n, \ldots, n\}\\
V_n \delta_{l,-n-1} &= \delta_{l-1,-n-1}\ , \quad\quad \ l\in \{ -n, \ldots, n\} \\
V_n \delta_{-n-1,l} &= \delta_{-n-1,l+1}\ , \quad\quad\ l\in \{ -n-1, \ldots, n-1\} \\
V_n \delta_{l,n} &= \delta_{l+1,n}\ , \qquad \quad\quad  l\in \{ -n-1, \ldots, n-1\}.
\end{cases}
$$

\item counter-clockwise shift for $n\in2\ZZ_+ +1$ ($n$ odd) where

$$
\begin{cases}
V_n\delta_{n,l}&= \delta_{n,l+1}\ ,\;\;\quad \quad\quad l\in \{ -n-1, \ldots, n-1\} \\
V_n \delta_{l,n} &=  \delta_{l-1,n}\ , \quad \quad\quad \;\; l\in \{ -n, \ldots, n\} \\
V_n \delta_{-n-1,l} &= \delta_{-n-1,l-1}\ ,\quad\quad  l\in \{ -n, \ldots, n\} \\
V_n \delta_{l,-n-1} &=\delta_{l+1,-n-1}\ ,\quad\quad l\in \{ -n-1, \ldots, n-1\}. 
\end{cases}
$$
\end{itemize}

\vspace{.2cm}

See the example in the illustrations below in figure~\ref{fig5} for more details. We also could add some phases along the shifts.
To describe the operator $\Ww$, we pick two sequences $(a_n)_{n\in {\mathbb Z^*_+}}, (b_n)_{n\in {\mathbb Z_+^*}}$ in ${\mathbb Z}^2$ so that $a_n \in {\mathbb S}_{n-1},\ b_n \in {\mathbb S}_{n}$ with $a_{n+1} \neq b_n$ and $\| a_n - b_n\|_{\infty} =1$.
We connect the shells  through the vectors 
$e_{(n-1,+)} = \delta_{a_n}$ and $e_{(n,-)} = \delta_{b_n}$. 
Then we define the corresponding orthogonal projectors 
$$\Qq_n=| \delta_{a_{n}} \rangle \langle \delta_{a_{n}} | + | \delta_{b_n} \rangle \langle \delta_{b_n} |,\quad \Qq=\sum_{n\in\ZZ_+^*} \Qq_n,$$
and the unitary operator ${\mathcal W}$ is defined by
\begin{align*}
{\mathcal W} = \Qq^{\perp} \oplus \bigoplus_{n\in {\mathbb Z}_+^*} W_n,\quad W_n = \Qq_n {\mathcal W} \Qq_n,
\end{align*}
where $W_n$ are $2\times 2$ unitary  matrices.

\begin{figure}[h!]
\begin{center}
\includegraphics[scale=0.25]{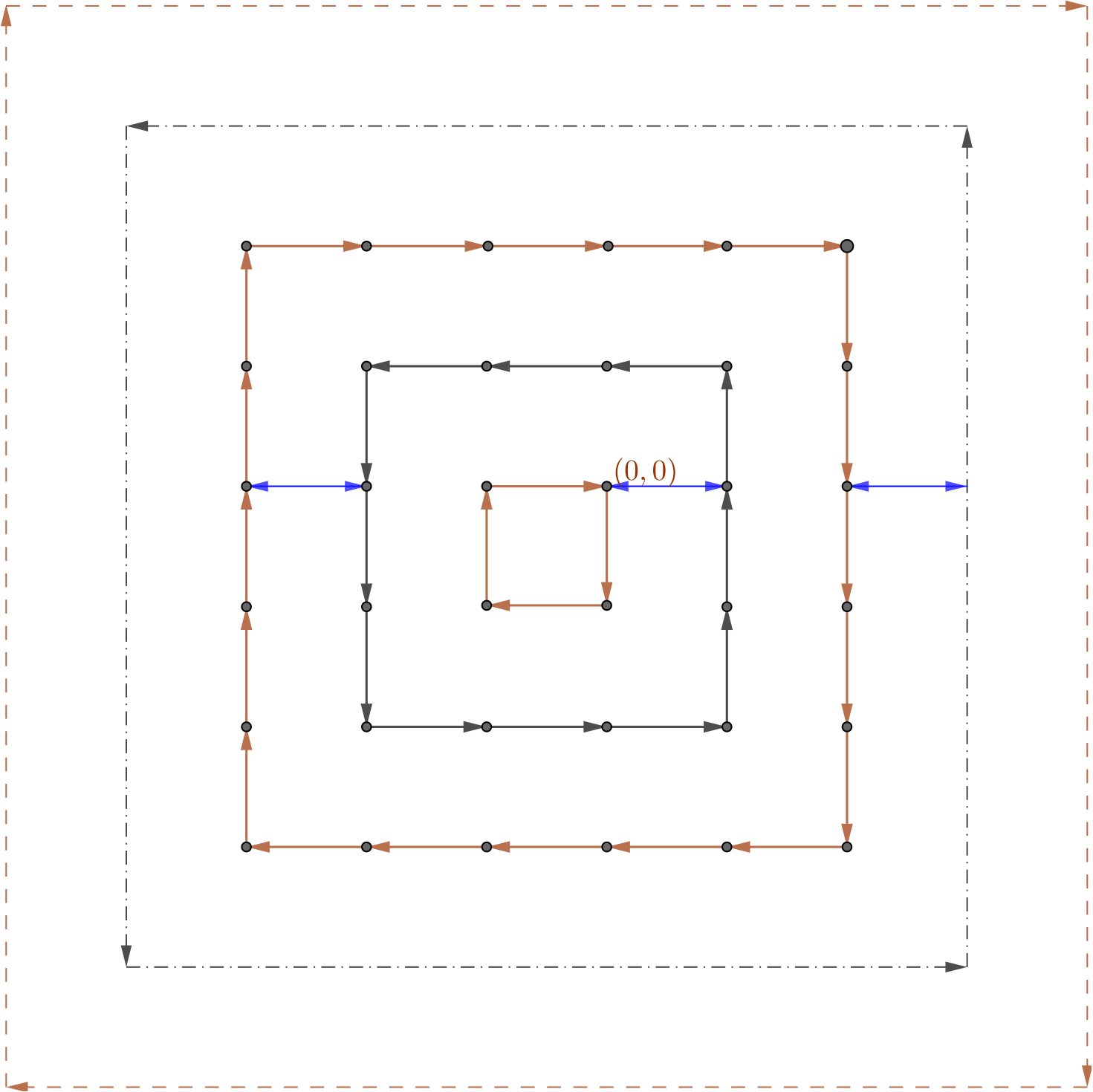}
\vspace*{-0.2cm}
\caption{The Stroboscopic model with a particular configuration where $a_{2n} = (-2n,0)\in \Sb_{2n-1},\ a_{2n+1}= (2n,0)\in \Sb_{2n},\ b_{2n} = (-2n-1,0)\in \Sb_{2n}$ for $n\in {\mathbb Z_+^*}$, \ and  $b_{2n+1} = (2n+1,0)\in \Sb_{2n+1}$ for $ n\in {\mathbb Z}_+ $.} \label{fig5}
\end{center}
\end{figure}

\section{Transfer matrix and Resolvent\label{sec-tr}}

In this section we want to relate the Green's function of the restrictions
$\Uu^{(u,v)}_N$, $\tilde \Uu^{(u,v)}_N$ to
the transfer matrix from $0$ to level $N$,  $T_{z,[0,N]}$.
Therefore, we introduce the notations
\begin{equation}
Q_{0,N}\,=\,\pmat{e_{(0,-)} & e_{(N,+)} }\,\in\,\CC^{\GG_N \times 2},\quad
P_{0,N}=\one_{\GG_N}-Q_{0,N} Q_{0,N}^*\,\in\,\CC^{\GG_N \times \GG_N},
\end{equation}
where we interpret $e_{(0,-)}$, $e_{(N,-)}$ as column vectors in $\CC^{\GG_N}$.
Then,  we define the boundary resolvent matrix from $0$ to $N$ by
\begin{equation}
R^{(u,v)}_{z,[0,N]}\,=\, Q_{0,N}^* (z^{-1}\Uu^{(u,v)}_N -\one_{\GG_N})^{-1} Q_{0,N}
\end{equation}
\begin{equation}
\tilde R^{(u,v)}_{z,[0,N]}\,=\, Q_{0,N}^* (z^{-1} \tilde \Uu^{(u,v)}_N -\one_{\GG_N})^{-1} Q_{0,N} \;.
\end{equation}
Note that this means
$$
R^{(u,v)}_{z,[0,N]}\,=\,
\pmat{e_{(0,-)}^*(z^{-1} \Uu^{(u,v)}_N -\one_{\GG_N})^{-1} e_{(0,-)}  & 
e_{(0,-)}^* (z^{-1}  \Uu^{(u,v)}_N -\one_{\GG_N})^{-1}  e_{(N,+)} \\ 
e_{(N,+)}^* (z^{-1}  \Uu^{(u,v)}_N -\one_{\GG_N})^{-1} e_{(0,-)} & 
e_{(N,+)}^* (z^{-1} \Uu^{(u,v)}_N -\one_{\GG_N})^{-1} e_{(N,+) }}
$$
and similar for $\tilde R$, replacing $\Uu$ with $\tilde \Uu$.
We obtain the following relations.

\begin{proposition}\label{prop-rel-T-R}
For any $N$ we find
\begin{align*}
T_{z,[0,N]}\,&=\,\varphi_\sharp\big(Q_{0,N}^*(z^{-1}\Uu^{(1,1)}_{N}-P_{0,N})^{-1} Q_{0,N}\big) \\
&=\, \pmat{1 & 1\\ 0 & 1} \varphi_\sharp(R^{(1,1)}_{z,[0,N]})  \pmat{1 & 0  \\ -1 & 1} \\
&=\,\pmat{1 & 1\\ 0 & 1} \varphi_\sharp(\tilde R^{(1,1)}_{z,[0,N]})  \pmat{1 & 0  \\ -1 & 1}.
\end{align*}
More generally  for any $N$,  $u,v \in \CC$ and $z$ where all quantities are well defined,  we have

\begin{align*}
\varphi_\sharp\big(R^{(u,v)}_{z,[0,N]}\big)
&=\,\pmat{1&-1 \\ 0 & 1} \pmat{v & 0 \\ 0 & 1} T_{z,[0,N]} \pmat{ 1 & 0 \\ 0 & u^{-1}} \pmat{1&0\\ 1 & 1}\\
\varphi_\sharp\big(\tilde R^{(u,v)}_{z,[0,N]}\big)
&=\,\pmat{1&-1 \\ 0 & 1}\pmat{1&0\\0&v^{-1}} T_{z,[0,N]} \pmat{u&0\\0&1}\pmat{1&0\\ 1 & 1}.
\end{align*}

\end{proposition}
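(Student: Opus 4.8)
The plan is to run, on the finite graph $\GG_N$, the ``generalized eigenvalue equation'' argument of Proposition~\ref{prop-eig-equation} but with a source supported on the two boundary modes $e_{(0,-)},e_{(N,+)}$, and to transport that source through the transfer matrices. Fix $N$ and a value of $z$ for which $z^{-1}\Uu^{(u,v)}_N-\one_{\GG_N}$, $z^{-1}\tilde\Uu^{(u,v)}_N-\one_{\GG_N}$ and all $T_{z,n}$ with $n\le N$ exist; since every object below is rational in $z$, the general identities follow by analytic continuation. For $\xi=\smat{\xi_1\\ \xi_2}\in\CC^2$ put $\Psi=(z^{-1}\Uu^{(u,v)}_N-\one_{\GG_N})^{-1}Q_{0,N}\xi$, so $\Psi_{(0,-)},\Psi_{(N,+)}$ are the two components of $R^{(u,v)}_{z,[0,N]}\xi$, and set $\Phi=(\Ww^{(u,v)}_N)^*\Psi$. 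Exactly as in the proof of Proposition~\ref{prop-eig-equation}, $z^{-1}\Uu^{(u,v)}_N\Psi-\Psi=Q_{0,N}\xi$ is equivalent to $\Ww^{(u,v)}_N\Phi=\Psi$ together with $\Vv_N\Psi=z\Phi+z(\Ww^{(u,v)}_N)^{-1}Q_{0,N}\xi$; since $\Ww^{(u,v)}_N e_{(0,-)}=u\,e_{(0,-)}$ and $\Ww^{(u,v)}_N e_{(N,+)}=v\,e_{(N,+)}$, the extra term equals $z u^{-1}\xi_1 e_{(0,-)}+z v^{-1}\xi_2 e_{(N,+)}$. Reading these two equations shell by shell reproduces all the identities used to define the transfer matrices, except that $V_0\Psi_0=z\Phi_0$ becomes $V_0\Psi_0=z\Phi_0+z u^{-1}\xi_1 e_{(0,-)}$, $V_N\Psi_N=z\Phi_N$ becomes $V_N\Psi_N=z\Phi_N+z v^{-1}\xi_2 e_{(N,+)}$, and one keeps the boundary relations $\Psi_{(0,-)}=u\Phi_{(0,-)}$, $\Psi_{(N,+)}=v\Phi_{(N,+)}$.

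For $0<n<N$ the computations of \eqref{eq-eigenvalue-V} and of $T^\flat_n=\varphi_\flat(W_n)$ go through verbatim, so the bulk transport is $T_{z,n}=T^\sharp_{z,n}T^\flat_n$. On shell $0$, subtracting $P_0\Psi_0=P_0\Phi_0$ from $z^{-1}V_0\Psi_0=\Phi_0+u^{-1}\xi_1 e_{(0,-)}$ gives $(z^{-1}V_0-P_0)\Psi_0=Q_0Q_0^*\Phi_0+u^{-1}\xi_1 e_{(0,-)}$; using $Q_0^*(z^{-1}V_0-P_0)^{-1}e_{(0,-)}=\smat{\alpha_{z,0}\\ \gamma_{z,0}}$ (the first column of $Q_0^*(z^{-1}V_0-P_0)^{-1}Q_0$) this says $\smat{\Psi_{(0,-)}\\ \Psi_{(0,+)}}=\smat{\alpha_{z,0}&\beta_{z,0}\\ \gamma_{z,0}&\delta_{z,0}}\smat{\widetilde\Phi_{(0,-)}\\ \Phi_{(0,+)}}$ with $\widetilde\Phi_{(0,-)}:=\Phi_{(0,-)}+u^{-1}\xi_1$, and Proposition~\ref{prop-varphi} turns it into $\smat{\Phi_{(0,+)}\\ \Psi_{(0,+)}}=T_{z,0}\smat{\Psi_{(0,-)}\\ \widetilde\Phi_{(0,-)}}$. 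The same computation at shell $N$, with $\widetilde\Phi_{(N,+)}:=\Phi_{(N,+)}+v^{-1}\xi_2$, gives $\smat{\widetilde\Phi_{(N,+)}\\ \Psi_{(N,+)}}=T^\sharp_{z,N}\smat{\Psi_{(N,-)}\\ \Phi_{(N,-)}}$. Composing these with the $W$-block and $V$-block relations for $n=1,\dots,N$ exactly as in the definition of $T_{z,[0,N]}$ yields the master identity
$$\pmat{\widetilde\Phi_{(N,+)}\\ \Psi_{(N,+)}}=T_{z,[0,N]}\pmat{\Psi_{(0,-)}\\ \widetilde\Phi_{(0,-)}}.$$

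Substituting $\Phi_{(0,-)}=u^{-1}\Psi_{(0,-)}$ and $\Phi_{(N,+)}=v^{-1}\Psi_{(N,+)}$, the master identity becomes $\smat{v^{-1}(\Psi_{(N,+)}+\xi_2)\\ \Psi_{(N,+)}}=T_{z,[0,N]}\smat{1&0\\ 0&u^{-1}}\smat{1&0\\ 1&1}\smat{\Psi_{(0,-)}\\ \xi_1}$, and since $\smat{v^{-1}(\Psi_{(N,+)}+\xi_2)\\ \Psi_{(N,+)}}=\smat{v^{-1}&v^{-1}\\ 1&0}\smat{\Psi_{(N,+)}\\ \xi_2}$ with $\smat{v^{-1}&v^{-1}\\ 1&0}^{-1}=\smat{0&1\\ v&-1}$, this rearranges to $\smat{\Psi_{(N,+)}\\ \xi_2}=\smat{0&1\\ v&-1}T_{z,[0,N]}\smat{1&0\\ 0&u^{-1}}\smat{1&0\\ 1&1}\smat{\Psi_{(0,-)}\\ \xi_1}$. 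On the other hand, applying Proposition~\ref{prop-varphi} to $M=R^{(u,v)}_{z,[0,N]}$ and the relation $\smat{\Psi_{(0,-)}\\ \Psi_{(N,+)}}=R^{(u,v)}_{z,[0,N]}\smat{\xi_1\\ \xi_2}$ gives $\smat{\xi_2\\ \Psi_{(N,+)}}=\varphi_\sharp(R^{(u,v)}_{z,[0,N]})\smat{\Psi_{(0,-)}\\ \xi_1}$. As $\xi$ ranges over $\CC^2$ the vector $\smat{\Psi_{(0,-)}\\ \xi_1}$ ranges over $\CC^2$ whenever the $(1,2)$-entry of $R^{(u,v)}_{z,[0,N]}$ is nonzero (which fails only on a discrete set), so comparing the two and using $\smat{\xi_2\\ \Psi_{(N,+)}}=\smat{0&1\\ 1&0}\smat{\Psi_{(N,+)}\\ \xi_2}$ yields $\varphi_\sharp(R^{(u,v)}_{z,[0,N]})=\smat{1&-1\\ 0&1}\smat{v&0\\ 0&1}T_{z,[0,N]}\smat{1&0\\ 0&u^{-1}}\smat{1&0\\ 1&1}$, then for all $z$ by continuation. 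For $\tilde R^{(u,v)}_{z,[0,N]}$ one repeats with $z^{-1}\tilde\Uu^{(u,v)}_N\Phi-\Phi=Q_{0,N}\eta$: now $\Ww^{(u,v)}_N\Phi=\Psi$ carries no source while the $\Vv_N$-equation picks up $z\eta_1 e_{(0,-)}+z\eta_2 e_{(N,+)}$, and $\tilde R^{(u,v)}_{z,[0,N]}\eta$ returns $\Phi_{(0,-)},\Phi_{(N,+)}$; substituting $\Phi_{(0,-)}=u^{-1}\Psi_{(0,-)}$, $\Phi_{(N,+)}=v^{-1}\Psi_{(N,+)}$ at the end, the elementary factors collect exactly into $\smat{1&-1\\ 0&1}\smat{1&0\\ 0&v^{-1}}T_{z,[0,N]}\smat{u&0\\ 0&1}\smat{1&0\\ 1&1}$.

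Finally the $u=v=1$ line: the second and third formulas are that case of the above, inverted (using $\smat{1&-1\\0&1}^{-1}=\smat{1&1\\0&1}$ and $\smat{1&0\\1&1}^{-1}=\smat{1&0\\-1&1}$); for the first, set $M=Q_{0,N}^*(z^{-1}\Uu^{(1,1)}_N-P_{0,N})^{-1}Q_{0,N}$ and rewrite $(z^{-1}\Uu^{(1,1)}_N-P_{0,N})\Psi=Q_{0,N}\xi$ as $(z^{-1}\Uu^{(1,1)}_N-\one)\Psi=Q_{0,N}(\xi-Q_{0,N}^*\Psi)$; applying $Q_{0,N}^*$ gives the fixed-point relation $M\xi=\zeta\iff\zeta=R^{(1,1)}_{z,[0,N]}(\xi-\zeta)$, and applying Proposition~\ref{prop-varphi} once to $M$ and once to $R^{(1,1)}_{z,[0,N]}$ (with its input and output shifted by $\zeta$) produces $\varphi_\sharp(M)=\smat{1&1\\ 0&1}\varphi_\sharp(R^{(1,1)}_{z,[0,N]})\smat{1&0\\ -1&1}=T_{z,[0,N]}$. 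The routine part is the $2\times2$ elementary-matrix bookkeeping; the one genuine point is justifying the passage from these vector identities to matrix identities, which needs the relevant ``input'' vector to span $\CC^2$ — valid off a $z$-rational thin set — after which one closes by analytic continuation, having first checked that the exceptional values of $z$ (eigenvalues of the truncations, the zeros of the $\beta_{z,n}$, and the zeros of the off-diagonal boundary Green's functions) form a discrete set.
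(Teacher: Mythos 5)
Your proof is correct, but it follows a genuinely different route from the paper. The paper first treats $N=0$, $u=v=1$: the resolvent identity yields $Q_0^*(z^{-1}V_0-P_0)^{-1}Q_0 = (\one+R)^{-1}R$ with $R=R^{(1,1)}_{z,[0,0]}$, and Proposition~\ref{prop-A0.5}\,b) converts this into the displayed $\varphi_\sharp$-identity; general $u,v$ then come from $\Ww^{(u,v)}_0 Q_0 = Q_0\smat{u&0\\0&v}$ and Proposition~\ref{prop-A0.5}\,a). For $N>0$ the paper ``groups'' the shells $\Sb_0,\dots,\Sb_N$ into one supershell, observes that the role of $(V_0, Q_0, P_0)$ is then played by $(\Uu^{(1,1)}_N, Q_{0,N}, P_{0,N})$, and reduces everything to the $N=0$ computation. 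You instead work with the inhomogeneous system $(z^{-1}\Uu^{(u,v)}_N-\one)\Psi = Q_{0,N}\xi$ directly for every $N$, propagate the two boundary sources into shells $0$ and $N$, compose the shell-by-shell relations into the master identity $\smat{\widetilde\Phi_{(N,+)}\\\Psi_{(N,+)}} = T_{z,[0,N]}\smat{\Psi_{(0,-)}\\\widetilde\Phi_{(0,-)}}$, and the rest is elementary $2\times 2$ bookkeeping; the first displayed equality is then obtained by converting the $P_{0,N}$-resolvent into a fixed-point problem for the $\one$-resolvent, which re-derives Proposition~\ref{prop-A0.5}\,b) in situ. The paper's route is more modular and purely algebraic, so there is nothing to check about test vectors spanning $\CC^2$ or about continuation; yours avoids the shell-grouping device and makes the transport mechanism explicit, at the mild cost of having to verify, as you do, that the vector identities upgrade to matrix identities because the off-diagonal entry of $R^{(u,v)}_{z,[0,N]}$ is not identically zero in $z$ --- this is indeed guaranteed by (A1), (A2) via a cyclicity argument on the finite graph $\GG_N$, so your analytic-continuation closure is sound.
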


\begin{rem}
From the last formulas above, we may focus on the operator $\Uu^{(u)}$ and include the boundary condition at the root by replacing
$T_{z,0}$ with $T_{z,0} \pmat{1&0\\0&u^{-1}}$,  while focusing on the operator $\tilde \Uu^{(u)}$ one might include the boundary condition by replacing
$T_{z,0}$ with $T_{z,0} \pmat{u&0\\0&1}$.  \\
The boundary condition $\Phi_{(-1,+)}=u\Psi_{(-1,+)}$ means to start with a multiple of the vector $\pmat{u\\1}$ in order to create  (formal) solutions to the eigenvalue equation.
In both cases,  including the boundary condition into the transfer matrices, this means to start with a multiple of the vector $\pmat{1\\1}$.
\end{rem}

\begin{proof}
Let us start with the case $N=0$,  and boundary conditions $u=v=1$ where $\Uu_0^{(1,1)}=V_0=\tilde \Uu_0^{(1,1)}$.
Then, we have 
$$T_{z,[0,0]}=T_{z,0}=\varphi_\sharp(Q_0^*(z^{-1}V_0-P_0)^{-1} Q_0),$$ 
and we want to relate this to the parts of the resolvent
$$
R=R^{(1,1)}_0=Q_0^*(z^{-1} V_0-\one)^{-1} Q_0\;.
$$
By the resolvent identity and using $\one-P=Q_0Q_0^*$ one obtains
$$
z^{-1} V_0-\one=(z^{-1}V_0-P_0)^{-1}+(z^{-1}V_0-\one)^{-1}Q_0Q_0^*(z^{-1} V_0-P_0)^{-1},
$$
which gives
\begin{equation}\label{eq-V-R}
Q_0^*(z^{-1}V_0-P_0)^{-1} Q_0\,=\,(\one+R)^{-1}\,R.
\end{equation}
Using Proposition~\ref{prop-A0.5}~b) this leads to
\begin{equation}\label{eq-T-R}
T_{z,0}=\varphi_\sharp((\one+R)^{-1} R)\,=\,\pmat{1&1\\0&1} \varphi_\sharp(R) \pmat{1&0\\-1&1}\;,
\end{equation}
which gives the first statement in the case $N=0$.
Now, with boundary conditions $(u, v)$ we have
$$
\Uu_0^{(u,v)}=\Ww^{(u,v)}_0 V_0\;, \qquad \tilde \Uu^{(u,v)}_0=V_0 \Ww^{(u,v)}_0,
$$
and a similar relation as in \eqref{eq-V-R} holds, using the resolvent matrices with boundary conditions $(u,v)$.
It is easy to see that
$$
\Ww^{(u,v)}_0 P_0=P_0 \Ww^{(u,v)}_0=P_0,\qquad \Ww^{(u,v)}_0 Q_0=Q_0 \pmat{u&0\\0&v}, \qquad\big( \Ww^{(u,v)}_0\big)^{-1}=\Ww^{(u^{-1},v^{-1})}_0\;.
$$
Therefore, one obtains
$$
Q_0^* (z^{-1}\Ww^{(u,v)}_0 V_0-P_0)^{-1}Q_0\,=\,Q_0^*(z^{-1}V_0-P_0)^{-1}Q_0 \pmat{u^{-1} & 0 \\ 0 & v^{-1}}
$$
$$
Q_0^* (z^{-1} V_0\Ww^{(u,v)}_0-P_0)^{-1}Q_0\,=\, \pmat{u^{-1} & 0 \\ 0 & v^{-1}} Q_0^*(z^{-1}V_0-P_0)^{-1}Q_0\;,
$$
and using Proposition~\ref{prop-A0.5} one obtains
$$
\varphi_\sharp\left( Q_0^* (z^{-1}\Ww^{(u,v)}_0 V_0-P_0)^{-1}Q_0\right)\,=\,\pmat{v&0 \\ 0& 1} T_{z,0} \pmat{1 & 0 \\ 0 & u^{-1}}
$$
and
$$
\varphi_\sharp\left( Q_0^* (z^{-1} V_0 \Ww^{(u,v)}_0-P_0)^{-1}Q_0\right)\,=\,\pmat{1&0 \\ 0& v^*} T_{z,0} \pmat{u & 0 \\ 0 & 1}.\;
$$
Then the relations \eqref{eq-T-R} hold replacing $R$ with $R^{(u,v)}_{z,[0,0]}$ (or $\tilde R^{(u,v)}_{z,[0,0]}$) and $T_{z,0}$ with the corresponding matrix as above,  meaning
$$
\pmat{v&0 \\ 0& 1} T_{z,0} \pmat{1 & 0 \\ 0 & u^{-1}}\,=\,\pmat{1&1\\0&1} \varphi_\sharp(R^{(u,v)}_{z,[0,0]}) \pmat{1&0\\-1&1}
$$
$$
\pmat{1&0 \\ 0& v^{-1}} T_{z,0} \pmat{u & 0 \\ 0 & 1}\,=\,\pmat{1&1\\0&1} \varphi_\sharp(\tilde R^{(u,v)}_{z,[0,0]}) \pmat{1&0\\-1&1},
$$
giving the second statement in the case $N=0$. 


Now, let us look at the effects of `grouping' the first $N+1$ shells $\Sb_0$, \ldots, $\Sb_N$ into a single shell $\GG_N$.  
Then,  using the splitting $\GG=\GG_N \sqcup \bigsqcup_{n=N+1}^\infty \Sb_n$ we find
$$
\Uu^{(u)}\,=\,\underbrace{\pmat{u \\ & \one_{|\GG_N|-2} \\ & & W_{N+1} \\ & & & \ddots}}_{=: \widehat \Ww^{(u)}} \underbrace{\pmat{\Uu_N^{(1,1)} \\ & V_{N+1} \\ & & \ddots}}_{=:\widehat \Vv}=\widehat \Ww^{(u)} \widehat \Vv\;
$$ 
and
$$
\tilde \Uu^{(u)}\,=\,\underbrace{\pmat{\tilde \Uu_N^{(1,1)} \\ & V_{N+1} \\ & & \ddots}}_{=: \breve \Vv} \pmat{u \\ & \one_{|\GG_N|-2} \\ & & W_{N+1} \\ & & & \ddots}\,=\,\breve \Vv\,\widehat \Ww^{(u)}\;,
$$
with $\Uu_N^{(1,1)},\; \tilde \Uu_N^{(1,1)}$ as in \eqref{eq-U-G_N}.
For the first transfer matrix in this setup, we consider the pairs of operators similar as before and introduce
$$
\widehat{\Uu}^{(u)}\,:=\,  \widehat \Vv \,\widehat \Ww^{(u)}\qtx{and} \breve \Uu^{(u)}\,=\,\widehat \Ww^{(u)}\breve \Vv.
$$
Now,  $(\Uu^{(u)}, \widehat\Uu^{(u)})$  is a corresponding pair of conjugated one-channel operators, as well as $(\breve \Uu^{(u)}, \tilde \Uu^{(u)})$\;.
Additional to the solutions $\Psi$ and $\Phi$ to the eigenvalue equations $\Uu^{(u)}\Psi=z \Psi$, $\tilde \Uu^{(u)}\Phi=z \Phi$ with the relation $\Ww^{(u)} \Phi=\Psi$ as in Proposition~\ref{prop-eig-equation} we define $\widehat \Phi$ and $\breve \Psi$ by
$$
\widehat \Ww^{(u)} \widehat \Phi\,=\,\Psi\;,\qquad \widehat \Ww^{(u)} \Phi\,=\,\breve \Psi.
$$
Then
\begin{equation}\label{eq-hat-Phi}
\widehat \Phi\,=\,\big(\widehat \Ww^{(u)}\big)^{-1} \Ww^{(u)} \Phi\,=\,\pmat{\Ww_N^{(1,1)} \\ & \one_{\GG\setminus \GG_N}  } \Phi
\end{equation}
and
\begin{equation}\label{eq-breve-Psi}
\breve \Psi\,=\, \widehat \Ww^{(u)} \big(\Ww^{(u)}\big)^{-1}\,\Psi\,=\,\pmat{\big(\Ww_N^{(1,1)}\big)^* \\ & \one_{\GG\setminus \GG_N}} \Psi.
\end{equation}
Note that in our conventions
$$
\pmat{\Phi_{(N,+)} \\ \Psi_{(N,+)}}\,=\,T_{z,[0,N]}\,\pmat{\Phi_{(-1,+)}\\ \Psi_{(-1,+) } }=T_{z,[0,N]} \pmat{\Psi_{(0,-)}\\ \Phi_{(0,-)}}.
$$
But using \eqref{eq-hat-Phi} and \eqref{eq-breve-Psi} we find
$$
\widehat \Phi_{(0,-)}=\Phi_{(0,-)}, \quad \widehat \Phi_{(N,+)}=\Phi_{(N,+)}, \quad
\breve \Psi_{(0,-)}=\Psi_{(0,-)},\quad  \breve \Psi_{(N,+)}=\Psi_{(N,+)}\;,
$$
and thus
$$
\pmat{\widehat \Phi_{(N,+)} \\ \Psi_{(N,+)}}=T_{z,[0,N]} \pmat{\Psi_{(0,-)} \\ \widehat \Phi_{(0,-)}} \qtx{and}
\pmat{\Phi_{(N,+)} \\ \breve \Psi_{(N,+)}}=T_{z,[0,N]} \pmat{\breve \Psi_{(0,-)} \\ \Phi_{(0,-)}}.
$$
We note that this is true for any boundary condition $u$ with $\Psi_{(0,-)}=u\Phi_{(0,-)}$.
Considering the operator pair $(\Uu^{(u)}, \widehat \Uu^{(u)})$ or $(\breve \Uu^{(u)}, \tilde \Uu^{(u)})$ and
comparing to
\eqref{eq-def-T-sharp} ,
\eqref{eq-T-sharp-1} we find
\begin{equation*}
T_{z,[0,N]}\,=\,\varphi_\sharp\big( Q_{0,N}^* (z^{-1} \Uu^{(1,1)}_N -P_{0,N})^{-1} Q_{0,N}\big)\,=\,
\varphi_\sharp\big(  Q_{0,N}^* (z^{-1} \tilde \Uu^{(1,1)}_N - P_{0,N})^{-1}  Q_{0,N}\big).
\end{equation*}
Thus, the relation between $T_{z,[0,N]}$ and $\Uu^{(1,1)}_N$ or $\tilde \Uu^{(1,1)}_N$ is the same as the relation of $T_{z,0}=T^\sharp_{z,0}$ with $V_0$ and the case $N=0$ above gives the general result. 
\end{proof}



\section{Proof of Theorem~\ref{th-main} and Theorem~\ref{th-main2} \label{sec-sp-av}}
In this section we will first prove Theorem~\ref{th-main} using spectral averaging techniques, and then obtain Theorem~\ref{th-main2} as a corollary.
Let $\mu^{(u,v)}_N$ be the spectral measure of $\Uu^{(u,v)}_N$ (or $\tilde \Uu^{(u,v)}$) at the vector $e_{(0,-)}$,  meaning,
$$
\mu^{(u,v)}_N(f)\,=\,\langle e_{(0,-)} \,, f(\Uu^{(u,v)}_N) e_{(0,-)} \rangle \;=\;
\langle e_{(0,-)} \,, f(\tilde \Uu^{(u,v)}_N) e_{(0,-)} \rangle,
$$
for Borel functions $f$ on the unit circle $\UU(1)$. Note, the second equality is easy to see using
$ f(\tilde \Uu^{(u,v)}_N)=(\Ww^{(u)}_N)^* f(\Uu^{(u,v)}_N) \Ww^{(u)}_N$ and $\Ww^{(u)}_N e_{(0,-)} = u e_{(0,-)}$ with $|u|=1$.
Next, we consider an average over the Haar measure for $v \in \UU(1)$ and define the measure $\mu^{(u)}_N$ by
\begin{equation}
\mu^{(u)}_N(f) = \frac{1}{2\pi} \int_{0}^{2\pi} \mu^{(u,e^{i\varphi})}_N (f)\,{\rm d\varphi}.
\end{equation}
We will denote the value of the Green's function by $g_N^{(u,v)}(z)$, meaning
$$
g^{(u,v)}_N(z):=e_{(0,-)}^*\big(z^{-1} \Uu_N^{(u,v)}-\one\big)^{-1} e_{(0,-)}=
\int \frac{1}{z^{-1}w-1} \,{\rm d}\mu^{(u,v)}_N(w),
$$
and we define the averaged value
$$
 g^{(u)}_N(z):= \int \frac{1}{z^{-1}w-1} \,{\rm d}\mu^{(u)}_N(w)\,=\,
\frac{1}{2\pi}\int_0^{2\pi} g^{(u,e^{i\varphi})}_N(z)\,{\rm d}\varphi\;.
$$

Next, we want to relate $\mu^{(u)}_N$ to the transfer matrix.

\begin{lemma}\label{lem-4.1}
Denote the entries of the transfer matrix by
$$
T_{z,[0,N]}\,=\,\pmat{A_z & B_z \\ C_z & D_z}.
$$
Then, for $|z|<1$, $z\not\in\widehat\Aa_N$, the averaged Green's function is given by
$$
g^{(u)}_N(z)\,=\,\frac{-B_z}{A_z u+B_z}\;.
$$
\end{lemma}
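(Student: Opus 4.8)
The strategy is to express the averaged Green's function through the boundary resolvent matrix $R^{(u,v)}_{z,[0,N]}$ and then use Proposition~\ref{prop-rel-T-R} to rewrite everything in terms of the transfer matrix entries $A_z,B_z,C_z,D_z$. First I would recall that by definition $g^{(u,v)}_N(z)$ is the $(1,1)$-entry of $R^{(u,v)}_{z,[0,N]}$, and then observe that the Haar-average over $v\in\UU(1)$ of the full matrix $R^{(u,v)}_{z,[0,N]}$ can be computed entry by entry. The key analytic input is that, for $|z|<1$ and $z\notin\widehat\Aa_N$, the operator $\Uu^{(u,v)}_N$ has all its eigenvalues on $\UU(1)$, so $z^{-1}\Uu^{(u,v)}_N-\one$ is invertible and the resolvent entries are, by Cramer's rule, rational (in fact Möbius-type) in $v$. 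One then integrates these rational functions of $e^{i\varphi}$ over the circle.

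The cleanest route avoids integrating directly and instead uses the transfer-matrix identity. From Proposition~\ref{prop-rel-T-R},
$$
\varphi_\sharp\big(R^{(u,v)}_{z,[0,N]}\big)
=\pmat{1&-1\\0&1}\pmat{v&0\\0&1} T_{z,[0,N]}\pmat{1&0\\0&u^{-1}}\pmat{1&0\\1&1}.
$$
Writing $R^{(u,v)}_{z,[0,N]}=\smat{a&b\\c&d}$, the definition of $\varphi_\sharp$ in Proposition~\ref{prop-varphi} gives the $(1,1)$-entry of $\varphi_\sharp(R)$ as $b^{-1}$, so that $g^{(u,v)}_N(z)=a$ appears in the $(1,2)$-entry via $-b^{-1}a$. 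Inverting the relation: multiplying out the right-hand side with $T_{z,[0,N]}=\smat{A_z&B_z\\C_z&D_z}$, one reads off
$$
b^{-1}=vA_z+vB_z-\text{(something)},\qquad -b^{-1}a = \ldots,
$$
and solving the two equations for $a$ yields $a=g^{(u,v)}_N(z)$ as an explicit Möbius function of $v$ of the form $\dfrac{\alpha v+\beta}{\gamma v+\delta}$. I would carry out this $2\times2$ multiplication carefully to identify $\alpha,\beta,\gamma,\delta$ in terms of $A_z,B_z,C_z,D_z,u$, using $\det T_{z,[0,N]}\neq 0$ to keep track of normalizations.

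Finally, the Haar average $\frac1{2\pi}\int_0^{2\pi}\frac{\alpha e^{i\varphi}+\beta}{\gamma e^{i\varphi}+\delta}\,{\rm d}\varphi$ is evaluated by the standard residue computation: for $|\delta/\gamma|$ either inside or outside the unit circle the integral equals $\alpha/\gamma$ or $\beta/\delta$ respectively. The dichotomy here is governed by whether $\Uu^{(u,v)}_N$ can have $z$ (with $|z|<1$) as an eigenvalue for some $v$ on the circle; since its spectrum lies on $\UU(1)$, the pole $v=-\delta/\gamma$ lies in the correct region, and the average collapses to a single ratio, which after simplification using the explicit form of $\alpha,\beta,\gamma,\delta$ becomes $-B_z/(A_zu+B_z)$. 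The main obstacle, and the step requiring the most care, is correctly tracking the Möbius coefficients through the $\varphi_\sharp$-conjugation and the three $2\times2$ factors, and then correctly identifying on which side of $\UU(1)$ the pole $v=-\delta/\gamma$ sits so that the residue formula picks out the right term — this hinges on the spectral location statement (all eigenvalues of $\Uu^{(u,v)}_N$ on $\UU(1)$) together with $|z|<1$.
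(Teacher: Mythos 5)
Your proposal follows the paper's route through Proposition~\ref{prop-rel-T-R}: multiply out $\varphi_\sharp(R^{(u,v)}_{z,[0,N]})$, read off $g^{(u,v)}_N(z)$ as the negative ratio of the $(1,2)$- and $(1,1)$-entries of $\varphi_\sharp(R^{(u,v)}_{z,[0,N]})$, and thereby obtain $g^{(u,v)}_N(z)$ as an explicit Möbius function of $v$. Up to that point you are doing exactly what the paper does. The divergence is in the final averaging step. The paper observes that, for $|u|=1$ and $|z|<1$, the function $v\mapsto g^{(u,v)}_N(z) = e_{(0,-)}^*\big(z^{-1}\smat{u\\ & \one \\ & & v}\Uu^{(1,1)}_N - \one\big)^{-1}e_{(0,-)}$ is holomorphic for $|v|>|z|$ (an immediate norm estimate), equivalently holomorphic in $w=v^{-1}$ on the disk $|w|<|z|^{-1}\supset\overline{\DD}$, so by the mean value property the Haar average over $v\in\UU(1)$ is the value at $w=0$, which gives $-B_z/(A_zu+B_z)$ directly from the formula $g^{(u,v)}_N(z)=\frac{-(B_z-v^{-1}D_z)}{(A_z-v^{-1}C_z)u+B_z-v^{-1}D_z}$. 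You instead propose a residue computation; that is equivalent but has a genuine gap at the very point you flag as delicate.

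Specifically, you claim that the dichotomy (pole inside vs.\ outside $\UU(1)$) is settled by noting that the spectrum of the unitary $\Uu^{(u,v)}_N$ lies on $\UU(1)$ while $|z|<1$. That only shows the pole $v_0=-\delta/\gamma$ is \emph{not on} $\UU(1)$; it gives no information on which side. And the two sides give different answers: with $g^{(u,v)}_N(z)=\frac{\alpha v+\beta}{\gamma v+\delta}$, the average is $\beta/\delta = -D_z/(C_zu+D_z)$ if the pole is outside and $\alpha/\gamma = -B_z/(A_zu+B_z)$ if it is inside, and these are generally distinct. To close the gap you need the quantitative statement that $g^{(u,v)}_N(z)$ is holomorphic in $v$ throughout $\{|v|>|z|\}$, forcing the single pole into $\{|v|\le|z|\}\subsetneq\DD$; this is exactly the norm estimate the paper supplies, and once you have it, the paper's ``evaluate at $v^{-1}=0$'' shortcut renders the residue computation unnecessary. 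So the approach is salvageable, but the one ingredient you leave unjustified is the essential one, and the appeal to the spectrum being on $\UU(1)$ does not deliver it.
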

\begin{proof}
Using Proposition~\ref{prop-rel-T-R} one has
\begin{align*}
\varphi_\sharp(R_{z,[0,N]}^{(u,v)})
&=\pmat{1&-1\\0&1} \pmat{vA_z&vB_z u^{-1} \\ C_z & D_zu^{-1}} \pmat{1&0\\1&1}\\
&=\pmat{vA_z-C_z+vB_zu^{-1}-D_zu^{-1} & vB_zu^{-1} -D_zu^{-1}\\ C_z +D_zu^{-1} & D_z u^{-1}}.
\end{align*}
Note that if $R_{z,[0,N]}^{(u,v)}=\smat{\alpha & \beta \\ \gamma & \delta}$, then $g^{(u,v)}_N(z)=\alpha$ is the upper right entry and
 $\varphi_\sharp(R_{z,[0,N]}^{(u,v)})\,=\, \smat{\beta^{-1} & -\beta^{-1} \alpha \\ \delta \beta^{-1} & \gamma-\delta \beta^{-1}\alpha}$. Thus, using
 $\alpha=\beta^{-1}\alpha / \beta^{-1}$ we obtain
$$
g^{(u,v)}_N(z)=\frac{-(vB_zu^{-1}-D_zu^{-1})}{vA_z-C_z+vB_z u^{-1}-D_z u^{-1}}=\frac{-(B_z-v^{-1}D_z)}{(A_z-v^{-1}C_z)u+B_z-v^{-1}D_z}.
$$
Extending the definitions to $v \in \CC$, we have
$$
g^{(u,v)}_N\,=\,e_{(0,-)}^*\left(z^{-1} \pmat{u \\ & \one \\ & & v}\Uu^{(1,1)}_N-\one \right)^{-1} e_{(0,-)}.
$$
Now for $|u|=1$ and $|z|<1$ fixed, the function is holomorphic in $v$ for $|v|>|z|$, or holomorphic in $v^{-1}$ for $|v^{-1}|<|z|^{-1}$ where $|z|^{-1}>1$.
Thus, the average over $v\in\UU(1)$ simply means to replace $v^{-1}$ by $0$, giving the desired formula.
\end{proof}

\begin{lemma}\label{lem-mu_N}
There is a point measure $\nu_N^{(u)}$ supported on the finite set $\Aa_N$, such that for $f \in C(\UU(1))$
$$
\mu^{(u)}_N(f)\,=\, \nu^{(u)}_N(f)\,+\, \int_0^{2 \pi} \frac{f(e^{i\varphi})}{\pi \| T_{e^{i\varphi},[0,N]}\smat{u\\1}\|^2}\,{\rm d}\varphi\,.
$$
\end{lemma}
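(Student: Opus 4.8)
The plan is to recover the measure $\mu^{(u)}_N$ from its Carathéodory (Herglotz) function, which Lemma~\ref{lem-4.1} already expresses through the entries of $T_{z,[0,N]}$. Since $e_{(0,-)}$ is a unit vector, every $\mu^{(u,v)}_N$ is a probability measure, hence so is $\mu^{(u)}_N$, and its Carathéodory function is
$$
F_N(z)\,:=\,1+2\,g^{(u)}_N(z)\,=\,\int_{\UU(1)} \frac{w+z}{w-z}\,{\rm d}\mu^{(u)}_N(w)\,,
$$
which is analytic in $\DD$ with $\re F_N\ge 0$ and $F_N(0)=1$ (because $g^{(u)}_N(z)\to 0$ as $z\to 0$). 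By Lemma~\ref{lem-4.1}, on $\{|z|<1\}$ away from the finitely many poles one has $F_N(z)=\dfrac{A_zu-B_z}{A_zu+B_z}$; since this right-hand side is a rational function of $z$ (by Cramer's rule applied to \eqref{eq-def-alpha-etc} and the explicit form of $\varphi_\sharp$) and $F_N$ is analytic on $\DD$, the identity persists and in particular $F_N$ is rational in $z$.

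First I would evaluate $\re F_N$ on $\UU(1)$. For $|z|=1$ with $z\notin\Aa_N$, each factor $T^\sharp_{z,n}$ and $T^\flat_n$ with $0\le n\le N$ is $\varphi_\sharp$ (respectively $\varphi_\flat$) of a unitary matrix --- using the unitarity of $Q_n^*(z^{-1}V_n-P_n)^{-1}Q_n$ noted after \eqref{eq-def-alpha-etc} together with Proposition~\ref{prop-varphi} --- so $T_{z,[0,N]}\in\UU(1,1)$. Writing $T_{z,[0,N]}=\smat{A_z&B_z\\C_z&D_z}$ and $J=\smat{1&0\\0&-1}$, the group relation $T_{z,[0,N]}\,J\,T_{z,[0,N]}^*=J$ gives $|A_z|^2-|B_z|^2=1$, while $\smat{u\\1}^*J\smat{u\\1}=|u|^2-1=0$ forces $|A_zu+B_z|=|C_zu+D_z|$, whence $\bigl\|T_{z,[0,N]}\smat{u\\1}\bigr\|^2=2\,|A_zu+B_z|^2$. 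A short computation of $F_N+\overline{F_N}$ then yields
$$
\re F_N(z)\,=\,\frac{|A_z|^2-|B_z|^2}{|A_zu+B_z|^2}\,=\,\frac{1}{|A_zu+B_z|^2}\,=\,\frac{2}{\bigl\|T_{z,[0,N]}\smat{u\\1}\bigr\|^2}\,,\qquad |z|=1,\ z\notin\Aa_N\,.
$$

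Next I would localize the singular part. By Remark~\ref{rem-mer} the functions $z\mapsto\alpha_{z,n},\beta_{z,n},\gamma_{z,n},\delta_{z,n}$ have no poles for $|z|\le1$, and for $z_0\in\UU(1)\setminus\Aa_N$ we have $\beta_{z_0,n}\neq0$ for all $n\le N$; hence, using \eqref{eq-T-sharp-1} on $\{|z|\le1\}$ and \eqref{eq-T-sharp-2} on $\{|z|\ge1\}$, the matrix $T_{z,[0,N]}$ extends holomorphically to a full neighbourhood of $z_0$ in $\CC$, and moreover $A_{z_0}u+B_{z_0}\neq0$ since $|A_{z_0}|^2-|B_{z_0}|^2=1$. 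Therefore $F_N$ extends holomorphically across every point of $\UU(1)\setminus\Aa_N$. By the Herglotz (de la Vall\'ee-Poussin) decomposition of a Carath\'eodory function, $\mu^{(u)}_N$ has absolutely continuous part of density $\tfrac1{2\pi}\re F_N(e^{i\varphi})$ and singular part carried by the set where $\re F_N(re^{i\varphi})\to+\infty$ as $r\uparrow1$; the holomorphic extension confines this singular part to the finite set $\Aa_N$, so it is a point measure, which we denote $\nu^{(u)}_N$. Combining with the previous display, for $f\in C(\UU(1))$,
$$
\mu^{(u)}_N(f)\,=\,\nu^{(u)}_N(f)\,+\,\int_0^{2\pi}\frac{f(e^{i\varphi})}{2\pi}\,\re F_N(e^{i\varphi})\,{\rm d}\varphi\,=\,\nu^{(u)}_N(f)\,+\,\int_0^{2\pi}\frac{f(e^{i\varphi})}{\pi\,\bigl\|T_{e^{i\varphi},[0,N]}\smat{u\\1}\bigr\|^2}\,{\rm d}\varphi\,,
$$
as claimed.

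I expect the boundary-regularity step to be the main obstacle: one must verify that $T_{z,[0,N]}$ has no spurious poles on $\UU(1)\setminus\Aa_N$ and that $A_zu+B_z$ does not vanish there (both handled by Remark~\ref{rem-mer} and the $\UU(1,1)$ identity $|A_z|^2-|B_z|^2=1$), and then invoke the correct form of the Herglotz decomposition to pin the singular part to the finite set $\Aa_N$, so that it is automatically a point measure. Everything else is routine linear algebra in $\UU(1,1)$.
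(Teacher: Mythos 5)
Your proof is correct and follows essentially the same route as the paper: the Carathéodory function $F_N=1+2g^{(u)}_N$ that you use is exactly the Poisson transform $P^{(u)}$ the paper works with (the paper just takes real parts directly), the computation of $\re F_N$ on the circle via Lemma~\ref{lem-4.1} and the $\UU(1,1)$ identities $|A_z|^2-|B_z|^2=1$, $|A_zu+B_z|=|C_zu+D_z|$ is the same, and the paper likewise recovers $\mu^{(u)}_N$ as the weak limit of $\tfrac1{2\pi}P^{(u)}(re^{i\varphi})\,{\rm d}\varphi$ and concludes that the measure is a.c.\ off $\Aa_N$ with the remaining (finitely supported) part a point measure. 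The only minor difference is one of packaging: you invoke the Herglotz/de la Vall\'ee-Poussin decomposition explicitly and argue via holomorphic extension of the rational function $F_N$ across $\UU(1)\setminus\Aa_N$ that $A_{z_0}u+B_{z_0}\neq0$ there, which makes the localization of the singular part a bit more explicit than the paper's phrasing, but the substance is identical.
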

\begin{proof}
Consider the Poisson transform of the measure $\mu^{(u)}_N$, that is, using Lemma~\ref{lem-4.1}
\begin{align*}
P^{(u)}(z)
&=\Re e\int \frac{z^{-1}w+1}{z^{-1}w-1}\, {\rm d}\mu^{(u)}_N(w)\,=\,1+2\,\Re e \,g^{(u)}_N(z)\\
&=\frac{|A_z u+B_z|^2-B_z \overline{(A_zu+B_z})-\overline{B_z}(A_zu+B_z)}{|A_z u+B_z|^2}\,=\,\frac{|A_z|^2-|B_z|^2}{|A_z u+B_z|^2}\;
\end{align*}
for $|z|<1$, $z\not\in\widehat\Aa_N$.
Note, for $z=e^{i\varphi}\not \in \Aa_N=\widehat\Aa_N\cap \UU(1)$ we have $T_{z,[0,N]}\in \UU(1,1)$ which implies $T^*_{z,[0,N]}\in\UU(1,1)$ and as such $|A_z|^2-|B_z|^2=1$.
Moreover,
\begin{align*}
\left(T_{z,[0,N]} \pmat{u\\1}\right)^*\pmat{1\\ & -1} T_{z,[0,N]} \pmat{u\\1} 
&=\pmat{u^*&1} T^*_{z,[0,N]} \pmat{1\\&-1}T_{z,[0,N]} \pmat{u\\1}\\
&=\pmat{u^* & 1} \pmat{1\\&-1}\pmat{u\\1}=|u|^2-1=0\;,
\end{align*}
implying $|A_zu+B_z|^2=|C_z u+D_z|^2$ and hence $2\,|A_z u +B_z|^2=\|T_{z,[0,N]}\smat{u\\1}\|^2$.\\
Thus, for $e^{i\varphi} \not \in \Aa_N$ 
\begin{equation}\label{eq-lim-r-1}
\lim_{r \nearrow 1}\,P(re^{i\varphi})\,=\,\frac{1}{\left|A_{e^{i\varphi}} u+B_{e^{i\varphi}}\right|^2}=\frac{1}{\left| \pmat{1&0} T_{e^{i\varphi},[0,N]}\smat{u\\1}\right|^2}=\frac{2}{\| T_{e^{i\varphi},[0,N]} \smat{u\\1}\|^2}\;.
\end{equation}
Let us note that ${\rm d}\mu^{(u)}_N(e^{i \varphi})$ is the weak limit of
$\frac{1}{2\pi} P^{(u)}(re^{i\varphi}) \,{\rm d}\varphi$ for $r \nearrow 1$.
Hence, the measure $\mu^{(u)}_N$ is absolutely continuous in $\UU(1)\setminus \Aa_N$ with respect to the normalized Haar measure on $\UU(1)$ and the density is given by the right hand side of \eqref{eq-lim-r-1} divided by $2\pi$.
Moreover, as $\Aa_N$ is a finite set,  the restriction of $\mu^{(u)}_N$ to $\Aa_N$ is a point measure $\nu_N^{(u)}$.
This finishes the proof. 
\end{proof}

\begin{lemma} \label{lem-limit-nu} We have 
$\nu^{(u)}_{N+1} \geq \nu^{(u)}_N$ and $\displaystyle\lim_{N\to\infty} \nu_N^{(u)}=\nu^{(u)}$ with $\nu^{(u)}$ as defined in \eqref{eq-def-nu^u}.
\end{lemma}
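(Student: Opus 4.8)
The plan is to compute each finite-volume measure $\nu^{(u)}_N$ explicitly in terms of the compactly supported eigenfunctions living inside $\GG_N$, and then to let $N\to\infty$ using that the relevant subspaces exhaust $\HH^{(u)}_c$.

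First I would identify $\nu^{(u)}_N$ at the finite-volume level. For fixed $v$ the operator $\Uu^{(u,v)}_N$ is finite-dimensional, so $\mu^{(u,v)}_N$ is a finite sum of point masses whose locations and weights depend analytically on $v$; passing to the Haar average in $v$, an eigenvalue that genuinely moves with $v$ has its mass smeared out and only contributes to the absolutely continuous part of $\mu^{(u)}_N$ already recorded in Lemma~\ref{lem-mu_N}, whereas an eigenvalue $z_0$ present for all $v$ — equivalently one admitting a $v$-independent eigenvector — survives as an atom of $\mu^{(u)}_N$ with weight the (Haar-a.e.\ constant) value of $\mu^{(u,v)}_N(\{z_0\})$. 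Since $\Ww^{(u,v)}_N$ differs from $\Ww^{(u)}$ only through the entry $v$ sitting on $e_{(N,+)}$, a vector $\psi\in\ell^2(\GG_N)$ is a $v$-independent eigenvector of $\Uu^{(u,v)}_N$ at $z_0$ precisely when $e_{(N,+)}^*\Vv_N\psi=0$; in that case also $\Psi_{(N,+)}=0$ and $\psi$, extended by $0$, is a genuine eigenfunction of $\Uu^{(u)}$ supported in $\GG_N$, and conversely every eigenfunction of $\Uu^{(u)}$ supported in $\GG_N$ has this property (inspect the last nonvanishing shell $n_1$ and use assumption (A2) for shell $n_1+1$, so $b_{n_1+1}\neq0$ whence also $c_{n_1+1}\neq0$ by unitarity of $W_{n_1+1}$, together with \eqref{eq-eigenvalue-V}; tracing backward one moreover gets $\beta_{z_0,n}=0$ for some $n\le N$, so that $z_0\in\Aa_N$, consistently with Lemma~\ref{lem-mu_N}). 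Writing $\Kk_{N,z_0}:={\rm span}\{\psi\in\ell^2(\GG_N):\Uu^{(u)}\psi=z_0\psi\}$ and $P_{N,z_0}$ for the orthogonal projection onto it, this gives
$$
\nu^{(u)}_N\;=\;\sum_{z_0}\big\|P_{N,z_0}\,e_{(0,-)}\big\|^2\,\delta_{z_0},
$$
a finite sum in which eigenfunctions disconnected from the channel drop out because they are orthogonal to $e_{(0,-)}$.

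Monotonicity is then immediate: $\Kk_{N,z_0}\subseteq\Kk_{N+1,z_0}$ forces $P_{N,z_0}\le P_{N+1,z_0}$, so $\|P_{N,z_0}e_{(0,-)}\|^2$ is nondecreasing in $N$ for each $z_0$, and summing over the countable set $\Aa=\bigcup_N\Aa_N$ yields $\nu^{(u)}_N\le\nu^{(u)}_{N+1}$. For the limit, note that every compactly supported eigenfunction of $\Uu^{(u)}$ lies in some $\GG_N$, so by \eqref{eq-def-HH_c} the increasing union $\bigcup_N\Kk_{N,z_0}$ has dense span in the intersection of the $z_0$-eigenspace of $\Uu^{(u)}$ with $\HH^{(u)}_c$; letting $E^c_{z_0}$ be the projection onto its closure, we get $P_{N,z_0}\nearrow E^c_{z_0}$ strongly on the cyclic subspace of $e_{(0,-)}$, while $\Uu^{(u)}$ acts as $z_0$ on ${\rm ran}\,E^c_{z_0}$ and $\sum_{z_0}E^c_{z_0}$ agrees with $P^{(u)}$ on that cyclic subspace. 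Hence for each $z_0$ one has $\nu^{(u)}_N(\{z_0\})=\|P_{N,z_0}e_{(0,-)}\|^2\to\|E^c_{z_0}e_{(0,-)}\|^2=\nu^{(u)}(\{z_0\})$ by \eqref{eq-def-nu^u}, and by monotone convergence the total masses converge as well, $\nu^{(u)}_N(\UU(1))\to\|P^{(u)}e_{(0,-)}\|^2=\nu^{(u)}(\UU(1))$. As all these measures are purely atomic on the countable set $\Aa$, convergence of each atom together with convergence of the total mass upgrades, by dominated convergence for the sum over atoms, to $\nu^{(u)}_N(f)\to\nu^{(u)}(f)$ for every $f\in C(\UU(1))$, which is the claim.

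The main obstacle is the finite-volume identification — in particular, making rigorous that the Haar average in $v$ isolates exactly the boundary-insensitive atoms (an ``at most one bad $v$ per $z_0$'' statement, which at a point $z_0\in\Aa_N$ requires first decoupling the eigenvalue equation at a shell where $\beta_{z_0,\cdot}=0$ and then arguing on the two resulting pieces), and keeping careful track of the cyclic subspace of $e_{(0,-)}$ throughout. The infinite-multiplicity phenomenon pointed out in the Remark following \eqref{eq-def-nu^u} (a limiting eigenfunction realizing $\nu^{(u)}$ need not itself be compactly supported) causes no difficulty, since the limiting argument above only ever uses closures of the spaces $\Kk_{N,z_0}$.
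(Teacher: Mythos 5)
Your proof follows essentially the same route as the paper's: characterize the surviving atoms of the $v$-averaged measure as arising from $v$-independent eigenvectors of $\Uu^{(u,v)}_N$, identify these (after extension by $\nul$) with compactly supported eigenfunctions of $\Uu^{(u)}$ living in $\GG_N$, read off $\nu^{(u)}_N(\{z_0\})$ as the squared norm of the projection of $e_{(0,-)}$ onto the corresponding nested eigenspaces, and pass to the limit by strong monotone convergence of projections onto the exhaustion of $\HH^{(u)}_c$. Your characterization via $e_{(N,+)}^*\Vv_N\psi=0$ and the paper's via the $\Uu^{(u,v)}_N$-cyclic subspace $\Zz_N$ of $e_{(N,+)}$ are equivalent for eigenvectors (since $\Zz_N$ is invariant and contains both $e_{(N,+)}$ and $\Vv_N^*e_{(N,+)}$), so this is only a presentational difference.

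The one place you stop short is the step you flag as ``the main obstacle'': showing that the Haar average over $v$ annihilates the contribution from eigenvectors that are \emph{not} $v$-independent, so that only the boundary-insensitive atoms survive. You gesture at an ``at most one bad $v$ per $z_0$'' Aronszajn--Donoghue-type statement obtained by decoupling at a shell with $\beta_{z_0,\cdot}=0$, which would work but is more than is needed. The paper disposes of this in one line by invoking rank-one spectral averaging at the vector $e_{(N,+)}$: averaging $\langle e_{(N,+)},f(\Uu^{(u,v)}_N)e_{(N,+)}\rangle$ over $v\in\UU(1)$ gives the Haar measure (\cite[Prop.~8.1]{Bou}), hence for Haar-a.e.~$v$ the operator restricted to $\Zz_N$ has no eigenvalue at $z_0$, and any $z_0$-eigenvector lies in $\Zz_N^\perp$, i.e.\ is $v$-independent. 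With that one citation inserted where you left the placeholder, your argument is complete and coincides with the paper's.
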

\begin{proof}
If for $a\in \Aa_N$ we have $\nu_N^{(u)}(\{a\})=\mu^{(u)}_N(\{a\})>0$, then this means that $a$ is an eigenvalue of $\Uu^{(u,v)}_N$ for a set of positive measure in $v\in\UU(1)$.
By rank one perturbation arguments, at the vector $e_{(N,+)}$, the spectral average over $v =e^{i\varphi} \in \UU(1)$ with respect to the Haar measure gives the Haar measure on $\UU(1)$,  see for instance \cite[Proposition~8.1]{Bou}\footnote{where in  the notations of \cite{Bou} we use Proposition~8.1 with $T=1$, $e^{-iH_0}=\Uu_N^{(u,1)}$, $\phi=e_{(N,+)}$ and the average over $\kappa \in \TT=\RR/2\pi\ZZ$ corresponds to the average over $\varphi$}. This means, for $f \in C(\UU(1))$
\begin{align} \nonumber
&\int_0^{2\pi} \,\left\langle e_{(N,+)},f\left(\Uu^{(u,e^{i\varphi})}_N\right) e_{(N,+)}\,\right\rangle\frac{\rm d\varphi}{2\pi}\\
&\quad=\int_0^{2\pi} \,\left\langle e_{(N,+)},f\left(e^{i\varphi |e_{(N,+)}\rangle\langle e_{(N,+)} |}\,\Uu^{(u,1)}_N \right) e_{(N,+)}\,\right\rangle\frac{\rm d\varphi}{2\pi}
=\int_0^{2\pi}f(e^{i\theta})\frac{\rm d\theta}{2\pi}\,. \label{eq-sp-av-Haar}
\end{align}
Now fix $a\in \UU(1)$ and consider the restriction of $\Uu^{(u,v)}_N$ to the cyclic space $\Zz_N$ generated by $e_{(N,+)}$. Note that $\Zz_N$ does not depend on $v$. 
By \eqref{eq-sp-av-Haar}, the set
$$\{v\in\UU(1)\,:\, a\,\text{is eigenvalue of the restriction } \Uu^{(u,v)}_N|_{\Zz_N}\,\}$$ 
has Haar measure zero.
Therefore, if $\nu_N^{(u)}(\{a\})>0$, then, for some $v$ there is an eigenvector $\Psi \in \Zz_N^\perp$, meaning $\Psi^* e_{(N,+)}=0$. Such an eigenvector $\Psi$ is in fact an eigenvector of $\Uu^{(u,v)}_N$ for all $v \in \UU(1)$, with the same eigenvalue $a$. Thus, we find that $\nu^{(u)}_N(\{a\})$
is precisely given by the norm squared of $e_{(0,-)}$ projected to $\ker(\Uu^{(u,v)}_N -a\one)\cap \Zz_N^\perp$.

Furthermore, if $\Psi\in \ker(\Uu^{(u,v)}_N -a\one)\cap \Zz_N^\perp$, then, for $M>N$ we find that $\Psi\oplus \nul=\hat \Psi\in \ker(\Uu^{(u,v)}_M -a\one) \cap \Zz_M^\perp$, implying, $\nu_M(\{a\}) \geq \nu_N(\{a\})$. 
Together with the uniform boundedness of the probability measures $\mu^{(u)}_N$ this means that there is a limit point measure
$$
\tilde \nu^{(u)}\,=\,\lim_{N \to \infty} \nu_N^{(u)}\;,
$$
which is supported on the set $\Aa$.
Further note that, in fact, $\Psi \oplus \nul\in \ell^2(\GG)=\ell^2(\GG_N) \oplus \ell^2(\GG\setminus \GG_N)$ is also an eigenfunction of $\Uu^{(u)}$ with eigenvalue $a$ in this case. Any finitely supported eigenfunction of $\Uu^{(u)}$ is of this form and induces an eigenfunction of $\Uu^{(u,v)}_N$ in $\Zz_N^\perp$ for $N$ large enough. Thus, $\tilde \nu^{(u)}$ is precisely the part of the spectral measure $\mu^{(u)}$ coming from the space $\HH^{(u)}_c$ generated by finitely supported eigenfunctions and $\tilde \nu^{(u)}=\nu^{(u)}$ as defined in \eqref{eq-def-nu^u}.
\end{proof}

Now we are ready to prove Theorem~\ref{th-main}.
\begin{proof}[Proof of Theorem~\ref{th-main}]
Using resolvent convergence we have 
$$g^{(u,v_N)}_N(z) \to g^{(u)}(z):=e_{(0,-)}^*\big(z^{-1} \Uu^{(u)}-\one\big)^{-1} e_{(0,-)}\;=\;\int \frac{1}{z^{-1}w-1} \,{\rm d}\mu^{(u)}(w)\;.$$ 
 for any $|z|<1$ and any sequence $v_N \in \UU(1)$ as $N \to \infty$. 
In particular, this means that the compact sets $\{ g^{(u,v)}_N(z)\,:\,v\in\UU(1)\}$ shrink to a point and we also find convergence for the averages, $g^{(u)}_N(z) \to g^{(u)}(z)$.
This implies the convergence of the Poisson transforms and hence $\mu^{(u)}_N$ converges weakly to $\mu^{(u)}$. Thus, with Lemma~\ref{lem-mu_N} and Lemma~\ref{lem-limit-nu} we find
$$
{\rm d}\mu^{(u)}(e^{i \varphi})\,=\,{\rm d}\nu^{(u)}(e^{i\varphi})\,+\,\lim_{n \to \infty}
\frac{1}{\pi} \,\frac{{\rm d}\varphi}{\left\|T_{e^{i\varphi},[0,N]} \smat{u\\1} \right\|^2}\;
$$
weakly,
proving Theorem~\ref{th-main}.
\end{proof}

\begin{proof}[Proof of Theorem~\ref{th-main2}]
The proof works exactly the same as in \cite{Lasi, Sa-AT, Sa-OC, Sa-Tr, Si-OPUC}.
As for $\varphi \in \RR$ we find $T_{e^{i\varphi},[0,n]}\in \UU(1,1)$ , one has
$$
\pmat{0 & -1}T_{e^{i\varphi},[0,n]}^* \pmat{1 \\ & -1}  T_{e^{i\varphi},[0,n]} \pmat{u\\1}\,=\,\pmat{0 & -1}  \pmat{1 \\ & -1} \pmat{u\\1}\,=\,1\;.
$$
Then, the Cauchy-Schwartz inequality gives
$$
1\,\leq\, \left\| \pmat{1 \\ & -1} T_{e^{i\varphi},[0,n]} \pmat{0 \\ -1}   \right\| \; \left\| T_{e^{i\varphi},[0,n]} \pmat{u\\1}  \right\| \, \leq\, 
\| T_{e^{i\varphi},[0,n]}\|\, \left\| T_{e^{i\varphi},[0,n]} \pmat{u\\1}  \right\|,
$$
and hence 
$$
\frac{1}{\left\|T_{e^{i\varphi},[0,n]} \smat{u\\1} \right\|^2}\,\leq\, \| T_{e^{i\varphi},[0,n]}\|^2\,.
$$
Thus, a uniform bound on
$$
\int_{\varphi_0}^{\varphi_1} \| T_{e^{i\varphi},[0,n]}\|^{2p}\, {\rm d}\varphi\,<\,C,
$$
means that  $\varphi\mapsto \frac{1}{\left\|T_{e^{i\varphi},[0,n]} \smat{u\\1} \right\|^2}$ is a bounded sequence in $L^p(\varphi_0,\varphi_1)$ and has a weakly convergent subsequence with a limit function $\varphi\mapsto g(e^{i\varphi})$ in $L^p$ (note $p>1$).
Then, Theorem~\ref{th-main} assures that for $f$ continuous on $\UU(1)$ and compactly supported on 
$\{ e^{i\varphi}\,:\,\varphi\in [a,b]\}$ with $\varphi_0<a<b<\varphi_1$, $b-a \leq 2\pi$ we find
$$
\mu^{(u)}(f)-\nu^{(u)}(f)\,=\, \int_{\varphi_0}^{\varphi_1} f(e^{i \varphi})\, g(e^{i\varphi})\, {\rm d}\varphi,
$$
meaning that $\mu^{(u)}-\nu^{(u)}$ is an absolutely continuous measure w.r.t. ${\rm d}\varphi$ on $e^{i(\varphi_0,\varphi_1)}$ and has an $L^p$ density.
This shows the first part  of Theorem~\ref{th-main2}. 
For the second part, let $[a,b]\subset (\varphi_0,\varphi_1)$, $f\geq 0$  and $f(e^{i\varphi})>\varepsilon>0$ for all $\varphi\in[a,b]$.
Using Jensen's inequality for the convex function $F(x)=x^{-1/p}$ and the `random variable' $X(z)=\|T_{z,[0,n]}\smat{u\\1}\|^{2p}$ we find
$$
\frac{1}{b-a} \int_a^b \frac{{\rm d}\varphi}{\|T_{e^{i\varphi},[0,n]} \smat{u\\1}\|^2}\,\geq\,
\left( \frac{1}{b-a} \int_a^b \|T_{e^{i\varphi},[0,n]}  \smat{u\\1}\|^{2p}\,{\rm d}\varphi \right)^{-1/p}\,\geq\,\sqrt[p]{\frac{b-a}{C}}\;.
$$
Thus,
$$
(\mu^{(u)}-\nu^{(u)})(f)\,\geq\, \lim_{n\to\infty} \int_a^b  \frac{f(e^{i\varphi}){\rm d}\varphi}{\pi\,\|T_{e^{i\varphi},[0,n]} \smat{u\\1}\|^2}\geq\frac{\varepsilon (b-a)}
{\pi}\sqrt[p]{\frac{b-a}{C}}>0\,.
$$
\end{proof}

\section{Perturbation of periodic scattering zipper\label{sec-perper}}

In this section we will show Proposition~\ref{prop-sigma-ac} and Theorem~\ref{th-main3}.
We consider a periodic one-channel scattering zipper $\Uu=\Ww\Vv$ with period $p$ and the random perturbation $\Uu_\omega$ as defined
by  \eqref{eq-Uu_omega}. As one may put the boundary condition into re-defining $V_0$, we may assume that $u=1$.
We start with the following observation.
\begin{lemma}\label{lem-1}
Let $\Uu$ be a one-channel scattering zipper fulfilling assumptions {\rm (A1), (A2)}, meaning that $b_n\neq 0, \bbb_n\neq 0$ for all $n$. 
Then, the vector $e_{(0,-)}$ is cyclic, meaning 
$$
\overline{{\rm span}\{ \Uu^m e_{(0,-)}\,:\, m\in\ZZ\}}\,=\,\ell^2(\GG).
$$
\end{lemma}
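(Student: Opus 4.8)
The plan is to show that the cyclic subspace $\Zz = \overline{{\rm span}\{\Uu^m e_{(0,-)} : m \in \ZZ\}}$ exhausts $\ell^2(\GG)$ by working shell by shell, exploiting the one-channel structure of $\Uu = \Ww\Vv$ and the conjugate operator $\tilde\Uu = \Vv\Ww$. The key structural facts are: (A1) guarantees that $V_n$ acting on $e_{(n,-)}$ eventually reaches $e_{(n,+)}$ (indeed, by the argument around \eqref{eq-def-alpha-etc} and the proposition characterizing (A1), the cyclic space of $e_{(n,-)}$ under $V_n$ together with $e_{(n,+)}$ is all of $\CC^{\Sb_n}$, since $V_n \in \UU(\Sb_n)$ is unitary and $P_n$ commutes with nothing that would block this --- more precisely, if $\{V_n^k e_{(n,-)}\}_k$ did not span $\CC^{\Sb_n}$, its span would be $V_n$-invariant, and by unitarity so would its complement, contradicting (A1) once one checks $e_{(n,+)}$ lies in that span); and (A2) guarantees $b_n \neq 0$, so $\Ww$ genuinely mixes $e_{(n-1,+)}$ and $e_{(n,-)}$.

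First I would show $e_{(0,-)} \in \Zz$ implies $\CC^{\Sb_0} \subseteq \Zz$. Since $\Vv$ restricted to $\Sb_0$ is $V_0$ and $\Ww$ acts as $u\cdot$ on $e_{(0,-)}$ and as the identity $P_0$ on the complement of ${\rm span}(e_{(0,-)}, e_{(0,+)})$ within $\CC^{\Sb_0}$ (the component $e_{(0,+)}$ being moved by $W_1$ out of $\Sb_0$), I would track the orbit $\Uu^k e_{(0,-)} = (\Ww\Vv)^k e_{(0,-)}$: as long as $V_0^j e_{(0,-)}$ has no $e_{(0,+)}$-component the dynamics stays inside $\CC^{\Sb_0}$ and $\Ww$ acts unitarily there, so one picks up $\Ww V_0 e_{(0,-)}, (\Ww V_0)^2 e_{(0,-)}, \dots$; combined with (A1) and the unitary-invariant-subspace argument above, these span $\CC^{\Sb_0}$. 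Once $\CC^{\Sb_0} \subseteq \Zz$, in particular $e_{(0,+)} \in \Zz$, and then $\Ww$ (via $W_1$, using $b_1 \neq 0$) sends $e_{(0,+)}$ to a vector with nonzero $e_{(1,-)}$-component lying in $\Zz$; projecting off the $\CC^{\Sb_0}$-part (already in $\Zz$) yields $e_{(1,-)} \in \Zz$. Then I would repeat the $\CC^{\Sb_0}$-argument verbatim at shell $1$ (using that $\tilde\Uu = \Vv\Ww$, or directly $\Uu$, preserves the relevant finite-dimensional invariant blocks built from $\GG_n$), and induct: $\CC^{\Sb_n} \subseteq \Zz$ for all $n$, hence $\Zz = \bigoplus_n \CC^{\Sb_n} = \ell^2(\GG)$.

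A cleaner packaging, which I would probably use instead to avoid repetitive block bookkeeping, is to invoke Proposition~\ref{prop-inf-dim} iteratively: the $\Uu$-cyclic space of $e_{(0,-)}$ being infinite-dimensional already tells us it cannot be contained in any $\bigoplus_{m=0}^{N}\ell^2(\Sb_m)$; combined with the fact that $\Uu$ and $\Uu^{-1} = \Vv^*\Ww^*$ map $\GG_N$ into $\GG_{N+1}$ (finite hopping of range one across shells), the cyclic space, once it meets $\Sb_{n}$ nontrivially, must — by the unitary irreducibility of the finite block $\Uu_n^{(1,1)}$ restricted to the orthogonal complement of the already-captured $\GG_{n-1}$, which holds precisely because of (A1)/(A2) — capture all of $\CC^{\Sb_n}$. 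The induction then gives the claim, and the doubly-infinite span over $m \in \ZZ$ rather than $m \in \NN_0$ only helps.

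The main obstacle is the base-case/inductive-step assertion that $e_{(n,-)} \in \Zz$ forces $\CC^{\Sb_n} \subseteq \Zz$: one must rule out that the $\Ww$-action (which is \emph{not} all of $\Vv$) together with $V_n$ generates a proper invariant subspace of $\CC^{\Sb_n}$. This is where (A1) is used essentially — the precise statement needed is that the smallest subspace of $\CC^{\Sb_n}$ containing $e_{(n,-)}$ and invariant under $V_n$ contains $e_{(n,+)}$ — and I would extract it from the same linear-algebra identities ($V_n^{m+1} = \sum_l V_n^l R_n (V_n P_n)^{m-l} V_n + V_n (P_n V_n)^m$) already proved in the excerpt for the $\beta_{z,n}\neq 0$ characterization, since that computation is exactly showing the $e_{(n,-)}$-to-$e_{(n,+)}$ connection under $V_n$ cannot be severed. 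Everything else is routine tracking of finitely-supported orbit vectors.
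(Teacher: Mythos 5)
Your plan has the same overall architecture as the paper's proof (shell-by-shell induction: capture $\CC^{\Sb_n}$, then use (A2) to push to $e_{(n+1,-)}$, then repeat), but several of the individual mechanisms you describe have real gaps.

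\textbf{The main gap: getting $e_{(n,+)}$ from $e_{(n,-)}$ for $n\ge 1$.} You propose to ``repeat the $\CC^{\Sb_0}$-argument verbatim at shell $1$,'' but the situation at shell $n\ge 1$ is genuinely different from shell $0$. At shell $0$ the boundary makes $\Ww e_{(0,-)}=u\,e_{(0,-)}$, so $\Uu^{-1}e_{(0,-)}=\bar u\,\Vv^*e_{(0,-)}$ immediately exposes $e_{(0,+)}$ with coefficient $\bar u\,\bar\bbb_0\neq 0$. At shell $n\ge 1$, however, $\Uu^{-1}e_{(n,-)}=\Vv^*\Ww^*e_{(n,-)}=\Vv^*(\bar b_n e_{(n-1,+)}+\bar d_n e_{(n,-)})$, whose $e_{(n,+)}$-coefficient is $\bar d_n\bar\bbb_n$; this vanishes whenever $d_n=0$, which is perfectly possible ($W_n=\smat{0&1\\1&0}$ is the model case!). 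The paper avoids this by first forming $\Ww e_{(n,-)}=b_n e_{(n-1,+)}+d_n e_{(n,-)}$, which lies in $\Zz$ \emph{because both basis vectors on the right are already captured}, and then applying $\Uu^{-1}$: $\Uu^{-1}(\Ww e_{(n,-)})=\Vv^*\Ww^*\Ww e_{(n,-)}=\Vv^*e_{(n,-)}=\bar\aaa_n e_{(n,-)}+\bar\bbb_n e_{(n,+)}$, which cleanly gives $e_{(n,+)}$ using only (A1). This cancellation of $\Ww^*$ against a pre-applied $\Ww$ is the key trick your plan is missing.

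\textbf{Secondary issues.} (i) Your in-shell argument at $n=0$ (``as long as $V_0^j e_{(0,-)}$ has no $e_{(0,+)}$-component the dynamics stays inside $\CC^{\Sb_0}$'') never gets off the ground for a scattering zipper: since $\ccc_0\ne 0$, $\Vv e_{(0,-)}$ already has an $e_{(0,+)}$-component, so the orbit leaves $\CC^{\Sb_0}$ at the first step. The one-line $\Uu^{-1}$ computation above is both simpler and correct. (ii) Your general assertion that (A1) forces $\{V_n^k e_{(n,-)}\}_k$ (even together with $e_{(n,+)}$) to span $\CC^{\Sb_n}$ is false when $|\Sb_n|>2$: (A1) only says $e_{(n,+)}$ is not orthogonal to that span, not that the span is everything (e.g.\ $V_n$ block-diagonal relative to $\mathrm{span}(e_{(n,-)},e_{(n,+)})\oplus(\cdots)$). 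It is harmless here only because $|\Sb_n|=2$, but since the lemma is about scattering zippers you should say so rather than assert a general fact that does not hold. (iii) The ``cleaner packaging'' via ``unitary irreducibility of the finite block $\Uu_n^{(1,1)}$'' is too vague to stand as a proof; unitary operators on finite-dimensional spaces always have plenty of invariant subspaces (eigenspaces), so irreducibility is not automatic and would itself need the same (A1)/(A2) bookkeeping you are trying to avoid.

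The step you do get right is the propagation $\CC^{\Sb_n}\subseteq\Zz\Rightarrow e_{(n+1,-)}\in\Zz$: since $\Vv$ is unitary and preserves $\CC^{\Sb_n}$, one has $\Ww(\CC^{\Sb_n})=\Ww\Vv(\CC^{\Sb_n})=\Uu(\CC^{\Sb_n})\subseteq\Zz$, hence $\Ww e_{(n,+)}=a_{n+1}e_{(n,+)}+c_{n+1}e_{(n+1,-)}\in\Zz$ with $c_{n+1}\ne 0$ by (A2), and projecting off $e_{(n,+)}$ gives $e_{(n+1,-)}\in\Zz$. (The paper organizes this slightly differently, applying $\Uu$ to $e_{(n,-)}$ and reading off the $e_{(n+1,-)}$-coefficient $\ccc_n c_{n+1}\neq 0$, but both work.) The missing piece is the $\Ww$-then-$\Uu^{-1}$ cancellation for the in-shell step.
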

Note that the Lemma also applies to the operators $\Uu_\omega$, almost surely.
\begin{proof}
Note that for scattering zippers, the vectors $e_{(n,\pm)}$ form an orthonormal basis of $\ell^2(\GG)$.
By induction we will prove that for all $n\in \ZZ_+$, $e_{(n,-)}, e_{(n,+)}$ are in the $\Uu$-cyclic space $\Zz$ generated by $e_{(0,-)}$.\\
For $n=0$ note $e_{(0,-)}\in\Zz$ and $\Uu^{-1} e_{(0,-)}\,=\,\Vv^* \Ww^* e_{(0,-)}=\Vv^* e_{(0,-)}=\bar a_0 e_{(0,-)}+ \bar b_0 e_{(0,+)}$, thus also $e_{(0,+)}\in\Zz$.\\
Now assume $e_{(m,-)}, e_{(m,+)}\in\Zz$ for all $m\leq n$. Then
$$
\Uu e_{(n,-)}\,=\, \Ww\left( a_n e_{(n,-)}+ c_n e_{(n,+)}\right) \,=\,\underbrace{a_{n} \Ww e_{(n,-)}\,+\,c_n \aaa_{n+1} e_{(n,+)}}_{\in \Zz}\,+\,c_n \bbb_{n+1} e_{(n+1,-)}.
$$
Note, $c_n \neq 0$ (as $b_n\neq 0$) and $\bbb_{n+1}\neq 0$ by assumption.
Hence, we see that $e_{(n+1,-)}\in \Zz$. Now, with the induction hypothesis, $\Ww e_{(n+1,-)}\,=\, \bbb_{n+1} e_{(n,+)}+\ddd_{n+1}  e_{(n+1,-)}\in\Zz$,
and therefore
$$
\Uu^{-1} \Ww e_{(n+1,-)}\,=\, \Vv^*  e_{(n+1,-)} \,=\,\bar a_{n+1} e_{(n+1,-)}\,+\, \bar b_{n+1} e_{(n+1,+)},
$$
and we see that $e_{(n+1,+)}\in\Zz$. This finishes the induction and the proof.
\end{proof}

For this reason, to analyze the spectrum of $\Uu, \Uu_\omega$ and its spectral types it is sufficient to consider the spectral measure at $e_{(0,-)}$ which we denote by
$$
\mu(f)\,=\,\langle e_{(0,-)},\, f(\Uu) e_{(0,-)}\rangle\;,\quad \mu_\omega(f)\,=\,\langle e_{(0,-)},\, f(\Uu_\omega) e_{(0,-)}\rangle\,.
$$

Furthermore, note that for $\Uu$ and $\Uu_\omega$ we have that the set $\Aa$ as defined in
\eqref{eq-Aa_n}, \eqref{eq-Aa} is empty. Hence, the measure $\nu$ (or $\nu^{(u)}$) as in Theorem~\ref{th-main2} is equal to zero.
As in Section~\ref{sub-random} we have the transfer matrix over a period
$T_{z}=T_{z,[1,p]}$ and recall $\Sigma=\{z\in\UU(1)\,:\,|\Tr T_z|<2 \}$.
We now show that $\Sigma$ is always a subset of the absolutely continuous spectrum (almost surely).
This is the main part of the proof of Theorem~\ref{th-main3} and Proposition~\ref{prop-sigma-ac}.
\begin{lemma}\label{lem-2}
There is a set $\Omega'\subset \Omega$ of probability one, such that for all $\omega \in \Omega'$, 
the spectral measure $\mu_\omega$ of $\Uu_\omega$ at the vector $e_{(0,-)}$ is purely absolutely continuous in $\Sigma$ and $\Sigma \subset \supp \mu_\omega$. We also find that $\mu$ is purely absolutely continuous in $\Sigma$ and $\Sigma \subset \supp \mu$. 
\end{lemma}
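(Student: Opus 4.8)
The plan is to reduce everything to Theorem~\ref{th-main2}, applied to $\Uu_\omega$ and then to $\Uu$, on compact subarcs of $\Sigma$; recall that here $\Aa=\emptyset$, so the point part $\nu^{(u)}$ in Theorem~\ref{th-main2} vanishes and it suffices to verify the $\liminf$ hypothesis on the squared transfer matrix norm. Fix a closed arc $e^{i[\varphi_0,\varphi_1]}\subset\Sigma$. Since $z\mapsto\Tr T_z$ is continuous and $|\Tr T_z|<2$ on $\Sigma$, there is $\eta>0$ with $|\Tr T_z|\le 2-\eta$ on this arc, so by Proposition~\ref{prop-U11} the matrix $T_z\in\UU(1,1)$ is conjugate, via a matrix $M_z$ with $\|M_z\|$ and $\|M_z^{-1}\|$ bounded uniformly on the arc, to a unitary $D_z=e^{i\chi_z}\,\diag(e^{i\theta_z},e^{-i\theta_z})$, $\theta_z\in\RR$. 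Writing the transfer matrix over the $n$-th period as $\widehat T_{z,[np+1,(n+1)p]}=T_z+\Delta T_{z,n}$ (with $\widehat T_{z,[np+1,(n+1)p]}$ and $\Delta T_{z,n}$ as in the Remark after Theorem~\ref{th-main3}, using $T_z=T_{z,[1,p]}$ by $p$-periodicity), one gets $M_z\,\widehat T_{z,[0,Np]}\,M_z^{-1}=\bigl(\prod_{n=N-1}^{0}(D_z+\delta_n)\bigr)\,M_z\widehat T_{z,0}M_z^{-1}$ with $\delta_n=M_z\Delta T_{z,n}M_z^{-1}$ and a uniformly bounded end factor, while an intermediate index $m$ differs from a multiple of $p$ by at most $p$ uniformly bounded factors.

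First I would record two input estimates, uniform for $z$ in the arc. By (C3) the inverses $\hat b_n^{-1},\hat\bbb_n^{-1}$ entering \eqref{eq-T-sharp-SZ}--\eqref{eq-T-flat-SZ} are bounded by $\varepsilon^{-1}$, hence $\|\delta_n\|\le K$ almost surely with $K$ locally bounded on $\Sigma$; and, expanding the period product and using (C1), (C2), one gets $\sum_n\bigl(\|\EE(\delta_n)\|+\EE(\|\delta_n\|^2)\bigr)<\infty$. Then I would run the Kiselev--Last--Simon moment argument \cite[Thm.~8.1]{KiLS} step by step: fix a unit vector $w$, let $v$ be the product of the earlier factors applied to $w$ (independent of $\delta_n$ by (C1)), and note the identity $\|(D_z+\delta_n)v\|^2=\|v\|^2(1+X_n)$ with $X_n=\|v\|^{-2}\bigl(2\Re\langle D_z^*\delta_n v,v\rangle+\|\delta_n v\|^2\bigr)$, where $X_n\in[-1,M]$ for some $M=M(K,\|M_z^{\pm1}\|)$ since $\|\delta_n\|\le K$ and $\|(D_z+\delta_n)v\|^2\ge0$. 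Using the elementary convexity bound $(1+x)^p\le 1+px+C_{p,M}x^2$ on $[-1,M]$ together with $\EE(X_n\mid\text{earlier})\le c(\|\EE(\delta_n)\|+\EE(\|\delta_n\|^2))$ and $\EE(X_n^2\mid\text{earlier})\le c\,\EE(\|\delta_n\|^2)$ (the fourth moment is absorbed by $\|\delta_n\|\le K$), one obtains $\EE(\|(D_z+\delta_n)v\|^{2p}\mid\text{earlier})\le\|v\|^{2p}(1+c_n)$ with $\sum_n c_n<\infty$. Conditioning successively on all earlier perturbations and restoring the bounded end factor yields $\sup_m\EE(\|\widehat T_{z,[0,m]}\|^{2p})\le C(z)$ with $C(\cdot)$ locally bounded on $\Sigma$.

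With this bound in hand, Fubini and Fatou close the argument:
$$
\EE\Bigl(\liminf_{m\to\infty}\int_{\varphi_0}^{\varphi_1}\|\widehat T_{e^{i\varphi},[0,m]}\|^{2p}\,{\rm d}\varphi\Bigr)\,\le\,\liminf_{m\to\infty}\int_{\varphi_0}^{\varphi_1}\EE\bigl(\|\widehat T_{e^{i\varphi},[0,m]}\|^{2p}\bigr)\,{\rm d}\varphi\,\le\,\int_{\varphi_0}^{\varphi_1}C(e^{i\varphi})\,{\rm d}\varphi\,<\,\infty ,
$$
so the $\liminf$ is finite almost surely for this arc. Exhausting the open set $\Sigma$ by countably many such closed arcs and intersecting the corresponding full-measure events gives one $\Omega'$ with $\PP(\Omega')=1$ on which the hypothesis of Theorem~\ref{th-main2} holds on every arc; since $\nu^{(u)}=0$, this says that for $\omega\in\Omega'$ the measure $\mu_\omega$ is purely absolutely continuous with $L^p$ density on each arc and the arc lies in $\supp\mu_\omega$, and taking the union over arcs gives that $\mu_\omega$ is purely a.c.\ in $\Sigma$ with $\Sigma\subset\supp\mu_\omega$. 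The statement for $\Uu$ is the special case of deterministic $\widehat V_n=V_n$, $\widehat W_n=W_n$, where in fact $T_{z,[0,np]}=T_z^nT_{z,0}$ and $\|T_z^n\|$ is already bounded locally uniformly on $\Sigma$ because $T_z$ is elliptic, so the moment bound is immediate.

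The step I expect to be the real obstacle is establishing the two input estimates with local uniformity in $z$ — the almost sure bound $\|\Delta T_{z,n}\|\le K$ and the summability $\sum_n(\|\EE(\Delta T_{z,n})\|+\EE(\|\Delta T_{z,n}\|^2))<\infty$ — which amounts to unwinding the block products of \eqref{eq-T-sharp-SZ}--\eqref{eq-T-flat-SZ}, using (C3) to control the matrix inverses and (C2) to control the first- and second-order deviations of $\widehat V_n,\widehat W_n$ from $V_n,W_n$; this is precisely the ``technical condition'' flagged in the Remark after Theorem~\ref{th-main3}. Everything else — the uniform ellipticity $|\Tr T_z|\le2-\eta$ on compact subsets of $\Sigma$ and hence the uniform boundedness of $M_z^{\pm1}$, together with the conditioning/iteration — is routine but must be tracked carefully to keep $C(z)$ locally bounded.
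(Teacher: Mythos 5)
Your proposal follows essentially the same route as the paper's proof: diagonalize the period transfer matrix $T_z$ uniformly on compact subarcs of $\Sigma$ via Proposition~\ref{prop-U11}, conjugate the perturbed products into ``unitary plus small'' form, run the Kiselev--Last--Simon iterated moment estimate using (C1)--(C3) to control $\sum_n(\|\EE(\Delta T_{z,n})\|+\EE(\|\Delta T_{z,n}\|^2))$ and the a.s.\ bound $\|\Delta T_{z,n}\|\le K$, then Fatou/Fubini plus Theorem~\ref{th-main2} and a countable exhaustion of $\Sigma$. The only differences are cosmetic: you run the argument for general $2p$-th moments via the convexity bound $(1+x)^p\le1+px+C_{p,M}x^2$ whereas the paper fixes $p=2$ and computes the fourth moment directly, and you defer the telescoping estimate \eqref{eq-des-tildeT}--\eqref{eq-EtildeT-T} that verifies the summability as a ``technical obstacle'', which the paper executes explicitly; both of these are inessential.
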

\begin{proof}
It is sufficient to consider $\Uu_\omega$ as it includes the case of the operator $\Uu$ when defining $\widehat V_n(\omega)=V_n, \widehat W_n(\omega)=W_n$ deterministically.

By Proposition~\ref{prop-U11}, for $z\in \Sigma$, there is  $\varphi_z, \theta_z \in \RR$, (not necesarily equal) and $M_z \in \GL(2,\CC)$ such that
\begin{equation}\label{eq-MTM}
M_z^{-1} T_z M_z\,=\, \pmat{e^{i\varphi_z} \\ & e^{i\theta_z}}=: R_z\,.
\end{equation}
Using that $T_z$ is analytic away from $0$, we can choose $\varphi_z, \theta_z$ and $M_z$ to depend analytically on $z$, locally on compact neighborhoods $e^{i[a,b]}\subset \Sigma$ of any $z\in \Sigma \subset \UU(1)$ (neighborhood within the space $\UU(1)$).
Within $e^{i[a,b]}$, $\|M_z\|$, $\|M_z^{-1}\|$ and $\|T_{z,0}\|$ are uniformly bounded by some constant $C$.

As above, the transfer matrix for the perturbed operator are indicated with a hat.
For convenience, we also define the products
$$
\widetilde T_{z,n}\,:=\,\widehat T_{z,[(n-1)p+1, np]} \qtx{for $n\geq 1$, and} \widetilde T_{z,0}= \widehat T_{z,0}^\sharp\;.
$$
This way
$$
\widehat T_{z,[0,np]}\,=\, \widetilde T_{z,n} \widetilde T_{z,n-1} \,\cdots \, \widetilde T_{z,1} \widetilde T_{z,0}\;,
$$
which is a perturbation of $T_z^n T_{z,0}$.
Now, as $V_n, \widehat V_n, W_n, \widehat W_n$ are unitary matrices, their norms are all equal to one. Thus, with assumption (C3), the matrices $\|T_{z,n}\|$,
$\|\widehat T_{z,n}\|$ are uniformly bounded by some constant $C$. Possibly increasing the constant $C$ above, we may use the same bound.

Using assumption (C1) we see that all $\widetilde T_{z,n}$ are independent.
The boundedness of $V_n \widehat V_n, W_n, \widehat W_n$ and $b_n^{-1}, \hat b_n^{-1}, \bbb_n^{-1}, \hat \bbb_n^{-1}$ and assumptions (C2), (C3) give
\begin{equation}\label{eq-sum-T}
\sum_{n=0}^\infty \left( \big\| \EE(\widehat T_{z,n})-T_{z,n} \big\|\,+\, \EE\big( \big\|\widehat T_{z,n}-T_{z,n}\big\|^2\big)\right)\,<\,\infty\;,
\end{equation}
uniformly in $z \in \UU(1)$.
We have
\begin{align}
&\widetilde T_{z,n}-T_z
\,=\, \prod_{k=1}^p \widehat T_{z,(n-1)p+k} \,-\, \prod_{k=1}^p  T_{z,(n-1)p+k} \nonumber \\ \label{eq-des-tildeT}
&=\sum_{k=1}^p\left( \left( \prod_{l=k+1}^p  \widehat T_{z,(n-1)p+l} \right) \,\left( \widehat T_{z,(n-1)p+k}-T_{z,(n-1)p+k}\right) \left( \prod_{l=1}^{k-1} T_{z,(n-1)p+l} \right) \right),
\end{align}
where the products go from right to left.
Thus,
\begin{equation}\label{eq-tildeT-T}
\| \widetilde T_{z,n} - T_{z} \|\,\leq\, C^{p-1} \sum_{k=1}^p \|\widehat T_{z,(n-1)p+k} - T_{z,(n-1)p+k}\|.
\end{equation}
Replacing $\widetilde{T}_{z,n}$ and $\widehat{T}_{z,n}$ with their expectations in \eqref{eq-des-tildeT} and using independence leads to
\begin{equation}\label{eq-EtildeT-T}
\big\| \EE\big( \widetilde T_{z,n}\big)  - T_{z} \big \|\,\leq\, C^{p-1} \sum_{k=1}^p \big \| \EE\big(\widehat T_{z,(n-1)p+k} \big)- T_{z,(n-1)p+k} \big \|\,
\quad\text{for $n\geq 1$}\,.
\end{equation}
Now, let us define the random, independent matrices
$$
S_{z,n}\,:=\,M_z^{-1} \widetilde T_{z,n} M_z - R_z.
$$
Then, using \eqref{eq-MTM}, \eqref{eq-sum-T}, \eqref{eq-tildeT-T}, \eqref{eq-EtildeT-T} and the fact that $\widetilde T_{z,0}$ is uniformly bounded, we obtain
\begin{equation}\label{eq-sum-S}
\sum_{n=0}^\infty \left( \big\| \EE( S_{z,n}) \big\|\,+\, \EE\big( \|  S_{z,n} \|^2 \big)\right)
\,<\, \tilde C < \infty,
\end{equation}
for some $\tilde C>0$ and all $z \in e^{i[a,b]} \subset \Sigma$.
Noting that additionally 
$$
\|S_{z,n}\|\,=\, \big\|  M_z^{-1} [\widetilde T_{z,n} - T_{z}] M_z \big\| \,\leq\, 2C^3,$$
for all $z \in E^{i[a,b]}$, we can copy the proof in \cite{KiLS}. Start with some fixed vector $\vv=\vv_{0}$ and consider the $z$ dependent random Markov process 
$$
\vv_{n+1}=M_z^{-1} \widehat T_{z,[0,np]} M_z \, \vv_0\,=\, (R_z+S_{z,n}) \vv_n.
$$
Then, $\vv_n$ is independent of $S_{z,n}$ and we find
\begin{align*}
\EE (\|\vv_{n+1}\|^4)\,&=\, \EE \left( \big( \|R_z \vv_n\|^2 + \|S_{z,n} \vv_n\|^2 +{\,\vv_n}^*( R_z^*S_{z,n}+S_{z,n}^* R_z)\vv_n \big)^2\right)\\
\,&=\, \EE \Big(\| \vv_n\|^4+ \big(  \|S_{z,n} \vv_n\|^2 +2\Re e ({\,\vv_n}^* R_z^*S_{z,n} \vv_n) \big)^2\,\Big)\, +\,\\
&\quad+\,
\EE \Big(2\|\vv_n\|^2\,\|S_{z,n} \vv_n\|^2) + 4 \, \Re e  \big({\,\vv_n}^* R_z^*\EE(S_{z,n}) \vv_n \big)\, \|\vv_n\|^2\,
\Big)\\
&\leq\, \EE(\|\vv_n\|^4)\left( 1+ (2C^3+2)^2 \,\EE\big( \|S_{z,n}\|^2 \big)+2 \EE\big( \|S_{z,n}\|^2 \big) + 4\| \EE(S_{z,n}) \| \right) \\
&\leq\, \EE(\|\vv_n\|^4) \exp\Big( c \; \big( \big( \| \EE(S_{z,n})\|\,+\,\EE(\|S_{z,n}\|^2) \Big),
\end{align*}
for some adequate $c>0$.
Iterating and using \eqref{eq-sum-S} this means
$$
\sup_n \EE (\|\vv_n\|^4)\,\leq\, \exp(c\,\tilde C)\,\|\vv_0\|\,.
$$
Using different $\vv_0$ which form an orthonormal basis of $\CC^2$, we conclude
$$
\sup_n\,\EE \left( \big\| M_z^{-1} \widehat T_{z,[0,np]} M_z \big\|^4 \right)\,\leq \,2^4  \exp(c\,\tilde C),
$$
for all $z \in e^{i[a,b]}$.
Thus,
$$
\EE\left( \liminf_{n\to\infty} \int_a^b \| \widehat T_{e^{i\varphi},[0,np]} \|^4\, {\rm d}\varphi\right)\,\leq\,
\liminf_{n \to \infty} \int_a^b \EE\big( \| \widehat T_{e^{i\varphi},[0,np]} \|^4\big) \,{\rm d}\varphi\,\leq\,
(b-a) 2^4C^8 e^{c\, \tilde C}\,.
$$
This means, that for a set of probability one, $\Omega_{a,b}$, and all $\omega \in \Omega_{a,b}$ we have
$$
\liminf_{n\to\infty} \int_a^b \| \widehat T_{e^{i\varphi},[0,np]} \|^4\,{\rm d}\varphi\,<\, \infty\,.
$$
By Theorem~\ref{th-main2}, for $\omega \in \Omega_{a,b}$, 
the measure $\mu_\omega=\mu_\omega-\nu_\omega$ is purely absolutely continuous in $e^{i(a,b)}$ and has support in all of $e^{i(a,b)}$.\\
Using such a neighborhood for any $z \in \Sigma$ with rational $a,b$, we obtain the open set $\Sigma$ as a countable union of such intervals on the unit circle. Then, the intersection $\Omega'$ of such $\Omega_{a,b}$ has probability one, and for all $\omega \in \Omega'$, the measure $\mu_\omega$ is purely absolutely continuous in $\Sigma$, and $\Sigma \subset \supp \mu_\omega$.
\end{proof}

Now, we are missing to see that $\overline{\Sigma}$ is in fact all of the essential spectrum of $\Uu$ and that it is not empty.
To obtain this we first note the following.
\begin{lemma}\label{lem-3}
The set $\{z \in \UU(1)\,:\,|\Tr T_z|=2\}$ is finite.
\end{lemma}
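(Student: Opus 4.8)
The plan is to exploit that $z\mapsto \Tr T_{z}$ extends to a Laurent polynomial on $\CC^{*}=\CC\setminus\{0\}$. By \eqref{eq-T-sharp-SZ}--\eqref{eq-T-flat-SZ} each factor $T_{z,n}=T_{z,n}^{\sharp}T_{n}^{\flat}$ ($1\le n\le p$) has entries that are Laurent polynomials in $z$ with powers in $\{-1,0,1\}$, so $T_{z}=T_{z,[1,p]}$ has entries that are Laurent polynomials with powers in $\{-p,\dots,p\}$, and hence so is $P(z):=\Tr T_{z}=\sum_{k=-p}^{p}t_{k}z^{k}$. I would also note at the start that $\det T_{z}$ is \emph{independent} of $z$: in \eqref{eq-T-sharp-SZ}--\eqref{eq-T-flat-SZ} the powers of $z$ cancel, giving $\det T_{z,n}=\ccc_{n}c_{n}/(\bbb_{n}b_{n})$, which is nonzero since $\bbb_{n},\ccc_{n},b_{n},c_{n}\neq 0$; thus $\det T_{z}=\prod_{n=1}^{p}\det T_{z,n}=:d$ is a nonzero constant with $|d|=1$ (as $T_{z}\in\UU(1,1)$ for $z\in\UU(1)$).

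Now suppose, for contradiction, that $\{z\in\UU(1):|\Tr T_{z}|=2\}$ is infinite. Since $\bar z=z^{-1}$ on $\UU(1)$, there $|P(z)|^{2}=P(z)P^{*}(z)$, where $P^{*}(z):=\sum_{k}\overline{t_{-k}}\,z^{k}$ is again a Laurent polynomial. Hence $P(z)P^{*}(z)-4$ is a Laurent polynomial vanishing on an infinite subset of the compact set $\UU(1)$; multiplying by $z^{2p}$ gives an ordinary polynomial of degree $\le 4p$ with infinitely many zeros, so it vanishes identically, i.e. $P(z)P^{*}(z)\equiv 4$ on $\CC^{*}$. In particular $P$ has no zeros in $\CC^{*}$, so the polynomial $z^{p}P(z)$ of degree $\le 2p$ has no root other than possibly $0$; therefore $z^{p}P(z)=c\,z^{j}$ for some $c\neq 0$, $0\le j\le 2p$, i.e. $\Tr T_{z}=c\,z^{m}$ is a monomial with $m=j-p\in\{-p,\dots,p\}$ and $|c|=2$.

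It then remains to rule out each such monomial. If $m\neq 0$: Proposition~\ref{prop-U11} gives $(\Tr T_{z})^{2}/\det T_{z}=c^{2}z^{2m}/d\ge 0$ for every $z\in\UU(1)$, but $z^{2m}$ ranges over all of $\UU(1)$, so $c^{2}z^{2m}/d$ ranges over the circle of positive radius $|c|^{2}/|d|$ about $0$, which is not contained in $[0,\infty)$ --- a contradiction. If $m=0$, i.e. $\Tr T_{z}\equiv c$: I would compare with the leading behaviour as $|z|\to\infty$. From \eqref{eq-T-sharp-SZ} the $z^{1}$-coefficient of $T_{z,n}^{\sharp}$ is $G_{n}=\smat{0 & 0\\ 0 & \bbb_{n}^{-1}}$, so $z^{-p}T_{z}\to N:=(G_{p}T_{p}^{\flat})(G_{p-1}T_{p-1}^{\flat})\cdots(G_{1}T_{1}^{\flat})$. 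Each $G_{n}T_{n}^{\flat}=\bbb_{n}^{-1}b_{n}^{-1}\,\smat{0\\1}\smat{-a_{n} & 1}$ has rank one, and since $\smat{-a_{n} & 1}\smat{0\\1}=1$ the product telescopes to $N=\bigl(\prod_{n=1}^{p}\bbb_{n}^{-1}b_{n}^{-1}\bigr)\smat{0 & 0\\ -a_{1} & 1}$, hence $\Tr N=\prod_{n=1}^{p}\bbb_{n}^{-1}b_{n}^{-1}\neq 0$. But $\Tr T_{z}\equiv c$ forces $z^{-p}\Tr T_{z}=c\,z^{-p}\to 0$, so $\Tr N=0$ --- again a contradiction. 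Hence the set is finite.

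The only genuinely new inputs are the $z$-independence of $\det T_{z}$ and the rank-one telescoping giving $\Tr N\neq 0$; the step I expect to need the most care is the passage from ``$|\Tr T_{z}|^{2}=4$ on an infinite subset of $\UU(1)$'' to the global identity $P P^{*}\equiv 4$ on $\CC^{*}$, which rests on compactness of $\UU(1)$ together with the fact that a nonzero Laurent polynomial has only finitely many zeros.
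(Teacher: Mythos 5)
Your proof is correct. It shares the paper's two core ingredients --- the Laurent-polynomial/analyticity argument that promotes ``$|\Tr T_z|=2$ on an infinite subset of $\UU(1)$'' to an identity on all of $\CC^*$, and an asymptotic analysis of the transfer matrix (you at $|z|\to\infty$, the paper at $z\to 0$) producing a rank-one telescoping product with nonzero trace --- but it routes through a more elaborate intermediate step. The paper simply observes that $f(z):=(\Tr T_z)^2/\det T_z$ is analytic on $\CC^*$ and would have to be $\equiv 4$, yet $f(z)\to\infty$ as $z\to 0$ because $\Tr(z^pT_z)$ has a nonzero limit while $\det(z^pT_z)\to 0$; this kills all cases at once. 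You instead first deduce from $PP^*\equiv 4$ that $\Tr T_z$ is a monomial $cz^m$, and then need a separate positivity argument (via Proposition~\ref{prop-U11} and the observation that $\det T_z$ is a constant of modulus~$1$) to rule out $m\neq 0$, followed by the asymptotic/telescoping argument to rule out $m=0$. Your extra observations --- that $\det T_{z,n}=\ccc_n c_n/(\bbb_n b_n)$ is $z$-independent, and the clean rank-one factorization $G_nT_n^\flat=\bbb_n^{-1}b_n^{-1}\smat{0\\1}\smat{-a_n & 1}$ --- are correct and pleasant, just not required for the shortest route. A small simplification of your own line: once you know $PP^*\equiv 4$ with $\det T_z\equiv d$, Proposition~\ref{prop-U11} forces $P(z)^2/d\ge 0$ with modulus $4$ on $\UU(1)$, hence $P(z)^2\equiv 4d$ there and so $P$ is constant by analytic continuation and connectedness; this collapses directly to your $m=0$ case and avoids the $m\neq 0$ branch.
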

\begin{proof}
We note that by Proposition~\ref{prop-U11} we find for $z\in\UU(1)$ that $|\Tr T_z|=2 \Leftrightarrow \tfrac{(\Tr T_z)^2}{\det T_z} = 4$.
Moreover,  $z\mapsto\tfrac{(\Tr T_z)^2}{\det T_z} $ is analytic for $z\neq 0$. Therefore, if this would be equal to 4 on an infinite set within the unit circle $\UU(1)$, then the  function would have to be identically $4$, and so would be  $\frac{(\Tr z^p T_z)^2}{ \det(z^p T_z)}$. Now, using \eqref{eq-T-sharp-SZ}, \eqref{eq-T-flat-SZ}  one finds for $n\geq 1$
\begin{equation}\label{eq-lim-z0}
\lim_{z \to 0} z T_{z,n} = \pmat{e^{i\varphi_n} \bbb_n^{-1} b_n^{-1} & e^{i\chi_n} \bbb_n^{-1} d_n b_n^{-1}  \\ 0 & 0},  
\end{equation}
where $e^{i\varphi_n}=\det(V_nW_n), e^{i\chi_n}=-\det(W_n)$.
One therefore finds 
$$
\lim_{z\to 0} \det(z^p T_z)\,=\, 0 \qtx{and} \lim_{z \to 0} \Tr (z^p T_z)\,=\, \prod_{n=1}^p e^{i\varphi_n} \bbb_n^{-1} b_n^{-1} \,\neq\, 0,
$$
and thus $\displaystyle\lim_{z \to 0} \tfrac{(\Tr T_z)^2}{\det T_z} =\infty$ on the Riemann sphere. Hence,  $\tfrac{(\Tr T_z)^2}{\det T_z} $ is not constant.
Thus, the set is finite.
\end{proof}

Now let us classify the discrete spectrum of $\Uu$.
\begin{lemma}\label{lem-4}
The set of $z \in \UU(1)$ where $z$ is an eigenvalue of $\Uu$ is finite and coincides with the set where
$T_{z,0} \smat{1\\1}$ is an eigenvector of $T_z$ with eigenvalue $|\lambda_z|<1$. In particular, one has $|\Tr T_z|>2$ for such $z$.
\end{lemma}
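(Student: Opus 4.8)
The plan is to convert the statement ``$z$ is an eigenvalue of $\Uu$'' into a decay condition for the one‑period transfer matrix $T_z$ applied to $v_z:=T_{z,0}\smat{1\\1}$, and then to read off both the characterisation and the finiteness from the $\UU(1,1)$ structure of $T_z$ together with an analyticity argument. Throughout $u=1$ and, by Lemma~\ref{lem-1}, $e_{(0,-)}$ is cyclic.

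\emph{Step 1: reduction to the transfer matrix.} Since $e_{(0,-)}$ is cyclic, $z\in\UU(1)$ is an eigenvalue of $\Uu$ iff $\Uu\Psi=z\Psi$ has a non‑zero solution $\Psi\in\ell^2(\GG)$; setting $\Phi:=\Ww^{-1}\Psi\in\ell^2(\GG)$ (so $\|\Phi\|=\|\Psi\|$) the pair $(\Psi,\Phi)$ solves the system of Proposition~\ref{prop-eig-equation}, and the boundary relation reads $\Psi_{(0,-)}=\Phi_{(0,-)}$, whence $\smat{\Phi_{(m,+)}\\\Psi_{(m,+)}}=\Psi_{(0,-)}\,T_{z,[0,m]}\smat{1\\1}$ for all $m\ge0$. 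A short computation using $b_m\ne0$ in $W_m=\smat{a_m&b_m\\c_m&d_m}$ shows that $\Psi_{(0,-)}=0$ forces $\Psi=0$, so for an eigenfunction I may normalise $\Psi_{(0,-)}=1$, and then $(\Psi,\Phi)$ is entirely determined by this recursion. Because $\Aa=\emptyset$ and the zipper is $p$‑periodic, only finitely many matrices $T_{z,k}$, $T_k^\flat$ occur and these (and their inverses, since $b_k,\bbb_k\ne0$) are uniformly bounded on $\UU(1)$; hence the linear maps recovering $(\Psi_m,\Phi_m)$ from the transfer data $T_{z,[0,m]}\smat{1\\1}$, and conversely, are bounded uniformly in $m$ and in $z\in\UU(1)$. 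Using $T_{z,[0,np+k]}=T_{z,[1,k]}T_z^{\,n}T_{z,0}$ (periodicity, with $T_{z,[1,0]}$ the identity) I conclude that $z\in\UU(1)$ is an eigenvalue of $\Uu$ if and only if
\[
\sum_{n\ge0}\big\|T_z^{\,n}v_z\big\|^2<\infty ,\qquad v_z:=T_{z,0}\smat{1\\1},
\]
and, in that case, $\|\Psi_m\|$ decays at the rate of $\|T_z^{\,n}v_z\|$.

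\emph{Step 2: the $\UU(1,1)$ dichotomy.} For $z\in\UU(1)$ one has $T_{z,0}\in\UU(1,1)$, so $v_z\ne0$, and $T_z\in\UU(1,1)$ with $|\det T_z|=1$; by Proposition~\ref{prop-U11} its eigenvalues have the form $e^{i\chi_z}\lambda^{\pm1}$ with $\lambda+\lambda^{-1}\in\RR$. If $|\Tr T_z|\le2$ then $|\lambda|=1$ and $T_z$ is either power‑bounded or a single Jordan block fixing its eigendirection; in every such case $\inf_n\|T_z^{\,n}w\|>0$ for all $w\ne0$, so the series in Step~1 diverges and $z$ is not an eigenvalue — which already gives the last assertion of the lemma. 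If $|\Tr T_z|>2$ then, using $|\det T_z|=1$, $T_z$ is diagonalisable with one eigenvalue $\lambda_z$ of modulus $<1$ (eigenvector $w^-$) and one of modulus $|\lambda_z|^{-1}>1$ (eigenvector $w^+$); writing $v_z=c_-w^-+c_+w^+$ one sees that $\|T_z^{\,n}v_z\|$ is summable iff $c_+=0$, i.e. iff $v_z=T_{z,0}\smat{1\\1}$ is an eigenvector of $T_z$ for the eigenvalue $\lambda_z$ of modulus $<1$. Combined with Step~1 this is exactly the asserted description of the eigenvalue set.

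\emph{Step 3: finiteness.} Let $d(z)$ be the determinant of the $2\times2$ matrix with columns $v_z$ and $T_zv_z$. Since $T_z$ and $T_{z,0}$ are rational matrix functions of $z$ regular near $\UU(1)$ (no poles on $\UU(1)$ as $\Aa=\emptyset$), $d$ is analytic on a neighbourhood of $\UU(1)$, and by Step~2 the eigenvalue set of $\Uu$ is contained in $\{z\in\UU(1):d(z)=0\}$. The function $d$ is not identically zero: for $z\in\Sigma$, which is non‑empty by Proposition~\ref{prop-sigma-ac}, one computes $v_z^{*}Jv_z=\smat{1\\1}^{*}T_{z,0}^{*}JT_{z,0}\smat{1\\1}=\smat{1\\1}^{*}J\smat{1\\1}=0$ with $J=\smat{1&0\\0&-1}$, since $T_{z,0}\in\UU(1,1)$; but for $z\in\Sigma$ the element $T_z\in\UU(1,1)$ has two distinct unimodular eigenvalues, so from $T_z^{*}JT_z=J$ its eigenvectors are $J$‑orthogonal, hence $J$‑nondegenerate, hence not $J$‑isotropic, so $v_z$ is not an eigenvector of $T_z$ and $d(z)\ne0$. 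A non‑zero analytic function on a neighbourhood of the compact set $\UU(1)$ has only finitely many zeros there, so the eigenvalue set of $\Uu$ is finite.

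\emph{Main obstacle.} The delicate point is Step~1: establishing that $\Psi\in\ell^2$ is equivalent to square‑summability of the single sequence $(T_z^{\,n}v_z)_n$, i.e. that the transfer‑matrix recursion reconstructs $(\Psi_m,\Phi_m)$ from $T_{z,[0,m]}\smat{1\\1}$ by maps bounded uniformly in $m$ and in $z$ — this is precisely where $p$‑periodicity and assumptions (A1), (A2) (via $b_k,\bbb_k\ne0$) are used. Once this dictionary is in place, Step~2 is just the trace classification of $\UU(1,1)$ matrices and Step~3 is a routine analyticity argument.
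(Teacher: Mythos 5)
Your Steps 1 and 2 correctly reproduce the paper's Claim~1: the reduction to $\sum_n\|T_z^{\,n}v_z\|^2<\infty$ and the trace dichotomy from the $\UU(1,1)$ structure (via Proposition~\ref{prop-U11}) are essentially the same as the paper's argument, with Step~2 supplying slightly more detail on the marginal case $|\Tr T_z|=2$. The $J$-isotropy observation in Step~3 is a nice alternative: noting $v_z^*Jv_z=0$ (because $T_{z,0}\in\UU(1,1)$ and $\smat{1\\1}$ is $J$-isotropic, $J=\smat{1&0\\0&-1}$) while for $z\in\Sigma$ every eigenvector of $T_z$ is $J$-anisotropic, it directly shows that $\Sigma$ contains no eigenvalues and that $d(z)\neq 0$ on $\Sigma$.

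However, Step~3 is circular as written. To conclude $d\not\equiv 0$ you need a point $z\in\Sigma$, and you invoke Proposition~\ref{prop-sigma-ac} for $\Sigma\neq\emptyset$; but the paper's proof of Proposition~\ref{prop-sigma-ac} relies on Lemma~\ref{lem-4} (and on Lemma~\ref{lem-5}, whose proof also cites Lemma~\ref{lem-4}), so the non-emptiness of $\Sigma$ is a downstream consequence of the statement you are proving, not an available input. The paper avoids this with a proof by contradiction that does not presuppose $\Sigma\neq\emptyset$: if the eigenvalue set were infinite, then $d\equiv 0$ on $\UU(1)$, hence $v_z=T_{z,0}\smat{1\\1}$ is an eigenvector of $T_z$ for all $z\in\CC^*$ with an analytic eigenvalue $\lambda_z$; at any eigenvalue $z_0$ one has $|\lambda_{z_0}|<1$, hence $|\lambda_z|<1$ on an open arc around $z_0$, producing uncountably many eigenvalues of a unitary operator on the separable space $\ell^2(\GG)$ — a contradiction. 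To make your proof non-circular you should either substitute this contradiction argument for the appeal to Proposition~\ref{prop-sigma-ac}, or give an independent proof that $\Sigma\neq\emptyset$, which does not appear to be available at this stage of the paper.
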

\begin{proof}
The Lemma has two claims:\\
Claim 1: $z$ eigenvalue of $\Uu$ $\;\Leftrightarrow\;$ $T_{z,0} \smat{1\\1}$ is an eigenvector of $T_z$ with eigenvalue $|\lambda_z|<1$.\\
Claim 2: The set of eigenvalues is finite.\\[.2cm]
Let us first show Claim 1:  For the unique formal solutions (uniqueness follows because it needs to have the correct boundary condition) of $\Uu \Psi=z\Psi$ and $\Ww\Phi=\Psi$ we have
that
$$
\pmat{ \Phi_{(n,+)} \\ \Psi_{(n,+)}}\,=\, T_{z,[0,n]} \pmat{1\\1}\,, \quad
\pmat{\Phi_{(n,+)} \\ \Psi_{(n,+)} }\,=\,T_{z,n}^\sharp \,\pmat{\Psi_{(n,-)} \\ \Phi_{(n,-}) }.
$$
We have the following equivalences:
$$z\ \text{is an eigenvalue} \Leftrightarrow\;\Psi \in \ell^2(\GG)\;\Leftrightarrow\;\Phi\in\ell^2(\GG).$$
 By  Proposition~\ref{prop-U11}, if for $z\in\UU(1)$ the vector
 $$
\pmat{ \Phi_{(np,+)} \\ \Psi_{(np,+)}}\,=\, T_{z,[0,pn]} \pmat{1\\1}\,=\,T_z^n T_{z,0}\pmat{1\\1},
$$
is decaying, then $T_{z,0}\smat{1\\1}$ is an eigenvector of $T_z$ with eigenvalue $|\lambda_z|<1$.
This shows the direction `$\Rightarrow$' of Claim 1.\\
Now assume that $T_{z,0}\smat{1\\1}$ is an eigenvector of $T_z$ with eigenvalue $|\lambda_z|<1$.
Then, $|\Psi_{(np,+)}|, |\Phi_{{np,+}}|$ are exponentially decaying in the sense that they are bounded by 
$C|\lambda_z^{n}|$. Using the uniform boundedness of $T^\sharp_{z,n}$ and $T^\flat_n$ we get for $k=0,\ldots,p-1$ bounds of the form
$$
|\Psi_{(np+k,\pm)}| \leq  C \left\|\pmat{\Phi_{(np,+)} \\ \Psi_{(np,+)}} \right\|,
$$
with a uniform constant $C$ (uniform in $n$). Thus, $\Psi = \smat{\Psi_{(n,+)} \\ \Psi_{(n,-)}}_{n\in\ZZ_+} \in \ell^2(\GG)$ is an eigenvector and $z$ is an eigenvalue. This finishes the proof of Claim 1. $\hfill \blacksquare$

\vspace{.2cm}

\noindent For Claim 2, assume that there is an infinite number of eigenvalues.
Note that $T_{z,0}\smat{1\\1}$ is an eigenvector of $T_z$, iff 
$$\det\big( T_z T_{z,0}\smat{1\\1}, T_{z,0}\smat{1\\1}\big)\,=\,0\,.$$ The left hand side is a meromorfic function in $z$ and in fact analytic for $z\neq 0$.
If there is an infinite number of eigenvalues, then this function is zero for an infinite number of $z \in \UU(1)$ on the unit circle.Therefore, the
expression is identically to the zero function, and $T_{z,0}\smat{1\\1}$
is an eigenvector of $ T_z $ for all $z\in \CC^*$.
The eigenvalue $\lambda_z$ is then given by  
$$
\lambda_z\,=\,\pmat{1 & 0} T_{z,0}^{-1} T_z T_{z,0}\pmat{1\\1},
$$
which is also analytic in a neighborhood of the unit circle $\UU(1)$.
At a specific eigenvalue $z_0$ we have $|\lambda_{z_0}|<1$ and thus, $|\lambda_z|<1$ for an open interval on the unit circle around $z_0$.
But by Claim~1 this implies a set of non-countably many eigenvalues which can not happen on a separable Hilbert space and we have a contradiction.
This proves Claim~2 and finishes the proof of the lemma.
\end{proof}

And finally we remark that there is no essential spectrum where $|\Tr T_z|>2$.

\begin{lemma}\label{lem-5}
The set $\{z \in \UU(1)\,:\, |\Tr z|>2\, \wedge \, z \;\text{is not eigenvalue of }\Uu \}$ is in the resolvent set of $\Uu$.

\end{lemma}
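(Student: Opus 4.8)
\emph{Strategy.} The plan is to show that any such $z$ lies in the complement of $\supp\mu$, where $\mu$ is the spectral measure of $\Uu$ at $e_{(0,-)}$; since $e_{(0,-)}$ is cyclic by Lemma~\ref{lem-1}, one has $\spec(\Uu)=\supp\mu$, so that $z$ then belongs to the resolvent set. Throughout we keep the reduction $u=1$ made at the beginning of this section, and recall that $\Aa=\emptyset$ for the periodic scattering zipper, so the point part $\nu$ in Theorem~\ref{th-main} vanishes. So fix $z_0\in\UU(1)$ with $|\Tr T_{z_0}|>2$ which is not an eigenvalue of $\Uu$. Since $z\mapsto T_z$ is analytic for $z\neq 0$ and $|\Tr T_{z}|>2$ is an open condition, there is a relatively open arc $U\subset\UU(1)$ containing $z_0$ on which $|\Tr T_z|>2$. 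By Proposition~\ref{prop-U11}, for each $z\in U$ the matrix $T_z\in\UU(1,1)$ is hyperbolic: its eigenvalues are $e^{i\chi_z}\lambda_z$ and $e^{i\chi_z}\lambda_z^{-1}$ with $\lambda_z\in\RR$, $0<|\lambda_z|<1$, hence one strictly inside and one strictly outside $\DD$, with eigenvectors $u_z^-$ and $u_z^+$ spanning $\CC^2$. After shrinking $U$, the quantities $\lambda_z$, $u_z^\pm$ depend continuously on $z$, the pair $u_z^-,u_z^+$ stays uniformly linearly independent, and $|\lambda_z|\leq\rho<1$ uniformly on $U$. By Lemma~\ref{lem-4}, $z_0$ not being an eigenvalue of $\Uu$ is equivalent to $T_{z_0,0}\smat{1\\1}$ not being an eigenvector of $T_{z_0}$ with eigenvalue of modulus $<1$, i.e.\ to $T_{z_0,0}\smat{1\\1}$ not being parallel to $u_{z_0}^-$; moreover $T_{z_0,0}\smat{1\\1}\neq 0$ since $T_{z_0,0}$ is invertible by \eqref{eq-T-sharp-SZ}. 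Shrinking $U$ once more, by continuity the component of $T_{z,0}\smat{1\\1}$ along $u_z^+$ has norm bounded below by some fixed $c>0$ for all $z\in U$.

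Next I would deduce a uniform exponential lower bound $\|T_{z,[0,k]}\smat{1\\1}\|\geq c'\,\rho^{-\lfloor k/p\rfloor}$ for all $z\in U$ and $k\in\NN_0$. For $k=np$ this uses $T_{z,[0,np]}=T_z^{\,n}T_{z,0}$: the $u_z^+$-component of $T_z^{\,n}T_{z,0}\smat{1\\1}$ has norm $\geq c\,|\lambda_z|^{-n}\geq c\,\rho^{-n}$, and comparison with the full norm via the uniform linear independence of $u_z^\pm$ gives $\|T_{z,[0,np]}\smat{1\\1}\|\geq c'\rho^{-n}$. For $k=np+j$, $0\leq j<p$, periodicity of $V_n,W_n$ makes $T_{z,[np+1,np+j]}$ one of finitely many fixed finite products of the matrices $T^\sharp_{z,m},T^\flat_m$, which by \eqref{eq-T-sharp-SZ}, \eqref{eq-T-flat-SZ} and assumption (A2) ($b_n,\bbb_n\neq 0$) have norms and inverse-norms uniformly bounded on $\UU(1)$; writing $T_{z,[0,np]}=T_{z,[np+1,np+j]}^{-1}\,T_{z,[0,np+j]}$ yields $\|T_{z,[0,np+j]}\smat{1\\1}\|\geq\|T_{z,[np+1,np+j]}\|^{-1}\|T_{z,[0,np]}\smat{1\\1}\|\geq c'\rho^{-n}$ with $c'$ still uniform.

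Finally I would invoke Theorem~\ref{th-main}: with $\nu=0$, for every $f\in C(\UU(1))$,
\[
\mu(f)\,=\,\lim_{n\to\infty}\int_0^{2\pi}\frac{f(e^{i\varphi})}{\pi\,\|T_{e^{i\varphi},[0,n]}\smat{1\\1}\|^{2}}\,{\rm d}\varphi .
\]
Choosing $f\geq 0$ continuous, supported in $U$, with $f(z_0)>0$, the integrand is dominated by $\frac{\|f\|_\infty}{\pi (c')^{2}}\,\rho^{2\lfloor n/p\rfloor}\,\mathbf{1}_U$, which tends to $0$ uniformly; hence $\mu(f)=0$, forcing $\mu(\{f>0\})=0$, so $z_0\notin\supp\mu=\spec(\Uu)$. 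Thus $z_0$ lies in the resolvent set of $\Uu$.

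\emph{Main obstacle.} The delicate point is the second step: upgrading the pointwise hyperbolicity of $T_z$ into a genuinely uniform exponential lower bound on $\|T_{z,[0,k]}\smat{1\\1}\|$, uniform in $z\in U$ and over all $k$ (not merely multiples of $p$). This requires the continuous choice of the expanding eigendirection $u_z^+$ together with uniform two-sided control of the period-length tail factors $T_{z,[np+1,np+j]}$; once these are in hand, Carmona's formula (Theorem~\ref{th-main}) closes the argument at once.
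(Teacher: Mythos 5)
Your proposal is correct and follows essentially the same route as the paper's proof: use hyperbolicity of $T_z$ (Proposition~\ref{prop-U11}), the Lemma~\ref{lem-4} characterization of eigenvalues to ensure a uniformly nonzero component of $T_{z,0}\smat{1\\1}$ along the expanding direction, deduce exponential growth of $\|T_{z,[0,\cdot]}\smat{1\\1}\|$, feed it into Theorem~\ref{th-main} (with $\nu=0$) to get $\mu=0$ near $z_0$, and conclude via cyclicity (Lemma~\ref{lem-1}). The only cosmetic differences are that the paper works with the conjugating matrix $M_z$ and the components $x_z,y_z$ rather than a continuously chosen eigenbasis $u_z^\pm$, and the paper simply evaluates the limit in Theorem~\ref{th-main} along the subsequence $n=mp$ (the limit is known to exist, so this is legitimate), whereas you establish the lower bound for all $k$ by controlling the within-period factors; both are fine.
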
 
\begin{proof}
By continuity and Lemma~\ref{lem-4}, the set
$$\Ss\,=\, \{z \in \UU(1)\,:\, |\Tr z|>2\} \setminus \{z\in\UU(1)\,:\, z\; \text{eigenvalue}\}$$ 
is a finite union of open intervals on the unit circle. Moreover, for $z \in \Ss$ we find matrices $M_z$ and $\lambda_z$ with $|\lambda_z|<1$ such that
$$
M_z^{-1} T_z M_z\,=\, \pmat{ \lambda_z \\ & s_z \lambda_z^{-1}} \qtx{where} s_z=\det T_z \in \UU(1).
$$
Moreover, for $z$ varying inside compact subintervals $e^{i[a,b]}\subset \Ss$,  one may choose $M_z$ $M_z^{-1}$ so that 
$$\|M_z\|<C,\quad \|M^{-1}_z\|<C, \quad |\lambda^{-1}_z|>C>1 \qtx{uniformly for} z\in e^{i[a,b]}\,.
$$
Now let
$$
\pmat{x_z \\ y_z}\,=\,M_z^{-1} T_{z,0} \pmat{1\\1},
$$
and as $z$ is not an eigenvalue of $\Uu$, Lemma~\ref{lem-4} implies $y_z \neq 0$.Thus, $|y_z|>\varepsilon>0$ for some $\varepsilon>0$ and all
$z \in e^{i[a,b]}$. Thus, for $f\in C(\UU(1))$ supported in $e^{i[a,b]}$ we have using Theorem~\ref{th-main}
$$
\mu(f)\,=\,\lim_{n \to \infty} \int_a^b \frac{f(e^{i\varphi}){\rm d}\varphi}{\pi \|T_{z,[0,np]} \smat{1\\1}\|^2}\,=\,
\lim_{n \to \infty} \int_a^b \left\| M_z \pmat{\lambda_z^n x_z \\ \lambda_z^{-n} s_z y_z} \right\|^{-2}\,\frac{f(e^{i\varphi})\,{\rm d}\varphi}{\pi} 
$$
$$
\,\leq\, \lim_{n\to \infty} \frac{ \|f\|_\infty (b-a)}{C^{-2} \left( |\lambda^{-n}_z y_z| - |x_z \lambda_z^n| \right)^{2}}
\,\leq \, \lim_{n\to \infty} \frac{\|f\|_\infty (b-a) C^2}{(\varepsilon C^n-|x_z| C^{-n})^2}\,=\,0\,.
$$
Thus, $\supp \mu \cap \Ss=\emptyset$ and with Lemma~\ref{lem-1} the claim follows. 
\end{proof}  

\begin{proof}[Proof of Proposition~\ref{prop-sigma-ac}]
Note that by Lemma~\ref{lem-1} the spectrum of $\Uu$ has multiplicity 1 everywhere and is given by the support and measure types of $\mu$.
Then, Lemmata~\ref{lem-3}, \ref{lem-4} and \ref{lem-5} imply that there is no essential spectrum in the interior\footnote{interior with respect to the topology on the unit circle $\UU(1)$.} of
$\{z\in\UU(1)\,:\,|\Tr T_z|\geq 2\}$. So Lemma~\ref{lem-2} now shows
$$
\sigma_{ess}(\Uu)\,=\,\sigma_{ac}(\Uu)\,=\,\overline{\Sigma}\;.
$$
Together with Lemma~\ref{lem-4} we deduce that the spectrum is purely absolutely continuous within $\overline{\Sigma}$. Apart from $\overline{\Sigma}$
the spectrum contains only a finite number of eigenvalues outside $\overline{\Sigma}$. Finally, as the Hilbert space is infinite dimensional and the spectrum has multiplicity one, a finite set of eigenvalues can not be the whole spectrum. Thus, $\Sigma\neq \emptyset$.
This finishes the proof of Proposition~\ref{prop-sigma-ac}.
\end{proof}

\begin{proof}[Proof of Theorem~\ref{th-main3}]
As $\Uu_\omega$ is almost surely a compact perturbation of $\Uu$ by condition (C2), we see that as a set, the essential spectrum of
$\Uu_\omega$ is also given by $\overline{\Sigma}$, and Lemma~\ref{lem-2} now shows Theorem~\ref{th-main3}.
\end{proof}.

\appendix

\section{Facts of linear algebra}

\begin{proposition}\label{prop-A0}
For a matrix 
$$
M=\pmat{\alpha & \beta \\ \gamma & \delta}\,\in\,\CC^{2L \times 2L},
$$
with square blocks of the same size $L\times L$ where $\beta$ is invertible,  then define
$$
\varphi_\sharp(M)\,=\,\pmat{\beta^{-1} & -\beta^{-1}\alpha \\ \delta \beta^{-1} & \gamma-\delta \beta^{-1} \alpha} \qtx{and}
\varphi_\flat(M)\,=\,\pmat{\gamma-\delta\beta^{-1}\alpha & \delta \beta^{-1} \\ -\beta^{-1} \alpha & \beta^{-1}}\;.
$$
Then, 
$$
\pmat{\Psi_- \\ \Psi_+} \,=\, M \pmat{\Phi_- \\ \Phi_+}
\quad\Leftrightarrow\quad
\pmat{\Phi_+ \\ \Psi_+}\,=\,\varphi_\sharp(M) \pmat{\Psi_- \\ \Phi_-}\quad \Leftrightarrow\quad
\pmat{\Psi_+ \\ \Phi_+}\,=\,\varphi_\flat(M) \pmat{\Phi_- \\ \Psi_-}\;.
$$
Moreover, if $M\in \UU(2L)$, then $\varphi_\sharp(M), \varphi_\flat(M) \in \UU(L,L)$, where
$$
\UU(L,L)\,=\,\left\{T\in \CC^{2 \times 2}\,:\, T^*\pmat{\one \\ & -\one} T=\pmat{\one \\ & -\one} \right\}.
$$
The inverse maps are given by
$$
\varphi_\sharp^{-1} \pmat{A&B\\C&D}\,=\,\pmat{-BA^{-1} & A^{-1} \\ D-CA^{-1}B & CA^{-1}}\;, \qquad
\varphi_\flat^{-1} \pmat{A&B\\ C&D}\,=\,\pmat{-CD^{-1} & D^{-1} \\ A-BD^{-1}C & BD^{-1}}.
$$
\end{proposition}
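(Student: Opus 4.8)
The plan is to treat the three assertions of the proposition --- the equivalence of the three matrix equations, the claim $\varphi_\sharp(M),\varphi_\flat(M)\in\UU(L,L)$ for $M\in\UU(2L)$, and the formulas for the inverse maps --- one at a time, each reducing to an elementary block computation. For the equivalence I would solve the block linear system directly. Writing $M=\pmat{\alpha&\beta\\\gamma&\delta}$, the relation $\pmat{\Psi_-\\\Psi_+}=M\pmat{\Phi_-\\\Phi_+}$ reads $\Psi_-=\alpha\Phi_-+\beta\Phi_+$ and $\Psi_+=\gamma\Phi_-+\delta\Phi_+$. Since $\beta$ is invertible, the first identity is equivalent to $\Phi_+=\beta^{-1}\Psi_--\beta^{-1}\alpha\Phi_-$; inserting this in the second gives $\Psi_+=\delta\beta^{-1}\Psi_-+(\gamma-\delta\beta^{-1}\alpha)\Phi_-$, and stacking these two lines is exactly $\pmat{\Phi_+\\\Psi_+}=\varphi_\sharp(M)\pmat{\Psi_-\\\Phi_-}$. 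All steps are reversible, so the $\varphi_\sharp$ equivalence holds, and the $\varphi_\flat$ equivalence follows from the identity $\varphi_\flat(M)=\Pi\,\varphi_\sharp(M)\,\Pi$ with $\Pi=\pmat{0&\one\\\one&0}$ (immediate from the displayed formulas) by conjugating with $\Pi$ on both sides.

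For the $\UU(L,L)$ property I would use unitarity in the form of norm preservation rather than a head-on block multiplication. If $M\in\UU(2L)$ then $\|\Psi_-\|^2+\|\Psi_+\|^2=\|\Phi_-\|^2+\|\Phi_+\|^2$ whenever $\pmat{\Psi_-\\\Psi_+}=M\pmat{\Phi_-\\\Phi_+}$, which rearranges to $\|\Phi_+\|^2-\|\Psi_+\|^2=\|\Psi_-\|^2-\|\Phi_-\|^2$. By the equivalence already proved, and because $(\Psi_-,\Phi_-)$ ranges over all of $\CC^L\times\CC^L$ as $(\Phi_-,\Phi_+)$ does (invertibility of $\beta$), this says that the form $w\mapsto w^*Jw$ with $J=\pmat{\one&0\\0&-\one}$ is preserved by $w\mapsto\varphi_\sharp(M)w$, i.e. $w^*\varphi_\sharp(M)^*J\varphi_\sharp(M)w=w^*Jw$ for all $w\in\CC^{2L}$. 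Polarizing over $\CC$ --- a matrix $E$ with $w^*Ew\equiv0$ is zero --- yields $\varphi_\sharp(M)^*J\varphi_\sharp(M)=J$, that is $\varphi_\sharp(M)\in\UU(L,L)$. For $\varphi_\flat$ one argues identically from $\|\Psi_+\|^2-\|\Phi_+\|^2=\|\Phi_-\|^2-\|\Psi_-\|^2$, or invokes $\varphi_\flat(M)=\Pi\,\varphi_\sharp(M)\,\Pi$ with $\Pi^*J\Pi=-J$ so that the two sign changes cancel.

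The inverse formulas I would check by direct substitution: from $\varphi_\sharp(M)=\pmat{A&B\\C&D}$ one reads off $A=\beta^{-1}$, $B=-\beta^{-1}\alpha$, $C=\delta\beta^{-1}$, $D=\gamma-\delta\beta^{-1}\alpha$, and solving these relations backwards (with $A=\beta^{-1}$ invertible) recovers $\alpha,\beta,\gamma,\delta$ and hence the stated $\varphi_\sharp^{-1}$; the same bookkeeping with the $(2,2)$-block of $\varphi_\flat(M)$ in the role of $\beta$ gives $\varphi_\flat^{-1}$. The one thing to watch is keeping left and right multiplication straight in the noncommutative block setting. Apart from that the proposition is routine; the only mildly delicate point is the polarization shortcut used for the $\UU(L,L)$ claim, which is what makes that part short rather than a bare-hands computation with $M^*M=MM^*=\one$.
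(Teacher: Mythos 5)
Your proof is correct, and on the middle claim you take a genuinely different and nicer route than the paper. For the three-way equivalence the paper argues exactly as you do: solve the first row for $\Phi_+$, substitute into the second, and pass between $\varphi_\sharp$ and $\varphi_\flat$ via conjugation by the block swap $\Pi = \smat{0 & \one \\ \one & 0}$. For the $\UU(L,L)$ claim, however, the paper multiplies out $\varphi_\sharp(M)^* \smat{\one & \\ & -\one} \varphi_\sharp(M)$ by brute force using the block relations $\alpha^*\alpha+\gamma^*\gamma=\one$, $\delta^*\delta+\beta^*\beta=\one$, $\alpha^*\beta+\gamma^*\delta=0$; your argument instead reads off the conservation of $\|\Phi_+\|^2-\|\Psi_+\|^2 = \|\Psi_-\|^2-\|\Phi_-\|^2$ from unitarity of $M$, notes the surjectivity of $(\Phi_-,\Phi_+)\mapsto(\Psi_-,\Phi_-)$ (invertibility of $\beta$), and polarizes over $\CC$. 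That is shorter, more conceptual, and sidesteps the computation entirely; the only price is invoking the complex polarization lemma, which you correctly flag. Finally, the paper's written proof does not actually establish the inverse formulas at all --- they are stated but left unverified --- so your plan to solve $A=\beta^{-1}$, $B=-\beta^{-1}\alpha$, $C=\delta\beta^{-1}$, $D=\gamma-\delta\beta^{-1}\alpha$ back for $\alpha,\beta,\gamma,\delta$ is the right thing to do. If you carry it out with the noncommutative care you mention, you will in fact find $\alpha=-A^{-1}B$ rather than $-BA^{-1}$ (and likewise $-D^{-1}C$ rather than $-CD^{-1}$ for $\varphi_\flat^{-1}$); one can confirm by direct substitution that $\varphi_\sharp$ applied to the matrix as printed yields $ABA^{-1}$, not $B$, in the $(1,2)$ slot. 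So the proposition as stated contains a transposition slip precisely at the point you were worried about --- a useful thing to have caught.
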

\begin{proof}
Resolving the linear system 
$$\Psi_-=\alpha \Phi_-+\beta \Phi_+,\qquad \Psi_+=\gamma \Phi_- +\delta \Phi_+,$$  for $\Psi_+,\; \Phi_+$ gives the equivalent system
$$
\Phi_+ = \beta^{-1} \Psi_- - \beta^{-1} \alpha \Phi_-\;,\qquad 
\Psi_+=\delta \beta^{-1} \Psi_- +(\gamma-\delta \beta^{-1}\alpha) \Phi_-\;.
$$
For the second part note $\varphi_\flat(M)=\smat{0 & I \\ I & 0} \varphi_\sharp(M) \smat{0 & I \\ I & 0} \in \UU(L,L)$ if and only if $\varphi_\sharp(M) \in \UU(L,L)$ and using $B=\beta^{-1}$, $\one=\alpha^*\alpha+\gamma^* \gamma=\delta^*\delta+\beta^*\beta$ and $\alpha^*\beta+\gamma^*\delta=0$ 
for $M\in\UU(2L)$, we find
$$
\varphi_\sharp(M)^* \pmat{\one \\ & -\one} \varphi_\sharp(M)\,=\,\pmat{B^* &  B^* \delta^*  \\ - \alpha^* B^* & \gamma^*-\alpha^* B^* \delta^*}\pmat{B & -B \alpha \\ - \delta B & \delta B \alpha - \gamma}\,=\,\pmat{X & Y \\ Y^* & Z},
$$
where 
\begin{align*}
X
&=B^*(\one-\delta^*\delta) B=B^* \beta^* \beta B = \one\\
Y
&=B^*(\delta^* \delta-\one) B\alpha-B^* \delta^* \gamma =-B^* (\beta^*\alpha+\delta^* \gamma)=0\\
Z
&=-\gamma^* \gamma+\alpha^* B^*(\one-\delta^* \delta)B\alpha+\gamma^* \delta B \alpha+\alpha^*B^* \delta^* \gamma\\
&=-\gamma^*\gamma+\alpha^*(\one-\beta B - B^* \beta^*)\alpha=-(\gamma^* \gamma+\alpha^*\alpha)=-\one\;.
\end{align*}
This shows that  $\varphi_\sharp(M)\in\UU(L,L)$ for $M\in\UU(2L)$.
\end{proof}

\begin{proposition}\label{prop-A0.5}
Let 
$$
M=\pmat{\alpha & \beta \\ \gamma & \delta}\,\in\,\CC^{2L \times 2L},
$$
with square blocks of the same size $L\times L$ as above and assume $\beta$ is invertible.

\begin{enumerate}[{\rm a)}]
\item Assume moreover that $U, V$ are invertible $L\times L$ matrices. Then
$$
\varphi_\sharp\left(\pmat{U & \nul \\ \nul & V} M \right)\,=\,\pmat{\one & \nul \\ \nul & V} \varphi_\sharp(M) \pmat{U^{-1} & \nul \\ \nul & \one},
$$
and
$$
\varphi_\sharp\left(M \pmat{U & \nul \\ \nul & V} \right)\,=\,\pmat{V^{-1} & \nul \\ \nul & \one} \varphi_\sharp(M) \pmat{\one & \nul \\ \nul & U}.
$$
\vspace{.2cm}
\item Assume moreover that  $\one+M$ is invertible, then
$$
\varphi_\sharp\left(\one+M)^{-1} M \right)\,=\,\pmat{\one & \one  \\ \nul &  \one} \varphi_\sharp(M) \pmat{\one & \nul \\ -\one & \one}.
$$
\end{enumerate}
\end{proposition}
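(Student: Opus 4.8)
The plan is to derive both identities from the defining property of $\varphi_\sharp$ in Proposition~\ref{prop-A0} --- namely that $\varphi_\sharp(M)$ is the unique matrix realizing, in the reordered variables $\smat{\Psi_-\\\Phi_-}\mapsto\smat{\Phi_+\\\Psi_+}$, the linear relation that $M$ realizes in the variables $\smat{\Phi_-\\\Phi_+}\mapsto\smat{\Psi_-\\\Psi_+}$ --- rather than from manipulating the block formula. I would track how this relation changes under each of the three operations $M\mapsto \smat{U&\nul\\\nul&V}M$, $M\mapsto M\smat{U&\nul\\\nul&V}$, and $M\mapsto(\one+M)^{-1}M$; each assertion then drops out almost immediately. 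The block-formula route would also work for part a) but becomes unpleasant for part b), where one would first have to compute the block decomposition of $(\one+M)^{-1}M$.

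For part a), first note that the $(1,2)$-blocks of $\smat{U&\nul\\\nul&V}M$ and of $M\smat{U&\nul\\\nul&V}$ equal $U\beta$ and $\beta V$, so all the maps $\varphi_\sharp$ appearing are defined. To get the first identity, I would start from $\smat{\Psi_-\\\Psi_+}=\smat{U&\nul\\\nul&V}M\smat{\Phi_-\\\Phi_+}$, set $\smat{\widehat\Psi_-\\\widehat\Psi_+}=M\smat{\Phi_-\\\Phi_+}$ so that $\Psi_-=U\widehat\Psi_-$ and $\Psi_+=V\widehat\Psi_+$, apply Proposition~\ref{prop-A0} to $M$ to obtain $\smat{\Phi_+\\\widehat\Psi_+}=\varphi_\sharp(M)\smat{\widehat\Psi_-\\\Phi_-}$, substitute $\widehat\Psi_-=U^{-1}\Psi_-$, $\widehat\Psi_+=V^{-1}\Psi_+$, and collect the outer factors to reach $\smat{\Phi_+\\\Psi_+}=\smat{\one&\nul\\\nul&V}\varphi_\sharp(M)\smat{U^{-1}&\nul\\\nul&\one}\smat{\Psi_-\\\Phi_-}$. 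Since a matrix is determined by the linear relation it defines, the characterization in Proposition~\ref{prop-A0} identifies this with $\varphi_\sharp\big(\smat{U&\nul\\\nul&V}M\big)$. The second identity is obtained the same way, now substituting $\Phi_-=U\widehat\Phi_-$, $\Phi_+=V\widehat\Phi_+$ on the input side.

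For part b), I would use the elementary equivalence, valid since $\one+M$ is invertible: the identity $\smat{\Psi_-\\\Psi_+}=(\one+M)^{-1}M\smat{\Phi_-\\\Phi_+}$ is equivalent to $(\one+M)\smat{\Psi_-\\\Psi_+}=M\smat{\Phi_-\\\Phi_+}$, hence to $M\smat{\Phi_--\Psi_-\\\Phi_+-\Psi_+}=\smat{\Psi_-\\\Psi_+}$. Setting $\widehat\Phi_\pm=\Phi_\pm-\Psi_\pm$ and applying Proposition~\ref{prop-A0} to $M$ gives $\smat{\Phi_+-\Psi_+\\\Psi_+}=\varphi_\sharp(M)\smat{\Psi_-\\\Phi_--\Psi_-}$; since $\smat{\Psi_-\\\Phi_--\Psi_-}=\smat{\one&\nul\\-\one&\one}\smat{\Psi_-\\\Phi_-}$ and $\smat{\Phi_+\\\Psi_+}=\smat{\one&\one\\\nul&\one}\smat{\Phi_+-\Psi_+\\\Psi_+}$, this rearranges to $\smat{\Phi_+\\\Psi_+}=\smat{\one&\one\\\nul&\one}\varphi_\sharp(M)\smat{\one&\nul\\-\one&\one}\smat{\Psi_-\\\Phi_-}$, which by Proposition~\ref{prop-A0} applied to $N=(\one+M)^{-1}M$ is the asserted formula.

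The one point that needs genuine care --- and the only real obstacle --- is to justify that $\varphi_\sharp\big((\one+M)^{-1}M\big)$ is actually defined, i.e. that the $(1,2)$-block of $(\one+M)^{-1}M=\one-(\one+M)^{-1}$ is invertible. Writing $\one+M=\smat{\one+\alpha&\beta\\\gamma&\one+\delta}$ and using the block-inverse formula (legitimate because $\beta$ is invertible), this block equals $-\big(\gamma-(\one+\delta)\beta^{-1}(\one+\alpha)\big)^{-1}$, and $\gamma-(\one+\delta)\beta^{-1}(\one+\alpha)$ is invertible because a block-column swap together with the Schur complement with respect to the invertible pivot $\beta$ gives $\det(\one+M)=\pm\,\det(\beta)\,\det\big(\gamma-(\one+\delta)\beta^{-1}(\one+\alpha)\big)$, with both $\det(\one+M)$ and $\det(\beta)$ nonzero by hypothesis. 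This is consistent with the explicit formula, since the $(1,1)$-block one reads off from $\smat{\one&\one\\\nul&\one}\varphi_\sharp(M)\smat{\one&\nul\\-\one&\one}$ is exactly $(\one+\delta)\beta^{-1}(\one+\alpha)-\gamma$, i.e. the inverse of the $(1,2)$-block of $N$, so the right-hand side indeed has the shape $\varphi_\sharp(N)$.
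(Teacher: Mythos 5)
Your proof is correct and follows essentially the same route as the paper: part (a) by tracing the defining input--output relation of $\varphi_\sharp$ from Proposition~\ref{prop-A0} under block-diagonal multiplication, and part (b) by rewriting $\smat{\Psi_-\\\Psi_+}=(\one+M)^{-1}M\smat{\Phi_-\\\Phi_+}$ as $M\smat{\Phi_--\Psi_-\\\Phi_+-\Psi_+}=\smat{\Psi_-\\\Psi_+}$ and reading off the two shear factors. Your additional check that the $(1,2)$-block of $(\one+M)^{-1}M$ is invertible, so that $\varphi_\sharp$ is actually defined there, is a worthwhile detail that the paper leaves implicit.
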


\begin{proof}
Part (a) is easy to check with the definition of $\varphi_\sharp$  in Proposition~\ref{prop-A0}.\\
For part (b) let 
$$
(\one+M)^{-1} M \pmat{\Phi_- \\ \Phi_+}\,=\,\pmat{\Psi_-\\ \Psi_+} ,  $$
then 
$$
M \pmat{\Phi_--\Psi_- \\  \Phi_+-\Psi_+}\,=\,\pmat{ \Psi_- \\  \Psi_+},
$$
and thus
$$
\pmat{ \Phi_+- \Psi_+ \\  \Psi_+}\,=\,\varphi_\sharp(M) \pmat{ \Psi_- \\  \Phi_--  \Psi_-}\,=\,\varphi_\sharp(M) \pmat{ \one & \nul \\ -\one & \one} \pmat{\Psi_- \\ \Phi_-},
$$
which gives
$$
\pmat{\Phi_+ \\ \Psi_+}\,=\,\pmat{\one &  \one \\ \nul & \one} \pmat{ \Phi_+- \Psi_+ \\  \Psi_+}\,=\,
 \pmat{\one &  \one \\ \nul & \one} \varphi_\sharp(M)  \pmat{\one & \nul \\ -\one & \one} \pmat{\Psi_- \\ \Phi_-}.
$$
With Proposition~\ref{prop-A0} this finishes the proof.
\end{proof}

\begin{proposition}\label{prop-A1}
Let 
$$
U=\pmat{A&B\\C&D}\,\in\, \UU(n),
$$
be a unitary $n \times n$ matrix split in blocks of sizes $n_1, n_2$, meaning $A\in \CC^{n_1\times n_1}$, $B \in \CC^{n_1 \times n_2}$, $C \in \CC^{n_2 \times n_1}$, $D \in \CC^{n_2 \times n_2}$.
Let $P\in\UU(n_2)$.  Then, the following holds.
\begin{enumerate}[{\rm a)}]
\item $D-P$ is invertible $\quad\Leftrightarrow \quad$ $\pmat{A&B\\C&D-P}$ is invertible.\\[.2cm]
Moreover, if $B$ has trivial kernel then $D-P$ is invertible (for any $P\in\UU(n_2)$).\\[.2cm]
\item Assume that $P \in \UU(n_2)$ and that
$D-P$ is invertible.
Then,  the Schur complement
$$
A-B(D-P)^{-1}C\,\in\,\UU(n_1),
$$
is unitary.  As a consequence,  
$$
\pmat{\one & \nul} \pmat{A&B\\C&D-P}^{-1} \pmat{\one \\ \nul} \,\in\,\UU(n_1),
$$
is unitary.
\item More general, let $P,\,W \in \CC^{n_2 \times n_2}$ such that $W^*P=\one_{n_2}$, and $D-P$ and $D-W$ are invertible, then
$$
(A-B(D-W)^{-1}C)^* (A-B(D-P)^{-1}C)\,=\,\one_{n_1}.
$$
\end{enumerate}
\end{proposition}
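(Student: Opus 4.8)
The plan is to prove part c) first by a direct block computation with the orthonormality relations of $U$, then read off part b) as the special case $W=P$, and finally dispatch part a) by a short separate argument.

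\emph{Part c).} I would set $E=(D-W)^{-1}$ and $F=(D-P)^{-1}$ (both exist by hypothesis) and expand $(A-BEC)^{*}(A-BFC)$. Feeding in the ``column'' relations coming from $U^{*}U=\one_{n}$, namely $A^{*}A+C^{*}C=\one_{n_{1}}$, $A^{*}B+C^{*}D=\nul$ (so $A^{*}B=-C^{*}D$ and, by adjoints, $B^{*}A=-D^{*}C$) and $B^{*}B+D^{*}D=\one_{n_{2}}$, every term that is not sandwiched as $C^{*}(\cdots)C$ cancels, leaving
$$
(A-BEC)^{*}(A-BFC)\;=\;\one_{n_{1}}+C^{*}\big(-\one_{n_{2}}+DF+E^{*}D^{*}+E^{*}F-E^{*}D^{*}DF\big)C .
$$
It then suffices to show the middle bracket is $\nul$. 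Using the resolvent identities $DF=\one_{n_{2}}+PF$ and $E^{*}D^{*}=\one_{n_{2}}+E^{*}W^{*}$ (from $E^{*}(D-W)^{*}=\one_{n_{2}}$), the bracket reduces to $(P+E^{*}-E^{*}D^{*}P)F$; and then $E^{*}-E^{*}D^{*}P=E^{*}(\one_{n_{2}}-D^{*}P)=-E^{*}(D-W)^{*}P=-P$, the decisive step being to rewrite $\one_{n_{2}}-D^{*}P=(W^{*}-D^{*})P=-(D-W)^{*}P$ using exactly the hypothesis $W^{*}P=\one_{n_{2}}$, and then cancel $E^{*}(D-W)^{*}=\one_{n_{2}}$. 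Hence the bracket vanishes and c) follows.

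\emph{Part b).} Since $P\in\UU(n_{2})$, taking $W=P$ is admissible in c): $W^{*}P=P^{*}P=\one_{n_{2}}$ and $D-W=D-P$ is invertible by assumption, so c) gives $(A-B(D-P)^{-1}C)^{*}(A-B(D-P)^{-1}C)=\one_{n_{1}}$; thus the Schur complement is an isometry of $\CC^{n_{1}}$, hence unitary. The stated consequence then follows from the Schur block-inverse formula: when $D-P$ is invertible, the upper-left block of $\smat{A&B\\C&D-P}^{-1}$ equals $(A-B(D-P)^{-1}C)^{-1}$, which is unitary by the above.

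\emph{Part a).} Write $\smat{A&B\\C&D-P}=U-\smat{\nul&\nul\\ \nul&P}=U\big(\one_{n}-U^{*}\smat{\nul&\nul\\ \nul&P}\big)$. Since $U^{*}\smat{\nul&\nul\\ \nul&P}=\smat{\nul&C^{*}P\\ \nul&D^{*}P}$, the factor in parentheses equals $\smat{\one_{n_{1}}&-C^{*}P\\ \nul&\one_{n_{2}}-D^{*}P}$, which is block upper triangular; hence the bordered matrix is invertible iff $\one_{n_{2}}-D^{*}P$ is, and $\one_{n_{2}}-D^{*}P=-(D-P)^{*}P$ is invertible iff $D-P$ is (as $P$ is unitary). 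For the ``moreover'': if $(D-P)v=\nul$ then $Dv=Pv$, hence $\|Dv\|=\|v\|$; but $B^{*}B+D^{*}D=\one_{n_{2}}$ forces $\|Bv\|^{2}=\|v\|^{2}-\|Dv\|^{2}=0$, so $v\in\ker B=\{0\}$ and $v=\nul$, showing $D-P$ injective, hence invertible. The genuine obstacle is the bracket computation in c): since $D,E,F,P,W$ do not commute, one must keep all factors in the correct order and check that the single hypothesis $W^{*}P=\one_{n_{2}}$ is precisely what makes $\one_{n_{2}}-D^{*}P$ collapse against $E^{*}=((D-W)^{*})^{-1}$; everything else is routine.
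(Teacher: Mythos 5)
Your proposal is correct, and parts b) and c) follow essentially the same route as the paper: c) is a direct block expansion of $(A-B(D-W)^{-1}C)^*(A-B(D-P)^{-1}C)$ using the column relations $A^*A+C^*C=\one$, $B^*B+D^*D=\one$, $A^*B=-C^*D$, with the single hypothesis $W^*P=\one$ producing the collapse; b) is the special case $W=P$ combined with the Schur block-inverse formula. Your algebra in c) is organized via resolvent identities $DF=\one+PF$, $E^*D^*=\one+E^*W^*$ inside $C^*(\cdots)C$, whereas the paper factors $(D^*-W^*)^{-1}$ and $(D-P)^{-1}$ to the outside and collapses $(D^*-W^*)D+D^*(D-P)+\one-D^*D=(D^*-W^*)(D-P)$; these are the same computation in a different dress. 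For part a), however, your argument is genuinely different: you write $\smat{A&B\\C&D-P}=U\bigl(\one_n-U^*\smat{\nul&\nul\\ \nul&P}\bigr)$ and observe that the second factor is block upper triangular with lower-right block $\one_{n_2}-D^*P=-(D-P)^*P$, so invertibility of the bordered matrix is equivalent (via unitarity of $U$ and of $P$) to invertibility of $D-P$. The paper instead identifies the kernel directly, showing $\ker\smat{A&B\\C&D-P}=\bigl\{\smat{\nul\\\varphi}:\varphi\in\ker(D-P)\bigr\}$ by two norm computations. Your triangular factorization is arguably slicker for the equivalence; the paper's kernel description has the mild advantage of exhibiting the null space explicitly. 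The ``moreover'' clause you handle by the same mechanism as the paper (from $(D-P)v=\nul$, unitarity of $P$ and $B^*B+D^*D=\one$ force $Bv=\nul$).
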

\begin{proof}
First note that if  $(D-P)\varphi=\nul$, then
$$
\|\varphi\|^2=\left\|U\pmat{\nul \\ \varphi}\right\|^2=\|B\varphi\|^2 + \|D\varphi\|^2=\|B\varphi_2\|+\|P\varphi\|^2=\|B\varphi\|^2+\|\varphi\|^2,
$$
meaning $B\varphi=\nul$. 
Therefore, 
$$
\pmat{A & B \\ C & D-P} \pmat{\nul \\ \varphi}\,=\,\nul.
$$
In particular, if $B$ has trivial kernel, $\varphi=\nul$ and, thus, $D-P$ is invertible.\\
On the other hand,  let $\pmat{A & B \\ C & D-P}\pmat{\varphi_1 \\ \varphi_2}=\nul$, then
$$
\|\varphi_1\|^2+\|\varphi_2\|^2=\left\| \pmat{\varphi_1 \\ \varphi_2}\right\|^2=\left\| U\pmat{\varphi_1 \\ \varphi_2}\right\|^2=\left\| \pmat{\nul \\ P\varphi_2}\right\|^2 =\|\varphi_2\|^2.
$$
Therefore, $\varphi_1=\nul$ and $D\varphi_2=P\varphi_2$.
Hence,  
$$
\ker\pmat{A& B \\ C & D-P}\,=\,\left\{ \pmat{\nul\\ \varphi}\,:\,\varphi \in \ker(D-P) \;\right\}.
$$
For part b) note that the first part of b) follows from c). 
Using the Schur complement identity
$$
\pmat{\one & \nul} \pmat{A & B \\ C & D-P}^{-1} \pmat{\one \\ \nul}\,=\,(A-B(D-P)^{-1}C)^{-1},
$$
the second statement follows.\\[.2cm]
For part c) by unitarity of $U$ we get
$$
A^*A+C^*C=\one_{n_1}\;,\quad B^*B+D^*D=\one_{n_2}\;,\quad A^*B=-C^*D,
$$
and using $W^*P=\one$ we obtain 
\begin{align*}
 &(A-B(D-W)^{-1}C)^*(A-B(D-P)^{-1}C)\\
& =A^*A-A^*B(D-P)^{-1}C -C^*(D^*-W^*)^{-1}B^*A+
C^*(D^*-W^*)^{-1}B^*B(D-P)^{-1}C\\
&= A^*A+C^*D (D-P)^{-1}C+C^*(D^*-W^*)^{-1}D^*C+
C^*(D^*-W^*)^{-1}(\one-D^*D)(D-P)^{-1} C\\
&= A^*A+C^*(D^*-W^*)^{-1} \big( (D^*-W^*)D+D^*(D-P)+\one-D^*D\big)(D-P)^{-1}C\\
&=A^*A + C^*(D^*-W^*)^{-1}\big((D^*-W^*)(D-P)\big)(D-P)^{-1}C=A^*A+C^*C=\one .
\end{align*}

\end{proof}

\begin{proposition}\label{prop-U11}
For $T\in\UU(1,1)$ we have $|\det T|=1$ and $\tfrac{(\Tr T)^2}{ \det T} \geq 0$.
In particular, there is $\chi \in \RR$ and $\lambda\in \CC$, such that the eigenvalues of $T$ are given by
$e^{i\chi}\lambda$ and $e^{i\chi} \lambda^{-1}$ with $\lambda+\lambda^{-1}\in\RR$.
In particular, $\lambda\in\RR$ if $|\Tr T|=|\lambda+\lambda^{-1}|\geq 2$, and $\lambda\in\UU(1)$ if $|\Tr T|\leq 2$.
\end{proposition}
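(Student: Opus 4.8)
The plan is to extract everything from the defining relation $T^*JT = J$, where $J = \smat{1 & 0 \\ 0 & -1}$. First I would take determinants of both sides: since $\det J = -1 \neq 0$ and $\det(T^*) = \overline{\det T}$, this yields $|\det T|^2 = 1$, hence $|\det T| = 1$. Accordingly I write $d := \det T = e^{2i\chi}$ for a suitable $\chi \in \RR$, and set $\tau := \Tr T$.

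Next, since $J^2 = \one$, multiplying $T^*JT = J$ on the left by $J$ gives $JT^*JT = \one$, i.e. $T^{-1} = JT^*J$. In particular $T^{-1}$ is conjugate to $T^*$, so they have the same trace. Writing $\mu_1, \mu_2$ for the eigenvalues of $T$ (so $\mu_1\mu_2 = d$ and $\mu_1 + \mu_2 = \tau$), one has $\Tr(T^{-1}) = \mu_1^{-1} + \mu_2^{-1} = \tau/d$, whereas $\Tr(T^*) = \overline{\Tr T} = \bar\tau$. Hence $\tau/d = \bar\tau$, and multiplying by $\tau$ gives $\tau^2/d = \tau\bar\tau = |\tau|^2 \geq 0$, which is exactly the claimed inequality $\tfrac{(\Tr T)^2}{\det T} \geq 0$.

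For the eigenvalue normal form I would substitute $\mu = e^{i\chi}\nu$ into the characteristic polynomial $\mu^2 - \tau\mu + d$ and divide by $d = e^{2i\chi}$, obtaining $\nu^2 - t\,\nu + 1$ with $t := e^{-i\chi}\tau$. Since $t^2 = \tau^2/d = |\tau|^2 \geq 0$, the number $t$ is real. The roots $\nu_1, \nu_2$ satisfy $\nu_1\nu_2 = 1$, so putting $\lambda := \nu_1$ we get $\nu_2 = \lambda^{-1}$ and $\lambda + \lambda^{-1} = t \in \RR$; the eigenvalues of $T$ are then $e^{i\chi}\lambda$ and $e^{i\chi}\lambda^{-1}$, and $|\Tr T| = |e^{i\chi}(\lambda + \lambda^{-1})| = |\lambda + \lambda^{-1}| = |t|$.

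Finally I would read off the dichotomy from the quadratic $x^2 - t\,x + 1 = 0$ satisfied by $\lambda$, with $t \in \RR$: its roots are $\tfrac12\bigl(t \pm \sqrt{t^2 - 4}\bigr)$. If $|\Tr T| = |t| \geq 2$, the discriminant $t^2 - 4$ is $\geq 0$ and both roots are real, so $\lambda \in \RR$; if $|t| \leq 2$, then $\lambda = \tfrac12\bigl(t \pm i\sqrt{4 - t^2}\bigr)$ has $|\lambda|^2 = \tfrac14\bigl(t^2 + (4 - t^2)\bigr) = 1$, so $\lambda \in \UU(1)$. I do not expect any real obstacle here: every step is a short computation. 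The only point requiring care is that $\chi$ is determined only modulo $\pi$ and the labelling $\lambda \leftrightarrow \lambda^{-1}$ is a genuine ambiguity, which is why the statement merely asserts the \emph{existence} of such $\chi$ and $\lambda$; checking that our construction indeed produces valid choices is the mild bookkeeping to be done.
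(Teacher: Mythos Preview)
Your proof is correct and quite a bit cleaner than the paper's. The paper argues entry-by-entry: writing $T=\smat{a&b\\c&d}$, it combines the relations $T^*GT=G$ and $TGT^*=G$ to get $|a|=|d|=\cosh\gamma$, $|b|=|c|=\sinh\gamma$, then expresses the entries in polar form $a=\cosh(\gamma)e^{i\varphi_1}$, etc., derives the phase constraint $e^{i(\varphi_1+\varphi_4)}=e^{i(\varphi_2+\varphi_3)}$, and computes $(\Tr T)^2/\det T=\cosh^2(\gamma)\bigl(2+2\cos(\varphi_1-\varphi_4)\bigr)\geq 0$ directly. Only afterwards does it read off the eigenvalue structure.

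You bypass all of this by noting that $T^{-1}=JT^*J$ forces $\Tr(T^{-1})=\overline{\Tr T}$, hence $\tau^2/d=|\tau|^2$, and then the substitution $\mu=e^{i\chi}\nu$ in the characteristic polynomial reduces everything to a real-coefficient quadratic. Your route is more conceptual, generalizes immediately to $\UU(p,q)$ (the identity $\Tr(T^{-1})=\overline{\Tr T}$ holds verbatim), and avoids the somewhat ad hoc hyperbolic parametrization. The paper's approach, on the other hand, yields an explicit coordinate description of $\UU(1,1)$ that can be useful elsewhere, though it is not needed for this particular proposition.
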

\begin{proof}
If
$$
T=\pmat{a&b\\c&d}, \quad G=\pmat{1 & 0 \\ 0 & -1},
$$
then  $T^*GT=G$ which means $-1=\det(G)=\det(G)|\det T|^2$ implying $|\det(T)|=1$. 
Moreover, we get $|a|^2-|c|^2=1=|d|^2-|b|^2$ and $\bar a b = \bar c d$.
On the other hand, $T^*GTG=G^2=I$ and therefore $TGT^*G=I$ giving $TGT^*=G$, which gives
$|a|^2-|b|^2=1=|d|^2-|c|^2$. Hence, $|a|=|d|=\cosh(\gamma)$ and $|b|=|c|=\sinh(\gamma)$ for some $\gamma \geq 0$.\\
Then,
$$
a=\cosh(\gamma) e^{i\varphi_1}, \; b=\sinh(\gamma) e^{i\varphi_2},\; c= \sinh(\gamma) e^{i\varphi_3},\; d= \cosh(\gamma)e^{i\varphi_4},
$$
where now $\bar a b= \bar c d$ leads to
$$
e^{-i\varphi_1+i\varphi_2}\,=\,e^{-i \varphi_3+i\varphi_4} \qtx{implying} e^{i\varphi_1+i\varphi_4}=e^{i\varphi_2+i\varphi_3}\,.
$$
Using this relation, we get $\det(T)= e^{i\varphi_1+i\varphi_4}$ and
$$
\frac{(\Tr T)^2}{ \det T}  = \frac{\cosh^2(\gamma) (e^{i\varphi_1}+e^{i\varphi_4})^2}{e^{i\varphi_1+i\varphi_4}}\,=\,
\cosh^2(\gamma) \left( 2+ 2 \cos(\varphi_1-\varphi_4)\right)\,\geq\, 0\;.
$$
Now with $2\chi=\varphi_1+\varphi_4$ we have $\det(T)=e^{2i\chi}$ and the eigenvalues are of the form $e^{i\chi} \lambda$ and $e^{i\chi} \lambda^{-1}$
for some complex $\lambda$, and the trace is equal to $e^{i\chi}(\lambda+\lambda^{-1})$, which leads to $(\lambda+\lambda^{-1})^2=\frac{(\Tr T)^2}{\det(T)}\geq 0$.
Therefore, $\lambda+\lambda^{-1} \in \RR$.
\end{proof}

\section*{Declarations}

 On behalf of all authors, the corresponding author states that there is no conflict of interest.
There is no associated data for this research.

This work has been supported by the Chilean grants  FONDECYT 1230949, 
1211576, 1211189, 1201836.


\begin{thebibliography}{AAA}


\bibitem{ASW} Aizenman, A. \& Sims, R. \&   Warzel, S.,  {\sl Stability of the absolutely continuous spectrum of random
Schr\"odinger operators on tree graphs}, Prob. Theor. Rel. Fields, {\bf 136}, 363-394 (2006).

\bibitem{AM}  Aizenman, M. \& Molchanov,  S.,  {\sl Localization at large disorder and extreme energies:  an elementary derivation},  Commun. Math. Phys. {\bf 157}, 245-278, doi: 10.1007/BF02099760 (1993)

\bibitem{AW}  Aizenman, M. \&  Sims, R. \& S. Warzel,
{\sl Resonant delocalization for random Schrödinger operators on tree graphs},
J. Eur. Math. Soc., {\bf 15} (4), 1167-1222 (2013).

\bibitem{AW-book}  Aizenman, M. \& Sims, R. \& Warzel, S., {\sl Random Operators, Disorder Effects on Quantum Spectra and Dynamics}, Graduate Studies in Mathematics, {\bf 168}, AMS (2015).

\bibitem{Anderson}  Anderson, P,W., \textit{Absence of diffusion in certain random lattices}, Physical Review {\bf 109}, 1492-1505 (1958)

\bibitem{ABJ1} Asch, J. \&  Bourget, O. \&  Joye, A., {\sl Localization Properties of the Chalker-Coddington Model}, Annales H. Poincaré 7, 1341-1373 (2010)

\bibitem{ABJ2}  Asch J. \&   Bourget, O. \&  Joye, A., {\sl Dynamical localization of Chalker-Coddington Model far from transition}, J. Stat. Phys. {\bf 147}, 194-205 (2012)

\bibitem{ABJ3} Asch J. \&   Bourget, O. \&  Joye, A., {\sl On stable quantum currents}, J. Math. Phys. {\bf 61}, 092104  (2015)

\bibitem{ABJ4} Asch J. \&   Bourget, O. \&  Joye, A., {\sl Spectral stability of unitary network models}, Rev. Math. Phys. {\bf 27} (7), 1530004 (2020)








\bibitem{Bou} Bourget, O., {\sl Singular continuous Floquet operator for periodic quantum systems}, J. Math. Anal. Appl. {\bf 301} (2005)


\bibitem{BHJ} Bourget, O. \& Howland, J.S. \& Joye, A., {\sl Spectral analysis of unitary band matrices}, Commun. Math. Phys. {\bf 234 l}, 191-227  (2003)


\bibitem{BMT}  Bourget, O. \&  Moreno, M. \& Taarabt, A., {\sl One-dimensional Discrete Dirac Operators in a Decaying Random Potential I: Spectrum and Dynamics}, Math. Phys. Anal. Geom. {\bf 23} article 20 (2020)






\bibitem{BK}  Bourgain, J. \& . Kenig,  C.,  {\sl On localization in the continuous Anderson-Bernoulli model in higher dimension},
Inv. Math.,  {\bf 161},  389-426 (2005)


\bibitem{B-etal}
 Bucaj, V. \& Damanik, D. \& Fillman, J. \& Gerbuz, V. \&  Vanden Boom, T. \& Wang, F. \& Zhang, Z.,
 {\sl Localization for the one-dimensional Anderson model via positivity and large deviations for the Lyapunov exponent},
 Trans. Amer. Math. Soc.,  {\bf 372},  3619-3667 (2019)


\bibitem{Bu-1} Büttiker, M.,  {\sl Four-terminal phase-coherent conductance},  Phys. Rev. B {\bf 57}, 1761-1764  (1986)


\bibitem{Bu-2} Büttiker, M., {\sl Symmetry of electrical conduction} IBM J. Res. Dev.  {\bf 32}, 317-334 (1988)


\bibitem{Car} Carmona, R., {\sl One-dimensional Schr\"odinger operators with random or deterministic potentials, New spectral types}, J. Funct. Anal.  {\bf 51}, 229-258 (1983)


\bibitem{Cala} Carmona, R. \& Lacroix, J., \textit{Spectral theory of random Schr\"odinger operator}, Boston: Birkh\"auser.x (2013)




\bibitem{DC} Dwivedi, V. \& Chua, V., {\sl Of Bulk and Boundaries: Generalized Transfer Matrices for Tight-Binding Models}, Phys. Rev.  B {\bf 93}, 134304 (2016)









\bibitem{FHS2} Froese, R. \& D. Hasler, D. \& Spitzer, W.,
{\sl Absolutely continuous spectrum for the Anderson Model on a tree:
A geometric proof of Klein's Theorem},  Commun. Math. Phys. {\bf 269}, 239-257 (2007)


\bibitem{FHS3} Froese, R. \& Hasler, D. \& Spitzer, W.,
{\sl On the ac spectrum of one-dimensional random Schr\"odinger operators with matrix-valued potentials},
Math. Phys. Anal. Geom. {\bf 13}, 219-233 (2010)





\bibitem{GK2} Germinet, F.  \& Klein,  A., {\sl A comprehensive proof of localization for continuous Anderson models with singular random potentials},  J. Eur. Math. Soc. {\bf 15}, 53-143 (2013) 




\bibitem{GS} Gonzalez, H. \& Sadel, C., {\sl Absolutely continuous spectrum for Schrödinger operators with random decaying matrix potentials on the strip},
Letters Math. Phys. {\bf 113} article 9 (2023)



\bibitem{HJ} Hamza, E.  \& Joye, A., {\sl Spectral Transition for Random Quantum Walks on Trees}. Commun. Math. Phys. {\bf 326}, 415-39 (2014)

\bibitem{HJS1} Hamza, E.  \& Joye, A. \&  Stolz, G., {\sl Dynamical Localization for Unitary Anderson models}, Lett. Math. Phys. {\bf 75}, 255-272 (2006)


\bibitem{HJS2} Hamza, E.  \& Joye, A. \&  Stolz, G.,{\sl Dynamical localization for Unitary Anderson models},
 Math. Phys. Anal. Geom. {\bf 12}, no. 4, 381-444 (2009)
 
 \bibitem{HS} Hamza, E. \&  Stolz, G., {\sl Lyapunov exponents for unitary Anderson models}, J. Math. Phys. {\bf 48} (2007)



\bibitem{Im} Imry, Y.,  in {\sl Directions on Condensed Matter Physics}, edited by G. Grinstein and G.Mazenko (World Scientific, Singapore), 101-164 (1986)


\bibitem{JM}  Joye, A. \& Merkli, M.,  \text{Dynamical localization of quantum walks in random environments}, J. Stat. Phys., 1025-1053 (2010)


\bibitem{JX} Jitomirskaya, S. \& Zhu, X., {\sl Large deviations of the Lyapunov
exponent and localization for the 1D Anderson model}, Commun.
Math. Phys. {\bf 370} 311-324 (2019) 




\bibitem{Kos} I. A. Koshovets, I.A., {\sl Unitary analog of the Anderson model. Purely point spectrum}, Theor Math Phys {\bf 89}, 1249-1270 (1991)

\bibitem{KLW} Keller, M. \& Lenz, D. \& Warzel, S.,
{\sl Absolutely continuous spectrum for random operators on trees of finite cone type}, J. D' Analyse Math. {\bf 118}, 363-396 (2012)



\bibitem{KiLS} Kiselev, A. \& Last, Y. \& Simon, B.,  
{\sl Modified Pr\"ufer and EFGP transforms and the spectral analysis of one-dimensional Schr\"odinger operators},
Communications in Mathematical Physics {\bf 194}, 1-45, (1997)

\bibitem{Kl} Klein, A., {\sl Extended states in the Anderson model on the Bethe lattice},
Advances in Mathematics  {\bf 133}, 163-184 (1998)

 \bibitem{KlLS}  Klein,  A. \& Lacroix, J.  \& Speis A.,  {\sl Localization for the Anderson model on a strip with singular potentials},  J.  Funct.  Anal.
 {\bf  94}, 135-155 (1990)  

\bibitem{KS} Klein, A. \& Sadel, C., {\sl Absolutely Continuous Spectrum for Random Schr\"odinger Operators
on the Bethe Strip},  Math. Nachr. {\bf 285},  5-26 (2012)




\bibitem{La-1} Landauer, R., {\sl Spatial variation of currents and fields due to localized scatterers in metallic conduction},  IBM J. Res. Dev.
1, 223-231  (1957)

\bibitem{La-2} Landauer, R., {\sl Electrical transport in open and closed systems},  Z. Phys. B 68, 217-228  (1987) 

\bibitem{Lasi} Last, Y. \& Simon, B. (1999),  \textit{Eigenfunctions,  transfer matrices,  and absolutely continuous spectrum of one-dimensional Schr\"odinger operators.} Invent. Math.,  {\bf 135}(2),  329-367 (1999)


\bibitem{Li} Li, L., {\sl Anderson-Bernoulli localization with large disorder on the 2D lattice},  Commun. Math. Phys. {\bf 393}, 151-214 (2022)

\bibitem{LZ} Li, L. \& Zhang, L., {\sl Anderson-Bernoulli Localization on the 3D lattice and discrete unique continuation principle}, Duke Math. J. {\bf 17}, 327-41 (2022)


\bibitem{MSb} Marin, L. \& Schulz-Baldes, H., \textit{Scattering zippers and their spectral theory}, J. Spect. Th. {\bf 3},  47-82  (2013)









\bibitem{PC-Carbon} Prateek Chawla \& Chandrashekar, C. M., {\sl Quantum walks in polycyclic aromatic hydrocarbons}, New J. Phys. {\bf 23}, 113013 (2021)



\bibitem{Sa-FC} Sadel, C.,
{\sl Absolutely continuous spectrum for random Schr\"odinger operators on tree-strips of finite cone type},
Annal. Henri Poincar\'e, {\bf 14}, 737-773  (2013)

\bibitem{Sa-Rel} Sadel, C., 
{\sl Relations between transfer and scattering matrices in the presence of hyperbolic channels},
J.  Math.  Phys. {\bf 52}, article 123511 (2011) 

\bibitem{Sa-Fib} Sadel, C.,
{\sl Absolutely continuous spectrum for random Schr\"odinger operators on the Fibbonacci and similar tree-strips},
Math. Phys.  Anal.  Geom. {\bf 17},  409-440 (2014)

\bibitem{Sa-AT} Sadel, C.,
{\sl Anderson transition at two-dimensional growth rate on antitrees and spectral theory for operators with one propagating channel}, 
Annal. Henri Poincare {\bf 17}, 1631-1675 (2016) 


\bibitem{Sa-OC} Sadel,  C., \textit{Spectral theory of one-channel operators and application to absolutely continuous spectrum for Anderson type models.} J.  Funct. Anal.,  {\bf 274}(8), 2205-2244 (2018)

\bibitem{Sa-Tr} Sadel,  C.,  \textit{Transfer matrices for discrete Hermitian operators and absolutely continuous spectrum},  J. Funct. Anal.,  {\bf 281}(8),  \textit{published online} (2021) 




\bibitem{SB} Schulz-Baldes, H., {\sl Reduced transfer operators for singular difference equations}, J.  Difference Eq. Appl. {\bf 28}, 1492-1506 (2022)




\bibitem{Si-OPUC} Simon, B., {\sl Orthogonal polynomials on the unit circle}, vol. 54 of Colloquium publications, AMS, parts 1 and 2 (2015)




\bibitem{Si-CMV} Simon, B., {\sl CMV matrices: five years after}. J. Comput. Appl. Math. {\bf 208}, 120-154 (2007)


\bibitem{Si-Szego} Simon, B., {\sl Szegő's Theorem and Its Descendants}, vol. 6 of Porter Lectures, Princeton University Press (2011)








\bibitem{Zhu} Zhu, X., {\sl Localization for random CMV matrices}, J. Approx. Theor. {\bf 298},  106008 (2024)

\end{thebibliography}
\end{document}